\renewcommand{\algocf@captiontext}[2]{#1\algocf@typo. \AlCapFnt{}#2} 
\def\@algocf@capt@plain{top}
\renewcommand{\algocf@makecaption}[2]{%
	\addtolength{\hsize}{\algomargin}%
	\sbox\@tempboxa{\algocf@captiontext{#1}{#2}}%
	\ifdim\wd\@tempboxa >\hsize
	\hskip .5\algomargin%
	\parbox[t]{\hsize}{\algocf@captiontext{#1}{#2}}
	\else%
	\global\@minipagefalse%
	\hbox to\hsize{\box\@tempboxa}
	\fi%
	\addtolength{\hsize}{-\algomargin}%
}
\newtheorem{theorem}{Theorem}
\newtheorem{corollary}{Corollary}
\newtheorem{proposition}{Proposition}
\newtheorem{lemma}{Lemma}
\newtheorem{assumption}{Assumption}
\newtheorem{remark}{Remark}
\newcommand{\qed}{\hfill$\Box$}
\newenvironment{proof}[1][Proof. ]{\addvspace{.2cm} \noindent{\bf #1}}{\qed\vspace{.3cm}}
\def\intercal{{ \mathrm{\scriptscriptstyle T} }}
\newcommand{\E}{{E}}
\newcommand{\bzero}{0 }
\newcommand{\bH}{ H }
\newcommand{\bM}{ M }
\newcommand{\bx}{ x }
\newcommand{\bS}{S}
\newcommand{\by}{y }
\newcommand{\bX}{X}
\newcommand{\bY}{ Y }
\newcommand{\bu}{ u }
\newcommand{\bI}{ I }
\newcommand{\bV}{V }
\newcommand{\bgamma}{ { \gamma} }
\newcommand{\bLambda}{ { \Lambda} }
\newcommand{\bGamma}{ {\Gamma} }
\newcommand{\bDelta}{ { \Delta} }
\newcommand{\bbeta}{ { \beta} }
\newcommand{\bmu}{ {\mu} }
\newcommand{\bSigma}{ {\Sigma} }
\newcommand{\bsigma}{ {\sigma} }
\newcommand{\bOmega}{ { \Omega} }
\newcommand{\bomega}{ { \omega} }
\newcommand{\bxi}{ { \xi} }
\newcommand{\bz}{ z }
\newcommand{\bdelta}{ { \delta} }
\title{Scalable and accurate variational Bayes \\ for  high-dimensional binary regression models}
\author{AUGUSTO FASANO, DANIELE DURANTE \and GIACOMO ZANELLA}
\author{Augusto Fasano\footnote{\mbox{Department of Decision Sciences, Bocconi University, 20136 Milan, and Collegio Carlo Alberto, 10122 Turin, Italy,}  \hspace*{0.5cm}\texttt{augusto.fasano@carloalberto.org}.} \and Daniele Durante\footnote{Department of Decision Sciences and Institute for Data Science and Analytics, Bocconi University, 20136 Milan, Italy.} \and  Giacomo Zanella\footnotemark[\value{footnote}]}
\begin{document}
\maketitle

\begin{abstract}
Modern methods for Bayesian regression beyond the Gaussian response setting are often computationally impractical or inaccurate in high dimensions. In fact, as discussed in recent literature, bypassing such a trade-off is still an open problem even in routine binary regression models, and there is limited theory on the quality of variational approximations in high-dimensional settings.  To address this gap, we study the approximation accuracy of routinely-used mean-field variational Bayes solutions in high-dimensional probit regression with Gaussian priors, obtaining novel and practically relevant results on the pathological behavior of such strategies  in uncertainty quantification, point estimation  and prediction. Motivated by these results, we further develop a new partially-factorized variational approximation for the posterior distribution of the probit coefficients which leverages a representation with global and local variables but, unlike for classical mean-field assumptions,  it avoids a fully factorized approximation, and instead assumes a factorization only for the local variables. We prove that the resulting approximation  belongs to a tractable class of unified skew-normal densities that crucially incorporates skewness and, unlike for state-of-the-art mean-field solutions, converges to the exact posterior density as  $p \rightarrow \infty$. To solve the variational optimization problem, we derive a tractable coordinate ascent variational inference algorithm that easily scales to $p$ in the tens of thousands, and provably requires a  number of iterations converging to $1$ as $p \rightarrow \infty$. Such findings are also illustrated in extensive empirical studies where our novel solution is shown to  improve the approximation accuracy of  mean-field variational Bayes for any $n$ and $p$, with the magnitude of these gains being remarkable in those high-dimensional $p>n$ settings where state-of-the-art methods are computationally impractical.\\

\noindent	\textit{Some key words:}
Bayesian computation; Data augmentation; High-dimensional probit regression;  Truncated normal distribution;  Variational Bayes; Unified skew-normal distribution.
\end{abstract}

\section{Introduction}\label{sec_1}

The absence of a tractable  posterior distribution in several Bayesian models and the  abundance of high-dimensional datasets  have motivated a growing interest in strategies for scalable learning of approximate posteriors, beyond classical sampling-based Markov chain Monte Carlo  methods.  Deterministic approximations, such as variational Bayes  \citep{blei_2017} and expectation-propagation  \citep{minka2001expectation} are routinely-implemented solutions to address this goal. However, in high-dimensional models  such methods still face open problems in terms of scalability, quality of the  approximation and theoretical support \citep{chopin_2017,blei_2017}.  

Notably, such issues also arise in basic regression models for binary responses,  that are routinely implemented and provide a building block in several hierarchical  models \citep[e.g.,][]{chipman_2010, rodrig_2011}. Recalling a review by \citet{chopin_2017}, the problem of posterior computation in binary regression is particularly challenging when the number of predictors $p$ grows. While standard sampling-based algorithms and accurate deterministic approximations can easily deal with small-to-moderate $p$ problems, these strategies are impractical when $p$ is large; e.g.,\ $p> 1000$. The solution to these scalability issues is typically obtained at the cost of lower accuracy in the approximation of the posterior distribution,  and addressing this trade-off, both in theory and in practice, is still an open area of  research \citep{blei_2017}. In this article we prove, both theoretically and empirically, that in $p \gg n$ settings such a lower accuracy cost has dramatic implications not only for uncertainty quantification, but also for point estimation and prediction under routine-use mean-field variational Bayes approximations of the coefficients in probit models with Gaussian priors. To address this issue, we further develop a novel partially-factorized variational approximation for the posterior distribution of the probit coefficients which, unlike for state-of-the-art solutions, combines computational scalability and high accuracy, both theoretically and empirically, in large $p$ settings, especially when $p>n$.  

Classical specifications of Bayesian regression models for dichotomous data assume that the responses $y_{i} \in \{0;1\}$ $(i=1, \ldots, n)$, are conditionally independent realizations from a Bernoulli variable $\mbox{Bern}\{g(\bx_i^{\intercal}\bbeta)\}$, given a fixed $p$-dimensional vector of predictors $\bx_i=(x_{i1}, \ldots, x_{ip})^{\intercal} \in \mathbb{R}^p$ $(i=1, \ldots, n)$, and the associated coefficients $\bbeta=(\beta_1, \ldots, \beta_p)^{\intercal} \in \mathbb{R}^p$. The mapping $g(\cdot): \mathbb{R} \rightarrow (0,1)$ is commonly specified to be either the logit or  probit  link, thus obtaining $\mbox{pr}(y_{i} =1 \mid \bbeta)=\{1+\exp(-\bx_i^{\intercal}\bbeta)\}^{-1}$ in the first case, and $\mbox{pr}(y_{i} =1 \mid \bbeta)=\Phi(\bx_i^{\intercal}\bbeta)$ in the second, where $\Phi(\cdot)$ is the cumulative distribution function of a standard normal.  In performing Bayesian inference under these models, it is common  to specify Gaussian priors for the coefficients in $\bbeta$, and then update such priors with the likelihood of the observed data $\by=(y_1, \ldots, y_n)^{\intercal}$ to obtain the posterior $p(\bbeta \mid \by)$, which is used for point estimation, uncertainty quantification and prediction. However, the apparent lack of conjugacy motivates computational methods relying either on Monte Carlo integration or deterministic approximations \citep[][]{chopin_2017}.

A popular class of Markov chain Monte Carlo methods   for binary regression models leverages augmented data representations that allow the implementation of tractable Gibbs samplers based on  conjugate full-conditional distributions. In probit regression this strategy exploits the possibility of expressing the binary data $y_i \in \{0;1\}$ $(i=1, \ldots, n)$ as dichotomized versions of an underlying regression model for Gaussian responses $z_i \in \mathbb{R}$ $(i=1, \ldots, n)$, thereby restoring conjugacy between the Gaussian prior for the coefficients in $\bbeta$ and the augmented data $z_i$, which are in turn sampled from truncated normal full-conditionals \citep{Albert_1993}. More recently, \citet{Polson_2013} proposed a similar strategy for logit regression which relies on P\'olya-gamma augmented variables $z_i \in \mathbb{R}^+$ $(i=1, \ldots, n)$. Despite their simplicity, these methods can face computational and mixing issues in high-dimensional settings, especially with imbalanced datasets \citep{john_2017}. We refer to \citet{chopin_2017} for a discussion of related strategies \citep{holmes_2006, fru_2007} and alternative sampling methods \citep[e.g.,][]{haario_2001,hoff_2014}.  While  these strategies address some issues of data-augmentation samplers, they are still computationally impractical in large $p$ settings \citep[e.g.,][]{chopin_2017,durante_2019}.

An exception to the above methods is provided by the contribution of \citet{durante_2019}, which proves that in Bayesian probit regression with Gaussian priors the posterior actually belongs to the class of unified skew-normal (\textsc{sun}) distributions \citep{arellano_2006}. These  variables have closure properties that facilitate posterior inference in large $p$ settings. However, the calculation of functionals for inference and prediction  requires the evaluation of cumulative distribution functions of $n$-variate Gaussians or sampling from $n$-variate truncated normals, thus making these results impractical in studies with $n$ greater than a few hundreds \citep{durante_2019}.

A possible solution to scale-up computations is to consider deterministic approximations of the posterior distribution. In binary regression, two popular strategies  are expectation-propagation \citep{chopin_2017} and mean-field variational Bayes with global parameters and local latent variables, typically available for each unit in the form of augmented data \citep{consonni_2007,durante_2019_var}; see also \citet{blei_2017}. Expectation-propagation approximates the exact posterior density by matching the prior and likelihood terms which define $p(\bbeta \mid \by)$ with Gaussian marginals that often yield notable approximation accuracy. These accuracy gains come, however, with a computational cost that limits the applicability of expectation-propagation to settings with moderate $p$ \citep[e.g.,][]{chopin_2017}. Recalling a final remark by \citet{chopin_2017}, this lack of scalability to large $p$ (e.g., $p > 1000$) is common to most  state-of-the-art accurate methods for Bayesian computation in binary regression, including also \citet{knowles2011} and \citet{marlin2011} amongst others.  Mean-field variational Bayes with global and local variables addresses these computational bottlenecks by approximating the posterior density $p(\bbeta,\bz \mid \by)$ for the global parameters $\bbeta=(\beta_1, \ldots, \beta_p)^\intercal$ and the local augmented data $\bz=(z_1, \ldots, z_n)^\intercal$ with an optimal factorized density ${q}^*_{\textsc{mf}}(\bbeta)\prod_{i=1}^n {q}^*_{\textsc{mf}}(z_i) $ which is closest, in Kullback--Leibler divergence \citep{Kullback_1951}, to  $p(\bbeta,\bz \mid \by)$,  among all the approximating densities in the  mean-field family $\mathcal{Q}_{\textsc{mf}}=\{q_{\textsc{mf}}(\bbeta,\bz): q_{\textsc{mf}}(\bbeta,\bz)=q_{\textsc{mf}}(\bbeta) q_{\textsc{mf}}(\bz) \}$. Optimization proceeds via  coordinate ascent variational methods which  scale easily to large $p$. However, the mean-field class underestimates  uncertainty, and  often yields Gaussian approximations that affect inference accuracy if the posterior is non-Gaussian \citep{kuss_2005}. 

In Section \ref{sec_2.1}, we deepen the above results by proving that in high-dimensional probit regression, with Gaussian priors, the mean-field assumption has dramatic implications for uncertainty quantifications, point estimation and prediction. In particular, we prove that, when $p \rightarrow \infty$, with fixed $n$, the  mean-field solution overshrinks the location of the exact posterior distribution and yields to an excessively rapid concentration of the predictive probabilities around $0.5$, thereby leading to unreliable inference and prediction. While there has been recent focus in the literature on studying  consistency properties of mean-field variational Bayes \citep[][]{wang2019,ray2020,yang2020}, there is much less theory on the accuracy in approximating the  whole exact posterior distribution in high dimensions. Our results  in this direction yield novel and practically relevant findings, while motivating improved approximations, as discussed below.

To address the above issues, we propose in Section~\ref{sec_2.2}  a new partially-factorized variational approximation   which crucially drops the  mean-field independence assumption between  $\bbeta$ and the augmented data $\bz$. Unlike for standard implementations of expectation-propagation \citep{chopin_2017} and Hamiltonian Monte Carlo  \citep{hoff_2014}, the proposed solution scales easily to  $p \gg 1000$, while, differently from the methods in \citet{durante_2019}, it only requires the evaluation of distribution functions of univariate Gaussians. Moreover, despite having a computational cost comparable to mean-field strategies \citep{consonni_2007}, the proposed solution yields a substantially improved approximation of the posterior that crucially incorporates skewness, and effectively reduces bias in locations, variances and predictive probabilities, especially when $p>n$. In fact, we prove that the error made by the partially-factorized solution  in approximating the exact posterior density goes to $0$ as $p \rightarrow \infty$. Optimization proceeds via a simple and scalable coordinate ascent variational algorithm. This routine has  $\mathcal{O}(pn \cdot\min\{p,n\})$ overall cost, requires a number of iterations  to find the optimum that converges to $1$ as $p \rightarrow \infty$, and provides a tractable \textsc{sun} approximating density.  Quantitative assessments in Sections \ref{sec_simu}--\ref{sec_3} confirm that our proposed  approximation is orders of magnitude faster than state-of-the-art accurate methods in high-dimensional empirical studies and, unlike for classical mean-field methods, pays almost no cost in inference accuracy and predictive performance, when $p>n$. Codes and tutorials  are available at  \texttt{https://github.com/augustofasano/Probit-PFMVB}.

\section{Approximate Bayesian inference for probit models}\label{sec_2}
\subsection{Model formulation and augmented data representation}\label{sec_2.0}
Recalling Section \ref{sec_1}, we focus on posterior  inference for  the classical Bayesian probit  model 
\begin{eqnarray}
(y_i \mid \bbeta) \sim \mbox{Bern}\{\Phi(\bx_i^{\intercal}\bbeta)\} \quad  (i=1, \ldots, n), \qquad \bbeta \sim \mbox{N}_p(0, \nu_p^2 \bI_p). 
\label{eq1}
\end{eqnarray}
In \eqref{eq1}, each $y_i$ is a Bernoulli variable whose success probability depends on a $p$-dimensional vector of predictors $\bx_i=(x_{i1}, \ldots, x_{ip})^{\intercal}$ under probit mapping. The coefficients $\bbeta=(\beta_1, \ldots, \beta_p)^{\intercal}$ regulate the effect of each predictor and  are assigned independent Gaussian priors $\beta_j \sim \mbox{N}(0, \nu_p^2)$, for $j=1, \ldots, p$. Besides providing a default specification, these priors are also relevant building-blocks which can be naturally extended to scale mixtures of Gaussians. Although our contribution can be readily extended to a generic multivariate Gaussian prior for $\bbeta$, we consider here the common setting with $ \bbeta \sim \mbox{N}_p(0, \nu_p^2 \bI_p)$ to ease notation, and allow  $\nu_p^2$ to possibly change with $p$. This choice incorporates not only  routine implementations of Bayesian probit models that rely on constant prior variances $\nu_p^2=\nu^2$ for the coefficients \citep[e.g.,][]{chopin_2017},  but also more structured formulations for high-dimensional problems which define $\nu_p^2=\nu^2/p$ to control  prior variance of the entire linear predictor and induce increasing shrinkage \citep[e.g.,][]{simpson_2017,fuglstad2018intuitive}; see Assumption \ref{ass_2} for details on the asymptotic behavior of $\nu_p^2$.

Model \eqref{eq1} also has  a simple constructive representation relying on Gaussian augmented data, which has been broadly used in the development of sampling-based   \citep{Albert_1993}  and variational  \citep{consonni_2007,girolami2006}  methods. More specifically,  \eqref{eq1} can be obtained  by marginalizing out the augmented data $\bz=(z_1, \ldots, z_n)^{\intercal}$ in the model
\begin{eqnarray}
y_i = {\rm 1}(z_i>0),  \qquad  (z_i \mid \bbeta) \sim \mbox{N}(\bx^{\intercal}_i \bbeta,1)  \quad (i=1, \ldots, n), \qquad   \bbeta \sim \mbox{N}_p(0, \nu_p^2 \bI_p). 
\label{eq2}
\end{eqnarray}
Recalling \citet{Albert_1993}, this construction yields closed-form full-conditionals for $\bbeta$ and $\bz$, thus allowing the implementation of a  Gibbs sampler where $p(\bbeta \mid \bz, \by)=p(\bbeta \mid \bz)$ is a Gaussian density, and each $p(z_i \mid \bbeta, \by)=p(z_i \mid \bbeta, y_i)$ is the density of a truncated normal. Our focus here is on large $p$ settings that have motivated a growing interest in scalable optimization-based solutions relying on approximate posteriors, beyond classical sampling-based strategies \citep{blei_2017}. As clarified in Section \ref{sec_2.1}, also these strategies exploit representation~\eqref{eq2}, but are based on overly-restrictive assumptions which yield to poor approximations in high dimensions.

\subsection{Mean-field variational Bayes with global and local variables}\label{sec_2.1}
Mean-field variational Bayes with global and local variables aims at providing a tractable approximation for the joint posterior density $p(\bbeta,\bz \mid \by)$ of the global parameters $\bbeta=(\beta_1, \ldots, \beta_p)^{\intercal}$ and the local variables $\bz=(z_1, \ldots, z_n)^{\intercal}$, within the mean-field family  of factorized densities $\mathcal{Q}_{\textsc{mf}}=\{q_{\textsc{mf}}(\bbeta,\bz): q_{\textsc{mf}}(\bbeta,\bz)=q_{\textsc{mf}}(\bbeta) q_{\textsc{mf}}(\bz) \}$. The optimal solution  ${q}^*_{\textsc{mf}}(\bbeta) {q}^*_{\textsc{mf}}(\bz) $
within this class is the one that minimizes the  Kullback--Leibler (\textsc{kl}) \citep{Kullback_1951} divergence    
\begin{eqnarray}
{\normalfont  \textsc{kl}}\{q_{\textsc{mf}}(\bbeta, \bz) \mid \mid p(\bbeta,\bz \mid \by)\}={E}_{q_{\textsc{mf}}(\bbeta, \bz)}\{\log q_{\textsc{mf}}(\bbeta, \bz)\}-{E}_{q_{\textsc{mf}}(\bbeta, \bz)}\{\log p(\bbeta, \bz \mid \by)\},
\label{eq3}
\end{eqnarray}
with $q_{\textsc{mf}}(\bbeta, \bz) \in \mathcal{Q}_{\textsc{mf}}$. Alternatively, it is possible to obtain ${q}^*_{\textsc{mf}}(\bbeta){q}^*_{\textsc{mf}}(\bz) $ by maximizing 
\begin{eqnarray}
{\normalfont  \textsc{elbo}}\{q_{\textsc{mf}}(\bbeta, \bz)\}={E}_{q_{\textsc{mf}}(\bbeta, \bz)}\{\log p(\bbeta, \bz, \by)\}-{E}_{q_{\textsc{mf}}(\bbeta, \bz)}\{\log q_{\textsc{mf}}(\bbeta, \bz)\},
\label{eq4}
\end{eqnarray}
since the \textsc{elbo} coincides with the negative \textsc{kl} up to an additive constant. Recall also that the \textsc{kl} divergence is always non-negative and refer to \citet{armagan_2011} for the expression of ${\normalfont  \textsc{elbo}}\{q_{\textsc{mf}}(\bbeta, \bz)\}$ under \eqref{eq2}. The maximization of \eqref{eq4} is typically easier  and can be performed via a coordinate ascent variational algorithm \citep[e.g.,][]{blei_2017} cycling among steps
\begin{eqnarray}
q_{\textsc{mf}}^{(t)}(\bbeta)  &\propto \exp\left[{E}_{q_{\textsc{mf}}^{(t-1)}(\bz)}\{\log p(\bbeta\mid \bz,  \by)\}\right], \ \
q_{\textsc{mf}}^{(t)}(\bz) &\propto \exp\left[{E}_{q_{\textsc{mf}}^{(t)}(\bbeta)}\{\log p(\bz\mid  \bbeta, \by)\}\right],\quad\quad
\label{eq5}
\end{eqnarray}
where $q_{\textsc{mf}}^{(t)}(\bbeta)$ and $ q_{\textsc{mf}}^{(t)}(\bz)$ are the solutions at iteration $t$. We refer to \citet{blei_2017} for why the updating in \eqref{eq5}  iteratively optimizes the \textsc{elbo} in \eqref{eq4}, and highlight here how \eqref{eq5} is particularly simple to implement in Bayesian models with tractable full-conditional densities $p(\bbeta\mid \bz, \by)$ and $p(\bz\mid\bbeta, \by)$. This is the case of the augmented-data representation \eqref{eq2} for model \eqref{eq1}. Indeed, recalling  \citet{Albert_1993} it easily follows that the full-conditionals in model \eqref{eq2} are
\begin{eqnarray}
\begin{split}
(\bbeta\mid \bz, \by) &\sim \mbox{N}_p(\bV\bX^\intercal\bz, \bV), \quad \bV=(\nu_p^{-2}\bI_p + \bX^\intercal \bX)^{-1}, \\
(z_i\mid  \bbeta, \by) &\sim \textsc{tn}(\bx_i^{\intercal}\bbeta,1, \mathbb{A}_{z_i}) \quad (i=1, \ldots, n),
\label{eq6}
\end{split}
\end{eqnarray}
where $\bX$ is the $n \times p$ design matrix with rows $\bx^{\intercal}_i$, whereas $\textsc{tn}(\bx_i^{\intercal}\bbeta,1, \mathbb{A}_{z_i})$ denotes a univariate normal distribution having mean $\bx_i^{\intercal}\bbeta$, variance $1$, and truncation to the interval $\mathbb{A}_{z_i}=\{ z_i: (2y_i-1)z_i>0\}$.  A key consequence of the conditional independence of $z_1,\dots,z_n$, given $\bbeta $ and $\by$, is that ${q}^*_{\textsc{mf}}(\bz)=\prod_{i=1}^n{q}^*_{\textsc{mf}}(z_i)$ and thus the optimal mean-field  solution always factorizes as ${q}^*_{\textsc{mf}}(\bbeta) {q}^*_{\textsc{mf}}(\bz)={q}^*_{\textsc{mf}}(\bbeta) \prod_{i=1}^n{q}^*_{\textsc{mf}}(z_i)$. Replacing the densities of the above full-conditionals within \eqref{eq5}, it can be easily noted that $q_{\textsc{mf}}^*(\bbeta)$ and $ q_{\textsc{mf}}^*(z_i)$ $(i=1, \ldots, n)$, are Gaussian and truncated normal densities, respectively, with parameters obtained as in Algorithm \ref{algo1} \citep{consonni_2007}.   In Algorithm \ref{algo1}, the generic quantity $\phi_p(\bbeta-\bmu; \bSigma)$ refers to the density of a $p$-variate Gaussian for $\bbeta$ with mean $\bmu$ and variance-covariance matrix $\bSigma$.
\begin{algorithm*}[b!]
	\caption{Steps of the procedure to obtain ${q}^*_{\textsc{mf}}(\bbeta,\bz)={q}^*_{\textsc{mf}}(\bbeta) \prod_{i=1}^n {q}^*_{\textsc{mf}}(z_i)$.} 
	\label{algo1}
	\vspace{3pt}
	{[\bf 1]} {\small Initialize $\bar{\bz}^{(0)}=(\bar{z}^{(0)}_1, \ldots, \bar{z}^{(0)}_n)^{\intercal} \in \mathbb{R}^n$.}\\
	\vspace{1pt}
	{[\bf 2]}  \For(){\small $t$ \mbox{from} $1$ until convergence of ${\normalfont  \textsc{elbo}}\{q_{\textsc{mf}}^{(t)}(\bbeta, \bz)\}$}
	{
		\vspace{1pt}
		{\small  Set $\bar{\bbeta}^{(t)}=\bV\bX^\intercal\bar{\bz}^{(t-1)}$, with $\bar{\bz}^{(t-1)}= (\bar{z}^{(t-1)}_1, \ldots, \bar{z}^{(t-1)}_n)^{\intercal}$.}\\
		\vspace{-1pt}
		{\small Set  $\bar{z}_i^{(t)}= \bx_i^{\intercal}\bar{\bbeta}^{(t)}+(2y_i-1)\phi(\bx_i^{\intercal}\bar{\bbeta}^{(t)})/\Phi\{(2y_i-1)\bx_i^{\intercal}\bar{\bbeta}^{(t)}\}$ $(i=1, \ldots, n)$.
	}}
	\vspace{2pt}
	{\bf Output:} {\small Optimal mean-field variational approximations ${q}^*_{\textsc{mf}}(\bbeta) =\phi_p(\bbeta-\bar{\bbeta}^{*}; \bV)$ and ${q}^*_{\textsc{mf}}(z_i)={1}\{(2y_i-1)z_i>0\}\phi(z_i-\bx_i^{\intercal}\bar{\bbeta}^{*})/\Phi\{(2y_i-1)\bx_i^{\intercal}\bar{\bbeta}^{*}\}$ $(i=1, \ldots, n)$.}\\
	\vspace{2pt}
\end{algorithm*}

Algorithm \ref{algo1} relies on simple operations which basically require only to update $\bar{\bbeta}^{(t)}$ via matrix operations, and, unlike for most state-of-the-art alternative strategies, is computationally feasible in high-dimensional settings; see e.g., Table \ref{table1}, and refer to Section~\ref{sec_2.2} and to the Supplementary Material which clarify that the overall cost of Algorithm \ref{algo1} is dominated by the matrix  operations involving $\bV$ and $\bX$ in the pre-computation part. Due to the Gaussian form of ${q}^*_{\textsc{mf}}(\bbeta)$ also the calculation of the approximate posterior moments and predictive probabilities is straightforward. Recalling Lemma 7.1 in \citet{azzalini_2013}, the latter quantities can be expressed as
\begin{eqnarray}
\mbox{\normalfont pr}_{\textsc{mf}}(y_{\textsc{new}}=1\mid \by)
=E_{{q}^*_{\textsc{mf}}(\bbeta)}\{\Phi(\bx^{\intercal}_{\textsc{new}} \bbeta)\}
=\Phi\{\bx^{\intercal}_{\textsc{new}} \bar{\bbeta}^{*}(1+\bx^{\intercal}_{\textsc{new}}\bV  \bx_{\textsc{new}})^{-1/2}\},
\label{eq7}
\end{eqnarray}
where $\bx_{\textsc{new}}\in\mathbb{R}^p$ are the covariates of the new unit, and $\bar{\bbeta}^{*}=E_{{q}^*_{\textsc{mf}}(\bbeta)}(\bbeta)$.
However, as shown by the asymptotic results in Theorem \ref{teo_1}, the mean-field solution can lead to poor approximations of the posterior in high dimensions as $p \rightarrow \infty$, causing concerns on the quality of inference and prediction. These asymptotic results are derived under the following random design assumption.

\begin{assumption}
	Assume that the predictors $x_{ij}$ $(i=1, \ldots, n;j=1, \ldots, p)$, are independent random variables with $E(x_{ij})=0$, $E(x_{ij}^2)=\sigma_x^2$ and $E(x_{ij}^4)\leq c_x$ for some $c_x<\infty$.
	\label{ass_1}
\end{assumption}

The above random design assumption is common to asymptotic studies of regression models \citep[e.g.,][]{reiss_2008, qin_2019}. Moreover, the zero mean and constant variance assumption is a natural requirement in  binary regression, where the predictors are typically standardized \citep{gelman_2008,chopin_2017}. In Section \ref{sec_2.2}, we will relax Assumption~\ref{ass_1}  to prove that the proposed partially-factorized solution has desirable asymptotic properties also in settings including dependence among predictors. Since  the mean-field approximation has pathological asymptotic behaviors even under complete independence assumptions, we avoid such a relaxation in Theorem~\ref{teo_1} to simplify  presentation. As  illustrated in Sections \ref{sec_simu}--\ref{sec_3} and in the Supplementary Material, the empirical evidence on simulated and real data, where our assumptions might not hold, is coherent with the theory results stated in the article. To rule out pathological cases, we also require the following mild technical assumption on the behavior of $\nu_p^2$ as $p\to\infty$.

\begin{assumption}
	Assume that $\sup_p\{\nu_p^2\}<\infty$ and $ \lim_{p \to \infty} p \nu_p^2=\alpha \in (0,\infty]$.
	\label{ass_2}
\end{assumption}

Assumption \ref{ass_2} includes the two elicitations for the prior variance  of interest in these settings as highlighted in Section \ref{sec_2.0}, namely,  $\nu_p^2=\nu^2$ and $\nu_p^2=\nu^2/p$, with $\nu^2 <\infty$. Throughout the paper, $n$ is assumed to be fixed while $p \to \infty$. A discussion on possible future extensions of the theoretical results in regimes when $n$ is allowed to grow with $p$ is provided in Section~\ref{sec_4}.

\begin{theorem}
	Under Assumptions \ref{ass_1}--\ref{ass_2} with $\alpha=\infty$, e.g.,  $\nu_p^2=\nu^2/p^{\gamma}$,  $\gamma \in [0,1)$, it holds
	\begin{description}
		\item{(a) \  $\liminf_{p\to\infty}\textsc{kl}\{q_{\textsc{mf}}^{*}(\bbeta) \mid \mid p(\bbeta \mid \by)\}>0$ almost surely (a.s.).}
		\item{(b) \ $\nu_p^{-1}||{E}_{{q}^*_{\textsc{mf}}(\bbeta)}(\bbeta)|| \rightarrow 0$ a.s., whereas  $\nu_p^{-1}||{E}_{p(\bbeta \mid \by)}(\bbeta)|| \rightarrow   c \sqrt{n} >0$ a.s.,  as $p \rightarrow \infty$, where $c=2\int_{0}^{\infty}u\phi(u)\mbox{\normalfont d}u$ is a strictly positive constant and $||\cdot||$ denotes the Euclidean norm.}
		\item{(c) $\liminf_{p\to\infty }\sup_{\bx_{\textsc{new}}\in\mathbb{R}^p}|\mbox{\normalfont pr}_{\textsc{mf}}(y_{\textsc{new}}=1\mid \by)-\mbox{\normalfont pr}(y_{\textsc{new}}=1\mid \by)|> 0$ a.s., when $p\to\infty$, where  $\mbox{\normalfont pr}(y_{\textsc{new}}=1\mid \by)=E_{p(\bbeta|\by)}\{\Phi(\bx^{\intercal}_{\textsc{new}} \bbeta)\}$ denote the exact posterior predictive probability for a new unit with predictors $\bx_{\textsc{new}} \in \mathbb{R}^p$.}
	\end{description}	
	\label{teo_1}
\end{theorem}

According to Theorem \ref{teo_1}(a)--\ref{teo_1}(b), the mean-field solution causes over-shrinkage of the  approximate posterior means, that can result in an unsatisfactory  approximation of the whole posterior density $p(\bbeta \mid \by)$ in high dimensions. For instance, combining Theorem \ref{teo_1}(b) and \eqref{eq7}, the over-shrinkage of $\bar{\bbeta}^{*}$ towards $0$ may cause an excessively rapid concentration of $\mbox{\normalfont pr}_{\textsc{mf}}(y_{\textsc{new}}=1\mid \by)$ around $0.5$, thereby inducing bias; see Theorem \ref{teo_1}(c).  As shown with extensive empirical studies in Sections \ref{sec_simu}--\ref{sec_3}, this bias can be dramatic in high dimensions and is evident also beyond the specific regimes considered by Theorem \ref{teo_1}, including when $\alpha \in (0,\infty)$ which covers $\nu^2_p=\mathcal{O}(p^{-1})$. In addition,  although as $p \to\infty$ the prior  plays a more important role in the Bayesian updating, Theorem \ref{teo_1}(b) suggests that even few data points can induce evident differences between prior and posterior moments such as, for example, the expectations. As shown in Theorem~\ref{teo_3}, such issues can be provably solved by the partially-factorized solution proposed in Section~\ref{sec_2.2}.

\begin{remark}
	As discussed in  \citet{armagan_2011}, 
	$\bar{\bbeta}^{*}$ is also the  mode of the exact posterior $p(\bbeta \mid \by)$ and, hence, it also coincides with the ridge estimate in probit regression. Therefore, the above discussion suggests that the posterior mode should be used with caution when  $p \gg n$.  As a direct consequence, also Laplace approximation would yield unreliable inference since it is centered at the posterior mode.  
\end{remark}

The above results are in apparent contrast with the fact that the marginal posterior densities $p(\beta_j \mid \by)$ often exhibit negligible skewness and their modes $\arg\max_{\beta_j} p(\beta_j \mid \by)$ tend to be close to the corresponding mean ${E}_{p(\beta_j \mid \by)}(\beta_j)$; see  Figure \ref{f2} in the Supplementary Material. However, the same is not true for the joint posterior density $p(\bbeta	\mid \by)$, where little skewness is sufficient to induce dramatic differences between the joint posterior mode, $\arg\max_{\bbeta} p(\bbeta {\mid} \by)$, and the posterior expectation; see e.g., Figure \ref{f3}. In this sense, the results in Theorem \ref{teo_1} point towards caution in assessing Gaussianity of high-dimensional distributions based on the shape of the marginals.

Motivated by these considerations, in Section \ref{sec_2.2} we develop a new partially-factorized variational strategy that solves the above issues without increasing the computational cost.  In fact, the cost of our procedure is the same of the classical mean-field one but, unlike for such a strategy,  we obtain a substantially improved approximation whose \textsc{kl} divergence from the exact posterior goes to $0$ as $p \rightarrow \infty$. The magnitude of these gains is empirically illustrated in Sections \ref{sec_simu} and \ref{sec_3}.

\subsection{Partially-factorized variational Bayes with global and local variables}
\label{sec_2.2}
A natural option to improve the mean-field performance  is to relax the factorization assumptions on the approximating densities in a way that still allows simple optimization and inference. To accomplish this goal, we consider a partially-factorized representation $\mathcal{Q}_{\textsc{pfm}}=\{ q_{\textsc{pfm}}(\bbeta,\bz): q_{\textsc{pfm}}(\bbeta,\bz)=q_{\textsc{pfm}}(\bbeta\mid \bz)\prod_{i=1}^nq_{\textsc{pfm}}(z_i) \}$ which does not assume independence between the parameters  in $\bbeta$ and the local variables $\bz$, thus providing a more flexible family of approximating densities. This new enlarged family $\mathcal{Q}_{\textsc{pfm}}$ allows to incorporate more structure of the actual posterior relative to $\mathcal{Q}_{\textsc{mf}}$, while retaining tractability. In fact, following \citet{holmes_2006} and recalling that $\bV=(\nu_p^{-2}\bI_p+\bX^{\intercal}\bX)^{-1}$, the joint density $p(\bbeta,\bz \mid \by)$ for the augmented model~\eqref{eq2}  factorizes as $p(\bbeta,\bz \mid \by)=p(\bbeta \mid \bz)p(\bz \mid \by)$, where $p(\bbeta \mid \bz)=\phi_p(\bbeta-\bV\bX^\intercal\bz; \bV)$ and $p(\bz \mid \by)\propto\phi_n(\bz; \bI_n+\nu_p^2\bX\bX^{\intercal})\prod_{i=1}^n {1}\{(2y_i-1)z_i>0\}$ denote the densities of a $p$-variate Gaussian and an $n$-variate truncated normal, respectively. The main source of intractability in this factorization of the posterior  is the multivariate  truncated normal density $p(\bz \mid \by)$, which requires evaluation of cumulative distribution functions of $n$-variate Gaussians with full variance-covariance matrix for  inference \citep[e.g.,][]{botev_2017,durante_2019,cao2019}. The independence assumption among the augmented data in $\mathcal{Q}_{\textsc{pfm}}$ avoids this source of intractability, while being fully flexible on $q_{\textsc{pfm}}(\bbeta\mid \bz)$.  Crucially, the mean-field solution  ${q}^*_{\textsc{mf}}(\bbeta,\bz)$ belongs to $\mathcal{Q}_{\textsc{pfm}}$ and thus, by minimizing $\textsc{kl}\{q_{\textsc{pfm}}(\bbeta,\bz) \mid \mid p(\bbeta,\bz \mid \by)\}$ in $\mathcal{Q}_{\textsc{pfm}}$,  we are guaranteed to obtain  improved approximations of the joint posterior density relative to mean-field, as stated in Proposition~\ref{rem_1}.

\begin{proposition}
	Let $q^*_{\textsc{mf}}(\bbeta,\bz)$ and $q^*_{\textsc{pfm}}(\bbeta,\bz)$ denote the optimal  approximations for $p(\bbeta,\bz \mid \by)$ from model \eqref{eq2}, under mean-field and  partially-factorized  variational Bayes, respectively. Then $\textsc{kl}\{q^*_{\textsc{pfm}}(\bbeta,\bz) \mid \mid p(\bbeta,\bz \mid \by)\} \leq \textsc{kl}\{q^*_{\textsc{mf}}(\bbeta,\bz) \mid \mid p(\bbeta,\bz \mid \by)\}$.
	\label{rem_1}
\end{proposition}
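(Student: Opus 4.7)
The plan is to prove the inequality via a simple family-containment argument: I would show that the optimal mean-field density $q^*_{\textsc{mf}}(\bbeta,\bz)$ is itself a member of the partially-factorized family $\mathcal{Q}_{\textsc{pfm}}$, so that optimality of $q^*_{\textsc{pfm}}$ over $\mathcal{Q}_{\textsc{pfm}}$ immediately yields the desired bound. The key idea is that enlarging the variational family can only (weakly) decrease the attainable \textsc{kl}, and $\mathcal{Q}_{\textsc{pfm}}$ is, modulo the factorization across local variables, a strict enlargement of $\mathcal{Q}_{\textsc{mf}}$.

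First, I would invoke the factorization $q^*_{\textsc{mf}}(\bbeta,\bz) = q^*_{\textsc{mf}}(\bbeta)\prod_{i=1}^n q^*_{\textsc{mf}}(z_i)$ noted just after equation (6). This is a consequence of the augmented model (2): since $z_1,\dots,z_n$ are conditionally independent given $\bbeta$ and $\by$, the coordinate-ascent update \eqref{eq5} produces a $q^{(t)}_{\textsc{mf}}(\bz)$ that factorizes across $i$ at every iteration, and thus also at the fixed point. Next, I would exhibit $q^*_{\textsc{mf}}(\bbeta,\bz)$ as an element of $\mathcal{Q}_{\textsc{pfm}}$ by choosing the degenerate conditional $q_{\textsc{pfm}}(\bbeta\mid\bz) := q^*_{\textsc{mf}}(\bbeta)$ (constant in $\bz$) together with the marginals $q_{\textsc{pfm}}(z_i) := q^*_{\textsc{mf}}(z_i)$, so that $q_{\textsc{pfm}}(\bbeta\mid\bz)\prod_{i=1}^n q_{\textsc{pfm}}(z_i) = q^*_{\textsc{mf}}(\bbeta,\bz)$ pointwise. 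Hence $q^*_{\textsc{mf}} \in \mathcal{Q}_{\textsc{pfm}}$.

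The conclusion then follows directly: because $q^*_{\textsc{pfm}}$ is by definition a minimizer of $\textsc{kl}[\,\cdot\mid\mid p(\bbeta,\bz\mid\by)]$ over $\mathcal{Q}_{\textsc{pfm}}$ and $q^*_{\textsc{mf}}$ is a feasible point of that minimization, we obtain
$$\textsc{kl}[q^*_{\textsc{pfm}}(\bbeta,\bz)\mid\mid p(\bbeta,\bz\mid\by)] \;\leq\; \textsc{kl}[q^*_{\textsc{mf}}(\bbeta,\bz)\mid\mid p(\bbeta,\bz\mid\by)].$$
There is no substantive obstacle here; the proof is essentially a one-line optimization-over-a-larger-set argument. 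The only subtle point worth flagging is the factorization $q^*_{\textsc{mf}}(\bz)=\prod_i q^*_{\textsc{mf}}(z_i)$, which is not an extra assumption but rather a structural consequence of the augmented representation (2), and which is precisely what guarantees the containment $q^*_{\textsc{mf}}\in\mathcal{Q}_{\textsc{pfm}}$ needed for the argument to go through.
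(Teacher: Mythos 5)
Your proof is correct and follows essentially the same route as the paper: the paper also argues that $q^*_{\textsc{mf}}(\bbeta,\bz)$ belongs to $\mathcal{Q}_{\textsc{pfm}}$ (using the factorization $q^*_{\textsc{mf}}(\bz)=\prod_{i=1}^n q^*_{\textsc{mf}}(z_i)$ noted after equation \eqref{eq6}) and then invokes optimality of $q^*_{\textsc{pfm}}$ over the enlarged family. Your explicit choice of the constant conditional $q_{\textsc{pfm}}(\bbeta\mid\bz)=q^*_{\textsc{mf}}(\bbeta)$ is just a spelled-out version of that containment argument.
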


Proposition~\ref{rem_1} follows from the fact that ${q}^*_{\textsc{mf}}(\bbeta,\bz)$ belongs to $\mathcal{Q}_{\textsc{pfm}}$, while ${q}^*_{\textsc{pfm}}(\bbeta,\bz)$ is the actual minimizer of $\textsc{kl}\{q(\bbeta,\bz) \mid \mid p(\bbeta,\bz \mid \by)\}$ within the family $\mathcal{Q}_{\textsc{pfm}}$. This suggests that our strategy may provide a promising direction to improve the quality of posterior approximation. However, to be useful in practice, the solution $q^*_{\textsc{pfm}}(\bbeta,\bz)$ should be simple to derive, and the  approximate posterior of interest $q^*_{\textsc{pfm}}(\bbeta)=\int_{\mathbb{R}^n}q^*_{\textsc{pfm}}(\bbeta \mid \bz)\prod_{i=1}^nq^*_{\textsc{pfm}}(z_i) \mbox{d}\bz={E}_{q^*_{\textsc{pfm}}(\bz)}\{q^*_{\textsc{pfm}}(\bbeta \mid \bz)\}$  should have a tractable form. Theorem \ref{teo_2} and Corollary \ref{cor_SUN} show that this is possible.

\begin{theorem}
	Under model \eqref{eq2}, the $\textsc{kl}$ divergence between \mbox{$q_{\textsc{pfm}}(\bbeta,\bz) \in \mathcal{Q}_{\textsc{pfm}}$} and $p(\bbeta,\bz \mid \by)$ is minimized at $q^*_{\textsc{pfm}}(\bbeta \mid\bz) \prod_{i=1}^n q^*_{\textsc{pfm}}(z_i)$ with
	\begin{eqnarray}
	\begin{split}
	&q^*_{\textsc{pfm}}(\bbeta \mid\bz)=p(\bbeta \mid \bz)=\phi_p(\bbeta-\bV\bX^\intercal\bz; \bV), \quad \bV=(\nu_p^{-2}\bI_p+\bX^{\intercal}\bX)^{-1},\\
	&q_{\textsc{pfm}}^*(z_i)= \frac{\phi(z_i-\mu^*_i; \sigma^{*2}_i)}{\Phi\{(2y_i-1)\mu^*_i/\sigma^{*}_i\}}{1}\{(2y_i-1)z_i>0\},   \  \sigma^{*2}_i=\frac{1}{1-\bx^{\intercal}_i\bV \bx_i} \quad   (i=1, \ldots, n),
	\end{split}
	\label{eq8}
	\end{eqnarray}
	where $\bmu^*=(\mu_1^*, \ldots, \mu_n^*)^{\intercal}$ solves the system  $\mu^*_i-\sigma^{*2}_i \bx_i^{\intercal}\bV\bX^{\intercal}_{-i}\bar{\bz}^*_{-i}=0,$ for each $i=1, \ldots, n$, with $\bX_{-i}$ denoting the design matrix without the $i$th row, while $\bar{\bz}^*_{-i}$ is an $n-1$ vector obtained by removing the $i$th element $\bar{z}^*_i=\mu^*_i+(2y_i-1)\sigma^{*}_i\phi(\mu^*_i/\sigma^{*}_i)/\Phi\{(2y_i-1)\mu^*_i/\sigma^{*}_i\}$ $(i=1, \ldots, n)$, from the vector $\bar{\bz}^*=(\bar{z}^*_1, \ldots, \bar{z}^*_n)^{\intercal}$.  
	\label{teo_2}
\end{theorem}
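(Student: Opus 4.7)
The plan is to exploit the factorization $p(\bbeta,\bz\mid\by)=p(\bbeta\mid\bz)\,p(\bz\mid\by)$ noted in the excerpt, together with the standard chain rule for the \textsc{kl} divergence, in order to decouple the optimization over $q_{\textsc{pfm}}(\bbeta\mid\bz)$ from that over the factorized $q_{\textsc{pfm}}(\bz)=\prod_i q_{\textsc{pfm}}(z_i)$. Writing
\begin{equation*}
\textsc{kl}[q_{\textsc{pfm}}(\bbeta,\bz)\,\|\,p(\bbeta,\bz\mid\by)] = E_{q_{\textsc{pfm}}(\bz)}\!\big\{\textsc{kl}[q_{\textsc{pfm}}(\bbeta\mid\bz)\,\|\,p(\bbeta\mid\bz)]\big\} + \textsc{kl}[q_{\textsc{pfm}}(\bz)\,\|\,p(\bz\mid\by)],
\end{equation*}
the first term is non-negative and vanishes exactly when $q_{\textsc{pfm}}(\bbeta\mid\bz)=p(\bbeta\mid\bz)=\phi_p(\bbeta-\bV\bX^\intercal\bz;\bV)$, since such a choice is admissible within the partially-factorized family. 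This immediately pins down the conditional component stated in the theorem and reduces the problem to minimizing the second \textsc{kl} term over independent factors $q_{\textsc{pfm}}(z_i)$.

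For this residual mean-field problem on $\bz$, the next step is to invoke the standard coordinate-wise optimality condition (as in the mean-field CAVI derivation), which gives
\begin{equation*}
\log q^*_{\textsc{pfm}}(z_i) = E_{q^*_{\textsc{pfm}}(\bz_{-i})}\!\big[\log p(z_i\mid\bz_{-i},\by)\big] + \text{const}.
\end{equation*}
Because $p(\bz\mid\by)\propto \phi_n(\bz;\bI_n+\nu_p^2\bX\bX^\intercal)\prod_i\mathds{1}[(2y_i-1)z_i>0]$, the conditional $p(z_i\mid\bz_{-i},\by)$ is itself a univariate truncated Gaussian whose parameters are read off the precision matrix $\bH=(\bI_n+\nu_p^2\bX\bX^\intercal)^{-1}$. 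Applying the Woodbury identity yields the clean form $\bH=\bI_n-\bX\bV\bX^\intercal$, so that $H_{ii}=1-\bx_i^\intercal\bV\bx_i$ and $H_{ij}=-\bx_i^\intercal\bV\bx_j$ for $j\neq i$. Consequently the (pre-truncation) conditional variance is exactly $\sigma_i^{*2}=(1-\bx_i^\intercal\bV\bx_i)^{-1}$, and the conditional mean equals $\sigma_i^{*2}\bx_i^\intercal\bV\bX_{-i}^\intercal\bz_{-i}$, which is linear in $\bz_{-i}$.

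Taking the expectation of this linear conditional mean under $q^*_{\textsc{pfm}}(\bz_{-i})$ replaces $\bz_{-i}$ by its expectation $\bar{\bz}^*_{-i}$, and the quadratic in $z_i$ inherits the variance $\sigma_i^{*2}$. Exponentiating and restoring the indicator $\mathds{1}[(2y_i-1)z_i>0]$ (which carries over unchanged) identifies $q^*_{\textsc{pfm}}(z_i)$ as the truncated normal in \eqref{eq8} with location $\mu_i^*=\sigma_i^{*2}\bx_i^\intercal\bV\bX_{-i}^\intercal\bar{\bz}^*_{-i}$, i.e.\ the fixed-point system stated in the theorem. The expression for $\bar{z}_i^*$ then follows from the standard first-moment formula for a univariate truncated normal with mean $\mu_i^*$, variance $\sigma_i^{*2}$ and truncation $\{(2y_i-1)z_i>0\}$.

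The main obstacle I expect is the linear-algebra step: correctly identifying the $i$th diagonal entry and off-diagonal row of the precision $(\bI_n+\nu_p^2\bX\bX^\intercal)^{-1}$ in terms of $\bV$ via Woodbury, and then verifying that the conditional mean collapses to a linear form in $\bz_{-i}$ whose variational expectation matches the fixed-point equation in the statement. The rest of the argument is essentially bookkeeping: the KL decomposition is elementary, the CAVI optimality condition is standard, and the truncated-normal moment identity is classical, so the substance of the proof lies in propagating the Woodbury-based reduction cleanly through the conditional-mean computation.
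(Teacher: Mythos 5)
Your proposal is correct and follows essentially the same route as the paper's proof: the chain-rule decomposition of the \textsc{kl} divergence to pin down $q^*_{\textsc{pfm}}(\bbeta\mid\bz)=p(\bbeta\mid\bz)$, followed by the standard coordinate-wise optimality condition applied to the truncated-normal density $p(\bz\mid\by)$, whose univariate conditionals give the stated $\sigma_i^{*2}$ and the fixed-point system for $\bmu^*$. The only cosmetic difference is that you derive the conditional mean and variance explicitly via the Woodbury identity on $(\bI_n+\nu_p^2\bX\bX^\intercal)^{-1}$, whereas the paper invokes the closure-under-conditioning property of multivariate truncated Gaussians and cites the resulting formula directly.
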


In Theorem \ref{teo_2}, the new solution for $q^*_{\textsc{pfm}}(\bbeta \mid\bz)$ follows by noting that 
$\textsc{kl}\{q_{\textsc{pfm}}(\bbeta,\bz) \mid \mid p(\bbeta,\bz \mid \by)\}=\textsc{kl}\{q_{\textsc{pfm}}(\bz) \mid \mid p(\bz \mid \by)\}+{E}_{q_{\textsc{pfm}}(\bz)}[\textsc{kl}\{q_{\textsc{pfm}}(\bbeta \mid \bz) \mid \mid p(\bbeta \mid \bz)\}]$ due to the chain rule for the $\textsc{kl}$ divergence.  Therefore, the second summand is $0$ if and only if $q^*_{\textsc{pfm}}(\bbeta \mid\bz)=p(\bbeta \mid \bz)$. 
The expressions for $q_{\textsc{pfm}}^*(z_i)$ $(i=1, \ldots, n)$, are instead a consequence of the closure under conditioning property of multivariate truncated Gaussians \citep{ horrace_2005} which allows to recognize the kernel of a univariate truncated normal in the  optimal solution $\exp[{E}_{q_{\textsc{pfm}}^*(\bz_{-i})} \{\log p(z_i \mid \bz_{-i},\by)\}]$ \citep{blei_2017}  for $q_{\textsc{pfm}}^*(z_i)$; see the Supplementary Material for the detailed proof. Algorithm \ref{algo2} outlines the steps of the optimization strategy to obtain ${q}^*_{\textsc{pfm}}(\bbeta,\bz)$. 

\begin{algorithm*}[t]
	\caption{Steps of the procedure to obtain ${q}^*_{\textsc{pfm}}(\bbeta,\bz)={q}^*_{\textsc{pfm}}(\bbeta \mid \bz) \prod_{i=1}^n {q}^*_{\textsc{pfm}}(z_i)$.} 
	\label{algo2}
	\vspace{3pt}	
	{[\bf 1]} {\small For each $i=1, \ldots, n$, set $\sigma^{*2}_i=(1-\bx^{\intercal}_i\bV \bx_i)^{-1}$ and initialize  $\bar{z}^{(0)}_i\in\mathbb{R}$.}\\
	\vspace{1pt}
	{{[\bf 2]}      \For(){\small $t$ \mbox{from} $1$ until convergence of ${\normalfont  \textsc{elbo}}\{q_{\textsc{pfm}}^{(t)}(\bbeta, \bz)\}$}
		{
			\vspace{1pt}
			\For(){\small $i$ \mbox{from} $1$ \mbox{to} $n$}
			{ \vspace{1pt}
				\small	Set $\mu^{(t)}_i=\sigma^{*2}_i \bx_i^{\intercal}\bV\bX^{\intercal}_{-i}(\bar{z}^{(t)}_1, \ldots, \bar{z}^{(t)}_{i-1}, \bar{z}^{(t-1)}_{i+1}, \ldots,  \bar{z}^{(t-1)}_{n})^{\intercal}$.\\
				\vspace{1pt}
				Set $\bar{z}^{(t)}_i=\mu^{(t)}_i+(2y_i-1)\sigma^{*}_i\phi(\mu^{(t)}_i/\sigma^{*}_i)/\Phi\{(2y_i-1)\mu^{(t)}_i/\sigma^{*}_i\}$. }
		}
	}
	\vspace{1pt}
	{\bf Output:} {\small Optimal partially-factorized approximations ${q}^*_{\textsc{pfm}}(\bbeta \mid \bz)= \phi_p(\bbeta-\bV\bX^\intercal\bz; \bV)$ and\\
		\quad \enspace  $q_{\textsc{pfm}}^*(z_i) ={1}\{(2y_i-1)z_i>0\}\phi(z_i-\mu^{*}_i; \sigma^{*2}_i)/\Phi\{(2y_i-1)\mu^{*}_i/\sigma^{*}_i\}$ $(i=1, \ldots, n)$.}
\end{algorithm*}

As for classical coordinate ascent variational inference, this routine optimizes the \textsc{elbo} sequentially with respect to  each density $q_{\textsc{pfm}}(z_i)$, keeping fixed the others  at their most recent update, thus producing a strategy that iteratively solves the system of equations for $\bmu^*$ in Theorem \ref{teo_2} via simple  expressions. Indeed, since the form of the approximating densities is already available as in Theorem \ref{teo_2}, Algorithm \ref{algo2}  reduces to update the vector of parameters $\bmu^*$ via simple functions and matrix operations; see the Supplementary Material for the expression of  ${\textsc{elbo}}\{q_{\textsc{pfm}}(\bbeta, \bz)\}$.

As stated in Corollary \ref{cor_SUN},  the optimal $q_{\textsc{pfm}}^*(\bbeta)$ of  interest can be derived from $q_{\textsc{pfm}}^*(\bbeta \mid \bz)$ and $\prod_{i=1}^nq_{\textsc{pfm}}^*(z_i)$, and  coincides with a tractable  \textsc{sun}  density \citep{arellano_2006}.
\begin{corollary}
	Let $\bar{\bY}=\mbox{\normalfont  diag}(2y_1-1, \ldots, 2y_n-1)$ and $\bsigma^{*}=\mbox{\normalfont diag}(\sigma^*_1, \ldots, \sigma^*_n)$, then,	under \eqref{eq8}, the approximate density $q_{\textsc{pfm}}^*(\bbeta)$  for $\bbeta$  coincides with that of the variable
	\begin{eqnarray}
	\bu^{(0)}+\bV\bX^{\intercal}\bar{\bY} \bsigma^* \bu^{(1)},
	\label{eq9}
	\end{eqnarray}
	where $\bu^{(0)} \sim {\normalfont \mbox{N}}_p(\bV\bX^{\intercal} \bmu^*,\bV)$, whereas $ \bu^{(1)}=(u^{(1)}_{1}, \ldots, u^{(1)}_{n})^\intercal$ with  $u^{(1)}_{i} \sim {\normalfont \textsc{tn}}[0,1,\{-(2y_i-1)\mu^*_i/\sigma^*_i, +\infty\} ]$, independently for every $i=1, \ldots, n$. Equivalently, $q_{\textsc{pfm}}^*(\bbeta)$ is the density of the $\textsc{sun}_{p,n}(\bxi, \bOmega, \bDelta, \bgamma, \bGamma)$ variable with parameters $\bxi=\bV\bX^{\intercal}\bmu^*, \bOmega=\bV+\bV\bX^{\intercal}\bsigma^{*2}\bX \bV, \bDelta=\bomega^{-1}\bV\bX^{\intercal}\bar{\bY}\bsigma^*, \bgamma=\bar{\bY}\bsigma^{*}\:^{-1}\bmu^*,  \bGamma=\bI_n$, where $\bomega$ denotes a $p \times p$ diagonal matrix containing the square roots of the diagonal elements in $\bOmega$, whereas $\bar{\bOmega}$ is the associated correlation matrix. 
	\label{cor_SUN}
\end{corollary}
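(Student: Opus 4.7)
The plan is to derive representation \eqref{eq9} by composing the closed-form densities from Theorem~\ref{teo_2}, and then to identify the resulting distribution with the additive representation of a SUN variable given in \citet{arellano_2006}, reading off the five parameters $(\bxi, \bOmega, \bDelta, \bgamma, \bGamma)$ by direct comparison.

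To obtain \eqref{eq9}, I would first rewrite $q^*_{\textsc{pfm}}(\bbeta \mid \bz) = \phi_p(\bbeta - \bV\bX^\intercal\bz; \bV)$ in stochastic form as $\bbeta = \bV\bX^\intercal\bz + \bvarepsilon$ with $\bvarepsilon \sim \mbox{N}_p(0, \bV)$ independent of $\bz$. Next I would reparametrize the independent truncated normals $q^*_{\textsc{pfm}}(z_i)$ via the affine map $u^{(1)}_i = (2y_i - 1)(z_i - \mu^*_i)/\sigma^*_i$; using $(2y_i - 1)^2 = 1$, this map sends the truncation set $\{(2y_i - 1)z_i > 0\}$ to the half-line $u^{(1)}_i > -(2y_i - 1)\mu^*_i/\sigma^*_i$, and the Jacobian rescales $q^*_{\textsc{pfm}}(z_i)$ to the density of a standard normal truncated to that half-line. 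Inverting the map gives $z_i = \mu^*_i + (2y_i - 1)\sigma^*_i u^{(1)}_i$, or in matrix form $\bz = \bmu^* + \bar{\bY}\bsigma^*\bu^{(1)}$. Substituting into the representation of $\bbeta$ and setting $\bu^{(0)} = \bV\bX^\intercal\bmu^* + \bvarepsilon$, which is $\mbox{N}_p(\bV\bX^\intercal\bmu^*, \bV)$ and independent of $\bu^{(1)}$, yields \eqref{eq9}.

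For the SUN identification, I would invoke the additive representation $\bxi + \bomega(\bV_0 + \bDelta\bGamma^{-1}\bV_1)$ of a $\textsc{sun}_{p,n}(\bxi, \bOmega, \bDelta, \bgamma, \bGamma)$ variable, where $\bV_1 \sim \mbox{N}_n(0, \bGamma)$ is truncated componentwise below by $-\bgamma$, and $\bV_0 \sim \mbox{N}_p(0, \bar{\bOmega} - \bDelta\bGamma^{-1}\bDelta^\intercal)$ is independent of $\bV_1$. Choosing $\bGamma = \bI_n$ and $\bgamma = \bar{\bY}\bsigma^{*-1}\bmu^*$ makes $\bV_1$ coincide in distribution with $\bu^{(1)}$. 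Matching the constant term and the coefficient of $\bu^{(1)}$ in \eqref{eq9} then forces $\bxi = \bV\bX^\intercal\bmu^*$ and $\bomega\bDelta = \bV\bX^\intercal\bar{\bY}\bsigma^*$, i.e., $\bDelta = \bomega^{-1}\bV\bX^\intercal\bar{\bY}\bsigma^*$. Finally, using $\bar{\bY}^2 = \bI_n$, the implied total covariance $\bOmega = \bV + \bomega\bDelta\bDelta^\intercal\bomega$ simplifies to $\bV + \bV\bX^\intercal\bsigma^{*2}\bX\bV$, which agrees with the stated expression and, through its diagonal, pins down $\bomega$ consistently with the standard SUN scaling convention. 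The main technical care required lies in the sign bookkeeping of the $(2y_i - 1)$ factors during the affine change of variables and in verifying that the $\bomega$ appearing in $\bDelta$ is internally consistent with the $\bOmega$ produced by the representation; beyond these routine checks I do not foresee a substantive obstacle.
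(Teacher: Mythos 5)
Your proposal is correct and follows essentially the same route as the paper: both derive \eqref{eq9} by writing $q_{\textsc{pfm}}^*(\bbeta)$ as the convolution of the Gaussian $q^*_{\textsc{pfm}}(\bbeta\mid\bz)$ with the product of independent truncated normals, standardize the $z_i$ via the affine map $z_i=\mu_i^*+(2y_i-1)\sigma_i^*u_i^{(1)}$, and then read off the \textsc{sun} parameters by matching against the additive representation $\bxi+\bomega(\bar{\bu}^{(0)}+\bDelta\bGamma^{-1}\bar{\bu}^{(1)})$ of \citet{arellano_2006}. The only cosmetic caveat is that $\bOmega=\bV+\bomega\bDelta\bDelta^{\intercal}\bomega$ is the \textsc{sun} scale matrix arising from requiring $\bomega(\bar{\bOmega}-\bDelta\bGamma^{-1}\bDelta^{\intercal})\bomega=\bV$, not the actual covariance of the skewed variable, but your computation and parameter identification are exactly those of the paper.
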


Corollary \ref{cor_SUN} follows by noticing that,  under \eqref{eq8}, the approximate density for $\bbeta$ is the convolution  of a $p$-variate Gaussian and an $n$-variate truncated normal, thereby producing the density of a \textsc{sun} \citep{arellano_2006}. This class of variables generalizes the multivariate Gaussian family via a skewness-inducing mechanism controlled by the matrix $\bDelta$ which weights the skewing effect  produced by an $n$-variate truncated normal with covariance matrix $\bGamma$. Besides introducing asymmetric shapes, \textsc{sun}s have several closure properties which facilitate inference. However, evaluation of functionals requires the calculation of cumulative distribution functions of $n$-variate Gaussians \citep{arellano_2006, azzalini_2013}, which is prohibitive when $n$ is large, unless $\bGamma$ is diagonal. Recalling \citet{durante_2019}, this issue makes Bayesian inference rapidly impractical under the exact posterior $p(\bbeta \mid \by)$ when $n$ is more than a few hundreds, since $p(\bbeta \mid \by)$ is a  $\textsc{sun}$ density with non-diagonal $\bGamma_{\mbox{\scriptsize post}}$. Instead, the factorized form $\prod_{i=1}^n q_{\textsc{pfm}}(z_i)$ for $q_{\textsc{pfm}}(\bz)$ leads to a \textsc{sun} approximate density for $\bbeta$ in Corollary \ref{cor_SUN} which crucially relies on a diagonal $\bGamma=\bI_n$. This result allows approximate posterior inference  for every $n$ and $p$ via tractable expressions. In particular, exploiting Theorem~\ref{teo_2} and Corollary~\ref{cor_SUN}, the first two central moments of $\bbeta$ and the predictive probabilities are derived in Proposition~\ref{prop1}.

\begin{proposition}
	\label{prop1}
	Let $q_{\textsc{pfm}}^*(\bbeta)$ be the \textsc{sun} density in Corollary \ref{cor_SUN}, then
	\begin{eqnarray}
	\begin{split}
	{E}_{q_{\textsc{pfm}}^*(\bbeta)}(\bbeta)&=\bV \bX^{\intercal}\bar{\bz}^*, \\
	\mbox{\normalfont var}_{q_{\textsc{pfm}}^*(\bbeta)}(\bbeta)&=\bV+\bV\bX^{\intercal}\mbox{\normalfont diag}\{\sigma_1^{*2}-(\bar{z}_1^*-\mu_1^*)\bar{z}_1^*, \ldots, \sigma_n^{*2}-(\bar{z}_n^*-\mu_n^*)\bar{z}_n^*\}\bX \bV,
	\end{split}
	\label{eq10}
	\end{eqnarray}
	where $\bar{z}_i^*$, $\mu_i^*$ and $\sigma_i^{*}$ $(i=1, \ldots, n)$ are defined as in Theorem \ref{teo_2}. Moreover, the predictive probability $\mbox{\normalfont pr}_{\textsc{pfm}}(y_{\textsc{new}}=1\mid \by)={E}_{q^*_{\textsc{pfm}}(\bbeta)}\{\Phi(\bx^{\intercal}_{\textsc{new}} \bbeta)\}$ for a new unit with covariates $\bx_{\textsc{new}}$ is
	\begin{eqnarray}
	\mbox{\normalfont pr}_{\textsc{pfm}}(y_{\textsc{new}}=1\mid \by)={E}_{q^*_{\textsc{pfm}}(\bz)}[\Phi\{\bx_{\textsc{new}}^{\intercal}\bV\bX^{\intercal}\bz(1+\bx_{\textsc{new}}^{\intercal}\bV\bx_{\textsc{new}})^{-1/2}\}],
	\label{eq11}
	\end{eqnarray}
	where $q^*_{\textsc{pfm}}(\bz)$ can be expressed as the product $\prod_{i=1}^{n}q^*_{\textsc{pfm}}(z_i)$ of univariate truncated normal densities $q^*_{\textsc{pfm}}(z_i)=1\{(2y_i-1)z_i>0\}\phi(z_i-\mu^*_i; \sigma^{*2}_i)/\Phi\{(2y_i-1)\mu^*_i/\sigma^{*}_i\}$ $(i=1, \ldots, n)$.
\end{proposition}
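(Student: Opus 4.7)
The plan is to exploit the stochastic representation (\ref{eq9}) of $q_{\textsc{pfm}}^*(\bbeta)$ for the moment computations, and the tower property conditioning on $\bz$ for the predictive probability, so that all three formulas reduce to standard facts about univariate truncated normals and about expectations of $\Phi(\cdot)$ against a Gaussian.

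For the mean, I would write $\bbeta \stackrel{d}{=} \bu^{(0)}+\bV\bX^{\intercal}\bar{\bY}\bsigma^* \bu^{(1)}$ with $\bu^{(0)}$ and $\bu^{(1)}$ independent, so that $E_{q^*_{\textsc{pfm}}(\bbeta)}(\bbeta)=\bV\bX^{\intercal}\bmu^*+\bV\bX^{\intercal}\bar{\bY}\bsigma^*E(\bu^{(1)})=\bV\bX^{\intercal}[\bmu^*+\bar{\bY}\bsigma^*E(\bu^{(1)})]$. Using the standard Mills ratio $E[u^{(1)}_i]=\phi((2y_i-1)\mu^*_i/\sigma^*_i)/\Phi((2y_i-1)\mu^*_i/\sigma^*_i)$ together with the symmetry $\phi(-x)=\phi(x)$, the $i$-th entry of the bracket becomes exactly $\bar{z}^*_i=\mu^*_i+(2y_i-1)\sigma^*_i\phi(\mu^*_i/\sigma^*_i)/\Phi((2y_i-1)\mu^*_i/\sigma^*_i)$, which yields the claimed expression $\bV\bX^{\intercal}\bar{\bz}^*$.

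For the variance, the same representation combined with independence of $\bu^{(0)}$ and $\bu^{(1)}$ gives $\mbox{\normalfont var}_{q^*_{\textsc{pfm}}(\bbeta)}(\bbeta)=\bV+\bV\bX^{\intercal}\bar{\bY}\bsigma^*\mbox{\normalfont diag}[\mbox{\normalfont var}(u^{(1)}_i)]\bsigma^*\bar{\bY}\bX\bV$, and because $\bar{\bY}$ is diagonal with entries $\pm 1$, the product $\bar{\bY}\bsigma^*\mbox{\normalfont diag}[\mbox{\normalfont var}(u^{(1)}_i)]\bsigma^*\bar{\bY}$ equals $\mbox{\normalfont diag}[\sigma_i^{*2}\mbox{\normalfont var}(u^{(1)}_i)]$. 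Using the textbook formula $\mbox{\normalfont var}(u^{(1)}_i)=1-(2y_i-1)(\mu^*_i/\sigma^*_i)h_i-h_i^2$ with $h_i:=\phi(\mu^*_i/\sigma^*_i)/\Phi((2y_i-1)\mu^*_i/\sigma^*_i)$, and noting that $\bar{z}^*_i-\mu^*_i=(2y_i-1)\sigma^*_i h_i$, a short algebraic simplification yields $\sigma_i^{*2}\mbox{\normalfont var}(u^{(1)}_i)=\sigma_i^{*2}-(\bar{z}^*_i-\mu^*_i)\bar{z}^*_i$, giving the stated variance expression.

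For the predictive probability I would use iterated expectations together with $q^*_{\textsc{pfm}}(\bbeta\mid\bz)=p(\bbeta\mid\bz)=\phi_p(\bbeta-\bV\bX^{\intercal}\bz;\bV)$ from Theorem~\ref{teo_2}: conditional on $\bz$, the linear combination $\bx^{\intercal}_{\textsc{new}}\bbeta$ is Gaussian with mean $\bx^{\intercal}_{\textsc{new}}\bV\bX^{\intercal}\bz$ and variance $\bx^{\intercal}_{\textsc{new}}\bV\bx_{\textsc{new}}$, so the classical identity $E[\Phi(a+bW)]=\Phi(a/\sqrt{1+b^2})$ for $W\sim N(0,1)$ gives $E_{q^*_{\textsc{pfm}}(\bbeta\mid\bz)}[\Phi(\bx^{\intercal}_{\textsc{new}}\bbeta)]=\Phi[\bx_{\textsc{new}}^{\intercal}\bV\bX^{\intercal}\bz/(1+\bx_{\textsc{new}}^{\intercal}\bV\bx_{\textsc{new}})^{1/2}]$. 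Taking the outer expectation with respect to the explicit factorized $q^*_{\textsc{pfm}}(\bz)=\prod_{i=1}^n q^*_{\textsc{pfm}}(z_i)$ supplied by Theorem~\ref{teo_2} yields (\ref{eq11}).

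No step is genuinely hard; the only delicate point is bookkeeping with the $(2y_i-1)$ signs when invoking truncated-normal moments, because the truncation region for $u^{(1)}_i$ is $[-(2y_i-1)\mu^*_i/\sigma^*_i,+\infty)$ and the symmetry of $\phi$ must be used to recover the form of $\bar{z}^*_i$ stated in Theorem~\ref{teo_2}.
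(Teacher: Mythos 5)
Your proposal is correct, and it is essentially equivalent to the paper's argument, with only a cosmetic difference in how the first two moments are organized. The paper computes $E_{q^*_{\textsc{pfm}}(z_i)}(z_i)=\bar z_i^*$ and $\mbox{var}_{q^*_{\textsc{pfm}}(z_i)}(z_i)=\sigma_i^{*2}-(\bar z_i^*-\mu_i^*)\bar z_i^*$ directly from the truncated-normal densities in \eqref{eq8} and then applies the laws of total expectation and total variance through $q^*_{\textsc{pfm}}(\bbeta\mid\bz)=p(\bbeta\mid\bz)=\phi_p(\bbeta-\bV\bX^\intercal\bz;\bV)$; you instead read the same moments off the additive representation \eqref{eq9}, using independence of $\bu^{(0)}$ and $\bu^{(1)}$ and the standard mean and variance of a standardized truncated normal, with the same sign bookkeeping via $(2y_i-1)^2=1$ and the symmetry of $\phi$. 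Since \eqref{eq9} is itself just the conditional decomposition of Theorem \ref{teo_2} made explicit, the two routes involve identical truncated-normal algebra and yield \eqref{eq10} in the same way. Your treatment of the predictive probability coincides with the paper's: iterated expectations plus the identity $E[\Phi(a+bW)]=\Phi[a(1+b^2)^{-1/2}]$ for Gaussian $W$ (Lemma 7.1 in the reference cited by the paper), giving \eqref{eq11}.
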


Hence, unlike for inference under the exact posterior \citep{durante_2019}, the calculation of relevant approximate moments such as those in \eqref{eq10}, only requires evaluation of cumulative distribution functions of univariate Gaussians. Similarly, the predictive probabilities in  \eqref{eq11} can be easily evaluated via efficient Monte Carlo methods based on samples from $n$ independent univariate truncated normals with density $q^*_{\textsc{pfm}}(z_i)$ $(i=1, \ldots, n)$. Moreover, leveraging \eqref{eq9}, samples from $q^*_{\textsc{pfm}}(\bbeta)$ can be directly obtained via a linear combination between realizations from a $p$-variate Gaussian and from $n$ univariate truncated normals, as shown in Algorithm \ref{algo3}.  

\begin{algorithm*}
	\caption{\mbox{Strategy to sample a value from the approximate \textsc{sun} posterior in Corollary \ref{cor_SUN}.}}
	\label{algo3}
	\vspace{3pt}
	{[\bf 1]} {\small Draw $\bu^{(0)} \sim \mbox{N}_p(\bV\bX^{\intercal} \bmu^*,\bV)$.}\\
	{[\bf 2]} {\small Draw $u^{(1)}_{i} \sim \textsc{tn}[0,1,\{-(2y_i-1)\mu^*_i/\sigma^*_i, +\infty\} ]$  $(i=1, \ldots, n)$.}\\
	{[\bf 3]}  {\small Set $\bbeta=\bu^{(0)}+\bV\bX^{\intercal}\bar{\bY}\bsigma^* \bu^{(1)}$, with $\bu^{(1)}=(u^{(1)}_{1}, \ldots, u^{(1)}_{n})^\intercal$.}\\
	\vspace{3pt}
	
	{\bf Output:} {\small One sample of $\bbeta$ from the approximate posterior distribution with density $q^*_{\textsc{pfm}}(\bbeta)$.} 
\end{algorithm*}

This strategy  allows to study complex approximate functionals  through simple Monte Carlo methods. If the focus is only on  $q^*_{\textsc{pfm}}(\beta_j)$ $(j=1,\dots,p)$, one can avoid simulating from the joint $p$-variate Gaussian in Algorithm \ref{algo3} and just sample from the marginals of $\bu^{(0)}$ in the \textsc{sun} additive  representation  to produce draws from $q^*_{\textsc{pfm}}(\beta_j)$ $(j=1,\dots,p)$ at an  $\mathcal{O}(pn \cdot\min\{p,n\})$ cost. 

We conclude the presentation of our partially-factorized solution by studying its properties in high-dimensional settings as $p \rightarrow \infty$. As discussed in Section \ref{sec_2.1}, classical mean-field routines  \citep{consonni_2007} yield poor Gaussian approximations of the posterior density in high dimensions, which  affect quality of inference and prediction. By relaxing the mean-field assumption we obtain, instead, a skewed approximate density matching the exact posterior for $\bbeta$ when $p \rightarrow \infty$, as stated in Theorem \ref{teo_3}. 
We study asymptotic properties under the following assumption. 

\begin{assumption}
	Assume  $x_{ij}$ $(i=1, \ldots, n;j=1, \ldots, p)$, are random variables with $E(x_{ij})=0${,} ${E}(x_{ij}^2)=\sigma_x^2$, $E(x_{ij}^4)\leq c_x$ for some $c_x< \infty$, and such that for any $i=1,\dots,n$ and $i'\neq i$: 
	\begin{description}
		\item{(a) $p^{-1}\sum_{j=1}^p \mbox{\normalfont cov}(x_{ij},x_{i'j})\to 0$ as $p\to\infty$,}
		\item{(b) $p^{-2}\sum_{j=1}^p \sum_{j'\neq j} \mbox{\normalfont cov}(x_{ij}^2,x_{ij'}^2)\to 0$ as $p\to\infty$,}
		\item{(c) $p^{-2}\sum_{j=1}^p \sum_{j'\neq j} \mbox{\normalfont cov}(x_{ij}x_{i'j},x_{ij'}x_{i'j'})\to 0$ as $p\to\infty$.}
	\end{description}	
	\label{ass_3}
\end{assumption}

In Assumption \ref{ass_3} we do not require independence among predictors $\{x_{ij}\}$, but rather we impose asymptotic conditions on their covariance. Such conditions are trivially satisfied when predictors are independent and thus Assumption \ref{ass_3} is strictly weaker than Assumption \ref{ass_1} and incorporates the latter. The strongest assumption is arguably \ref{ass_3}(a), which requires the average covariance between   units to go to 0 as $p\to\infty$.  This assumption does not prevent statistical units from being strongly associated on a subset of predictors, but rather requires those to be a minority among all predictors, which is reasonable  in large $p$ scenarios. Conditions \ref{ass_3}(b)--\ref{ass_3}(c), which deal with covariance across predictors, are arguably much weaker and we expect them to be satisfied in almost any common practical scenario. In particular, these conditions allow for pairs of predictors with arbitrarily strong covariance, while requiring the number of such pairs to be $o(p^2)$ as $p\to\infty$. Also, the empirical results in Sections \ref{sec_simu} and \ref{sec_3} suggest that asymptotic exactness of our strategy for large $p$ might hold even beyond Assumption \ref{ass_3}; see also the Supplementary Material.

\begin{theorem}
	Under Assumptions \ref{ass_2}--\ref{ass_3},  we have that  $\textsc{kl}\{q_{\textsc{pfm}}^{*}(\bbeta) \mid \mid p(\bbeta \mid \by)\}\rightarrow 0$ in probability, as $p \rightarrow \infty$.
	\label{teo_3}
\end{theorem}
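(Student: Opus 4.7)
The plan is to reduce the target KL from $\bbeta$ to the augmented vector $\bz$ and then to exploit asymptotic independence of the $z_i$'s under Assumption~\ref{ass_3}. By the chain rule of the KL divergence already used to prove Theorem~\ref{teo_2}, and the fact that $q^*_{\textsc{pfm}}(\bbeta\mid\bz)=p(\bbeta\mid\bz)=p(\bbeta\mid\bz,\by)$, one has $\textsc{kl}[q^*_{\textsc{pfm}}(\bbeta,\bz)\mid\mid p(\bbeta,\bz\mid\by)]=\textsc{kl}[q^*_{\textsc{pfm}}(\bz)\mid\mid p(\bz\mid\by)]$. Combining this with the data-processing inequality applied to marginalization over $\bz$ gives $\textsc{kl}[q^*_{\textsc{pfm}}(\bbeta)\mid\mid p(\bbeta\mid\by)]\leq\textsc{kl}[q^*_{\textsc{pfm}}(\bz)\mid\mid p(\bz\mid\by)]$, so it suffices to show the right-hand side converges to $0$ in probability.

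Since $q^*_{\textsc{pfm}}(\bz)=\prod_i q^*_{\textsc{pfm}}(z_i)$ is, by construction, the KL-projection of $p(\bz\mid\by)$ onto fully factorized densities on $\bz$ (a consequence of the same chain-rule decomposition applied within $\mathcal{Q}_{\textsc{pfm}}$), for any convenient product density $\tilde q(\bz)=\prod_i\tilde q_i(z_i)$ we have $\textsc{kl}[q^*_{\textsc{pfm}}(\bz)\mid\mid p(\bz\mid\by)]\leq\textsc{kl}[\tilde q(\bz)\mid\mid p(\bz\mid\by)]$. Taking $\tilde q_i(z_i)=p(z_i\mid\by)$ makes this upper bound equal to the multi-information of $(z_1,\dots,z_n)$ under the truncated Gaussian $p(\bz\mid\by)\propto\phi_n(\bz;\bH)\prod_i 1[(2y_i-1)z_i>0]$ with $\bH=\bI_n+\nu_p^2\bX\bX^{\intercal}$. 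This quantity can be written explicitly as a sum of (i) a $\log\det(\bar{\bH})$ contribution, where $\bar{\bH}$ is the correlation matrix associated with $\bH$, (ii) an additive contribution involving $\bH^{-1}-\mbox{diag}(\bH^{-1})$ evaluated at the truncated second moments, and (iii) a log-ratio between the $n$-variate orthant probability $C(\bH)=\mbox{pr}[(2y_i-1)Z_i>0\ \forall i; Z\sim N(0,\bH)]$ and the product of its univariate counterparts. All three contributions vanish smoothly as $\bar{\bH}\to\bI_n$.

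The core technical task is therefore to establish $\bar{\bH}\stackrel{p}{\rightarrow}\bI_n$ and to convert this into vanishing of the truncated-Gaussian KL. For each fixed pair $i\neq i'$, Assumption~\ref{ass_3}(a) gives $E[p^{-1}\sum_j x_{ij}x_{i'j}]\to 0$, while Assumption~\ref{ass_3}(c) controls the variance of $p^{-1}\sum_j x_{ij}x_{i'j}$ through the decay of cross-correlations of the products $x_{ij}x_{i'j}$ across $j$, yielding convergence in probability to $0$; similarly, Assumption~\ref{ass_3}(b) gives $p^{-1}\sum_j x_{ij}^2\stackrel{p}{\rightarrow}\sigma_x^2$. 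Combined with Assumption~\ref{ass_2} on $\nu_p^2$, this implies $\mbox{corr}(z_i,z_{i'})\stackrel{p}{\rightarrow}0$ under both the $\alpha=\infty$ regime (diagonals of $\bH$ grow like $p\nu_p^2$, off-diagonals like $\sqrt{p}\nu_p^2$) and the $\alpha<\infty$ regime (diagonals are $O(1)$, off-diagonals $O(p^{-1/2})$). Because only $\binom{n}{2}$ pairs need to be controlled with $n$ fixed, the continuous mapping theorem delivers $\bar{\bH}\stackrel{p}{\rightarrow}\bI_n$. The hardest step will be the last one: upgrading covariance convergence to KL convergence requires showing continuity in $\bH$ of $\log C(\bH)$ and of the truncated second moments in a neighbourhood of the diagonal, which I would handle using standard regularity results for Gaussian orthant probabilities together with a uniform-integrability argument to promote in-probability convergence of the covariance to in-probability convergence of the three KL summands. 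Chaining the two reductions then yields $\textsc{kl}[q^*_{\textsc{pfm}}(\bbeta)\mid\mid p(\bbeta\mid\by)]\stackrel{p}{\rightarrow}0$.
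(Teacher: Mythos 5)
Your overall architecture coincides with the paper's: you reduce $\textsc{kl}[q^*_{\textsc{pfm}}(\bbeta)\,\|\,p(\bbeta\mid\by)]$ to $\textsc{kl}[q^*_{\textsc{pfm}}(\bz)\,\|\,p(\bz\mid\by)]$ via the chain rule and the exactness $q^*_{\textsc{pfm}}(\bbeta\mid\bz)=p(\bbeta\mid\bz)$, you bound the latter by the KL of an arbitrary product density using the optimality of $q^*_{\textsc{pfm}}(\bz)$ within the factorized class, you prove $(1+\sigma_x^2p\nu_p^2)^{-1}(\bI_n+\nu_p^2\bX\bX^\intercal)\stackrel{p}{\to}\bI_n$ under Assumptions \ref{ass_2}--\ref{ass_3} (your middle paragraph essentially reproves the paper's covariance lemma), and you aim to convert covariance convergence into KL convergence. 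Those reductions are correct.

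The genuine problem lies in your choice of comparison density and the explicit formula you attach to it. The marginals $p(z_i\mid\by)$ of the $n$-variate truncated normal $p(\bz\mid\by)$ are not univariate truncated normals: each is proportional to $\phi(z_i;H_{ii})$ multiplied by an $(n-1)$-variate Gaussian orthant probability that depends on $z_i$, restricted to a half-line. Hence $\textsc{kl}[\prod_i p(z_i\mid\by)\,\|\,p(\bz\mid\by)]$ does not decompose into your items (i)--(iii): the normalizers and entropies of these marginals are not ``univariate counterparts'' of $C(\bH)$, the relevant moments are not truncated-normal moments, and the continuity-in-$\bH$ argument you defer to ``standard regularity results'' would have to control these non-elementary densities --- which is precisely the hardest analytic step, and it is left as an assertion. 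The paper avoids both difficulties by comparing with the trivially explicit product density $\textsc{tn}(\bzero,(1+\sigma_x^2p\nu_p^2)\bI_n,\mathbb{A})$, rescaling each $z_i$ by $(1+\sigma_x^2p\nu_p^2)^{-1/2}$ (KL is invariant under bijections) so the bound becomes $\textsc{kl}[\textsc{tn}(\bzero,\bI_n,\mathbb{A})\,\|\,\textsc{tn}(\bzero,(1+\sigma_x^2p\nu_p^2)^{-1}(\bI_n+\nu_p^2\bX\bX^\intercal),\mathbb{A})]$, and then proving a dedicated dominated-convergence lemma showing that the KL between truncated normals on a fixed orthant vanishes when the means tend to $\bzero$ and the covariances tend to $\bI_n$; the continuous mapping theorem then finishes the argument. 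If you replace your $\tilde q$ by such an explicit product of univariate truncated normals (or supply the missing domination argument for the true marginals), your proof closes; as written, the decomposition step is incorrect and the final continuity step is missing.
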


Hence, in the high-dimensional settings where current computational strategies are impractical  \citep{chopin_2017}, inference and prediction under the proposed partially-factorized variational approximation is practically feasible, and provides essentially the same results as those obtained under the exact posterior.  For instance, Corollary \ref{cor_predictive} states that, unlike for classical mean-field strategies,  our solution is guaranteed to provide increasingly accurate approximations of posterior predictive probabilities as $p\to\infty$. 

\begin{corollary}
	Let  $\mbox{\normalfont pr}(y_{\textsc{new}}=1\mid \by)=\int \Phi(\bx^{\intercal}_{\textsc{new}} \bbeta)p(\bbeta \mid \by)\mbox{\normalfont d}\bbeta$ be the exact posterior predictive probability for a new observation with predictors $\bx_{\textsc{new}} \in \mathbb{R}^p$. Then, under Assumptions  \ref{ass_2}--\ref{ass_3}, ${\sup_{\bx_{\textsc{new}}\in\mathbb{R}^p}}|\mbox{\normalfont pr}_{\textsc{pfm}}(y_{\textsc{new}}= 1\mid \by)-\mbox{\normalfont pr}(y_{\textsc{new}}= 1\mid \by)|\rightarrow 0$ in probability,  as $p\to\infty$. 	\label{cor_predictive}
\end{corollary}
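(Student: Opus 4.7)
The plan is to bound the quantity of interest by the total variation distance between $q^*_{\textsc{pfm}}(\bbeta)$ and $p(\bbeta \mid \by)$, invoke Pinsker's inequality to connect this with the Kullback--Leibler divergence, and then conclude via Theorem \ref{teo_3}. The key point that makes the argument compatible with a supremum over $\bx_{\textsc{new}}$ is that the function inside the expectation, $\Phi(\bx_{\textsc{new}}^{\intercal}\bbeta)$, is uniformly bounded by $1$ regardless of $\bx_{\textsc{new}}$.

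First, for any $\bx_{\textsc{new}} \in \Re^p$ the integrand $\Phi(\bx_{\textsc{new}}^{\intercal}\bbeta)$ takes values in $[0,1]$, so by the standard duality between total variation distance and expectations of bounded functions,
$$
\bigl|\mbox{\normalfont pr}_{\textsc{pfm}}(y_{\textsc{new}}=1\mid \by)-\mbox{\normalfont pr}(y_{\textsc{new}}=1\mid \by)\bigr|
= \Bigl|{E}_{q^*_{\textsc{pfm}}(\bbeta)}[\Phi(\bx_{\textsc{new}}^{\intercal}\bbeta)]-{E}_{p(\bbeta\mid\by)}[\Phi(\bx_{\textsc{new}}^{\intercal}\bbeta)]\Bigr|
\leq 2\,\mbox{\normalfont TV}\bigl(q^*_{\textsc{pfm}}(\bbeta),\,p(\bbeta\mid\by)\bigr),
$$
where the right-hand side does not depend on $\bx_{\textsc{new}}$. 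Taking the supremum over $\bx_{\textsc{new}}\in\Re^p$ preserves the bound.

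Next, I would invoke Pinsker's inequality, $\mbox{\normalfont TV}(q,p)\leq \sqrt{\textsc{kl}(q\mid\mid p)/2}$, to obtain
$$
\sup_{\bx_{\textsc{new}}\in\Re^p}\bigl|\mbox{\normalfont pr}_{\textsc{pfm}}(y_{\textsc{new}}=1\mid \by)-\mbox{\normalfont pr}(y_{\textsc{new}}=1\mid \by)\bigr|
\leq \sqrt{2\,\textsc{kl}\bigl[q^*_{\textsc{pfm}}(\bbeta)\mid\mid p(\bbeta\mid\by)\bigr]}.
$$
By Theorem \ref{teo_3}, $\textsc{kl}[q^*_{\textsc{pfm}}(\bbeta)\mid\mid p(\bbeta\mid\by)]\stackrel{p}{\to}0$ as $p\to\infty$ under Assumptions \ref{ass_2}--\ref{ass_3}, and the continuous mapping theorem applied to the square-root function then yields the claimed convergence in probability to $0$.

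There is essentially no hard step: the argument amounts to assembling Pinsker's inequality, the uniform boundedness of $\Phi$, and the already-established KL convergence of Theorem \ref{teo_3}. The only mild subtlety worth flagging is that the supremum over $\bx_{\textsc{new}}$ does not worsen the rate, because the boundedness of $\Phi$ (rather than, say, a moment bound on $\bbeta$) is what controls the discrepancy and this bound is independent of $\bx_{\textsc{new}}$. If one instead tried to approximate the predictive probability by Taylor-expanding $\Phi$ around some value of $\bbeta$, controlling the tails uniformly in $\bx_{\textsc{new}}$ would be delicate, which is why routing the argument through total variation is the natural choice.
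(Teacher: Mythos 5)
Your proposal is correct and follows essentially the same route as the paper, which packages the Pinsker-plus-bounded-test-function step as Lemma \ref{lemma:KL_predictive} (giving $\textsc{kl}\geq 2|\mbox{pr}_{q_1}-\mbox{pr}_{q_2}|^2$ uniformly in $\bx_{\textsc{new}}$) and then applies Theorem \ref{teo_3} exactly as you do. The only difference is an immaterial constant: chaining $|{E}_{q}\Phi-{E}_{p}\Phi|\leq 2\,\mbox{TV}$ with $\mbox{TV}\leq\sqrt{\textsc{kl}/2}$ gives $\sqrt{2\,\textsc{kl}}$ rather than the paper's sharper $\sqrt{\textsc{kl}/2}$, which does not affect the conclusion.
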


Corollary \ref{cor_predictive} implies that the error made by the proposed method in approximating the posterior predictive probabilities goes to $0$ as $p\to\infty$, regardless of the choice of $\bx_{\textsc{new}} \in \mathbb{R}^p$. On the contrary, since by Theorem \ref{teo_1}(c) $\sup_{\bx_{\textsc{new}}\in\mathbb{R}^p}|\mbox{\normalfont pr}_{\textsc{mf}}(y_{\textsc{new}}=1\mid \by)-\mbox{\normalfont pr}(y_{\textsc{new}}=1\mid \by)|$ is bounded away from $0$ almost surely as $p\to\infty$, there always exists, for every large $p$, some $\bx_{\textsc{new}}$ such that the corresponding posterior predictive probability $\mbox{pr}_{\textsc{mf}}(y_{\textsc{new}}=1\mid \by)$ approximated under classical mean-field variational Bayes does not match the exact one.

Finally, as stated in Theorem \ref{teo_4},  the number of iterations required by  Algorithm \ref{algo2} to produce the optimal solution $q_{\textsc{pfm}}^*(\bbeta)$ converges to $1$ as $p \rightarrow \infty$.

\begin{theorem}
	Let $q^{(t)}_{\textsc{pfm}}(\bbeta)$  be the approximate density for  $\bbeta$ produced at step $(t)$ by Algorithm \ref{algo2}. Then, under Assumptions \ref{ass_2}--\ref{ass_3}, $\textsc{kl}\{q_{\textsc{pfm}}^{(1)}(\bbeta) \mid \mid p(\bbeta \mid \by)\}\rightarrow 0$ in probability, as $p \rightarrow \infty$.
	\label{teo_4}
\end{theorem}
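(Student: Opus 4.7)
The plan is to bound the quantity of interest by the chain rule for KL divergence, and then to reduce everything to showing that the parameters produced after one sweep of Algorithm~\ref{algo2} are asymptotically equivalent to the optimal ones analyzed in Theorem~\ref{teo_3}.

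First, I would observe that Algorithm~\ref{algo2} always sets $q^{(t)}_{\textsc{pfm}}(\bbeta\mid\bz)=p(\bbeta\mid\bz)=\phi_p(\bbeta-\bV\bX^{\intercal}\bz;\bV)$, since this factor of the joint approximation is fixed by the decomposition in Theorem~\ref{teo_2} and only the $q_{\textsc{pfm}}(z_i)$ factors are updated iteratively. Combining the chain rule for KL divergence with the data-processing inequality therefore yields
\begin{equation*}
\textsc{kl}[q^{(1)}_{\textsc{pfm}}(\bbeta)\mid\mid p(\bbeta\mid\by)]\leq \textsc{kl}[q^{(1)}_{\textsc{pfm}}(\bbeta,\bz)\mid\mid p(\bbeta,\bz\mid\by)]=\textsc{kl}[q^{(1)}_{\textsc{pfm}}(\bz)\mid\mid p(\bz\mid\by)],
\end{equation*}
so it suffices to control the right-hand side.

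Second, I would show that after a single coordinate-ascent sweep the variational means $\mu^{(1)}_i$ vanish in probability. Using the Woodbury identity $\bV\bX^{\intercal}=\nu_p^2\bX^{\intercal}(\bI_n+\nu_p^2\bX\bX^{\intercal})^{-1}$ and the LLN-style bounds on the entries of $\nu_p^2\bX\bX^{\intercal}$ that are already invoked in the proofs of Theorems~\ref{teo_1} and \ref{teo_3} under Assumption~\ref{ass_3}, one obtains that $\bX\bV\bX^{\intercal}$ converges in probability to $[\alpha\sigma_x^2/(1+\alpha\sigma_x^2)]\bI_n$. In particular, the diagonal entries $\bx_i^{\intercal}\bV\bx_i$ and hence $\sigma_i^{*2}=(1-\bx_i^{\intercal}\bV\bx_i)^{-1}\stackrel{p}{\rightarrow}1+\alpha\sigma_x^2$ remain bounded, while $\max_{i\neq i'}|\sigma_i^{*2}\bx_i^{\intercal}\bV\bx_{i'}|\stackrel{p}{\rightarrow}0$. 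Initializing Algorithm~\ref{algo2} with $\bar{\bz}^{(0)}=0$ then gives $\mu^{(1)}_1=0$ exactly and $\bar{z}^{(1)}_1=(2y_1-1)\sigma_1^*\phi(0)/\Phi(0)$, bounded in probability. Inducting over $i=2,\dots,n$, at every inner step $\mu^{(1)}_i$ is a linear combination of a fixed number of bounded quantities weighted by coefficients that vanish in probability, hence $\bmu^{(1)}\stackrel{p}{\rightarrow}0$.

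Third, I would reduce the bound on $\textsc{kl}[q^{(1)}_{\textsc{pfm}}(\bz)\mid\mid p(\bz\mid\by)]$ to the analysis already carried out for Theorem~\ref{teo_3}. The approximating density $q^{(1)}_{\textsc{pfm}}(\bz)=\prod_{i=1}^n q^{(1)}_{\textsc{pfm}}(z_i)$ is a product of univariate truncated normals with means $\mu^{(1)}_i\stackrel{p}{\rightarrow}0$, variances $\sigma_i^{*2}\stackrel{p}{\rightarrow}1+\alpha\sigma_x^2$, and truncation to the correct half-lines, while the target $p(\bz\mid\by)$ is an $n$-variate truncated normal with mean $0$, covariance $\bI_n+\nu_p^2\bX\bX^{\intercal}\stackrel{p}{\rightarrow}(1+\alpha\sigma_x^2)\bI_n$, and the same truncation region. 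Both densities therefore converge in probability to the same limiting product of $n$ independent univariate truncated normals, and the KL divergence between them goes to $0$. This is precisely the structure underlying the proof of Theorem~\ref{teo_3}; the only modification is that $\bmu^*$ is replaced by $\bmu^{(1)}$, and since both vanish in probability the argument carries through unchanged.

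The main obstacle is this last step: verifying that the bounds appearing in the proof of Theorem~\ref{teo_3} are robust to substituting $\bmu^{(1)}$ for $\bmu^*$. This amounts to checking that those bounds depend on $\bmu$ only through quantities that are continuous at $0$ and grow tamely in $\|\bmu\|$; since the log-density of a univariate truncated normal is smooth in its location on bounded sets and the associated Mills-ratio corrections are uniformly Lipschitz there, this verification is mostly mechanical but must be done carefully. A subtler point is the role of the initialization: the argument uses $\bar{\bz}^{(0)}=0$, but for any initialization remaining bounded in probability the vanishing coupling coefficients dominate, so the same conclusion holds.
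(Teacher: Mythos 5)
Your overall route is the same as the paper's: bound $\textsc{kl}[q^{(1)}_{\textsc{pfm}}(\bbeta)\mid\mid p(\bbeta\mid\by)]$ by $\textsc{kl}[q^{(1)}_{\textsc{pfm}}(\bz)\mid\mid p(\bz\mid\by)]$ via the chain rule and the fact that the conditional factor is exact, show the one-sweep locations are asymptotically negligible, and conclude by the same truncated-normal KL-continuity argument used for Theorem \ref{teo_3}. However, there is a genuine gap in your second and third steps in the case $\alpha=\lim_p p\nu_p^2=\infty$, which Assumption \ref{ass_2} explicitly allows (it is the case of constant prior variance $\nu_p^2=\nu^2$, arguably the most relevant one). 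There, $\sigma_i^{*2}=(1-\bx_i^{\intercal}\bV\bx_i)^{-1}$ diverges at rate $1+\sigma_x^2p\nu_p^2$, so your claim that $\bar{z}^{(1)}_1=(2y_1-1)\sigma_1^*\phi(0)/\Phi(0)$ is bounded in probability is false, and in the induction over $i$ the already-updated values $\bar{z}^{(1)}_{i'}$, $i'<i$, grow like $(1+\sigma_x^2p\nu_p^2)^{1/2}$ while the coupling coefficients $\sigma_i^{*2}\bx_i^{\intercal}\bV\bx_{i'}$ are only $o_p(1)$ with no rate available under Assumption \ref{ass_3}; hence the products need not vanish and you cannot conclude $\mu_i^{(1)}\stackrel{p}{\to}0$. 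Relatedly, your step three is ill-posed in this case: $\sigma_i^{*2}$ and $\bI_n+\nu_p^2\bX\bX^{\intercal}$ do not converge to $(1+\alpha\sigma_x^2)$-scaled limits when $\alpha=\infty$, so there is no common limiting truncated normal as stated.

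The paper repairs exactly this point: it first rescales $\bz$ by $(1+\sigma_x^2p\nu_p^2)^{-1/2}$ (using invariance of KL under bijections), so that the target becomes $\textsc{tn}[0,(1+\sigma_x^2p\nu_p^2)^{-1}(\bI_n+\nu_p^2\bX\bX^{\intercal}),\mathbb{A}]$ with covariance tending to $\bI_n$ in both regimes, and then proves only the normalized statement $(1+\sigma_x^2p\nu_p^2)^{-1/2}\mu_i^{(1)}\stackrel{p}{\to}0$ (its Lemma on $\mu_i^{(1)}$), via a Mills-ratio bound $|\bar{z}^*|\le C(|\mu^*|+\sigma^*)$ and an induction in which the updated $\bar{z}^{(1)}_{i'}$ are controlled only after the same rescaling; the unnormalized convergence $\mu_i^{(1)}\stackrel{p}{\to}0$ is proved (and needed) only when $\alpha\in(0,\infty)$. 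Your final concern about robustness of the Theorem \ref{teo_3} machinery to nonzero means is not actually an obstacle, since the paper's KL-continuity lemma already allows mean sequences tending to $0$ together with covariances tending to $\bI_n$ (and supplies the domination argument that mere convergence of densities does not); but without the rescaling and the normalized control of $\bmu^{(1)}$ and $\bar{\bz}^{(1)}$, your argument does not go through when $\alpha=\infty$. Your remark that any fixed (or bounded-in-probability) initialization suffices is correct in spirit and matches the paper, but it again rests on the unjustified boundedness of the updated $\bar{z}^{(1)}_{i'}$, so it needs the same normalization to be made rigorous.
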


Theorem \ref{teo_4} suggests that Algorithm \ref{algo2} needs increasingly less iterations to converge as $p$ grows, requiring essentially one iteration as $p \rightarrow \infty$.  This is coherent with our numerical studies, where very few iterations are sufficient to reach convergence  in large $p$ cases. The computational complexity of Algorithm \ref{algo2} is thus equal to that of a fixed number of iterations, which is dominated by an $\mathcal{O}(pn {\cdot}\min\{p,n\})$ pre-computation cost derived in the Supplementary Material, where we also highlight how the evaluation of the functionals in Proposition \ref{prop1} can be achieved at the same cost. More complex functionals of the approximate posterior can be instead obtained at higher cost via Monte Carlo methods  based on Algorithm~\ref{algo3}.  As discussed in the Supplementary Material, Algorithm~\ref{algo1} has the same  $\mathcal{O}(pn {\cdot}\min\{p,n\})$ pre-computation cost and $\mathcal{O}(n {\cdot}\min\{p,n\})$ per-iteration cost of Algorithm~\ref{algo2}. However, recalling  Sections \ref{sec_2.1}--\ref{sec_2.2}, our partially-factorized solution produces substantially more accurate approximations than classical mean-field. In Sections \ref{sec_simu}--\ref{sec_3}, we provide quantitative evidence for these arguments, and discuss how the theory in  Section \ref{sec_2} matches closely the empirical behavior observed in simulations and applications.

\section{Simulation studies}\label{sec_simu}
To illustrate the advantages of the partially-factorized approximation, we consider simulation studies that assess empirical evidence in finite dimension for the asymptotic results in Section \ref{sec_2}, and quantify the magnitude of the improvements in scalability and accuracy relative to state-of-the-art competitors. Consistent with these goals and with our main focus on high dimensions, we compare performance against mean-field variational Bayes \citep{consonni_2007}, which provides the only widely-implemented alternative, among those discussed in Section~\ref{sec_1}, that can scale to high-dimensional settings as effectively as our method. We refer to Table \ref{table1}  for  empirical evidence on the computational intractability in high dimensions of the other  relevant strategies discussed in Section~\ref{sec_1} \citep{hoff_2014,chopin_2017,durante_2019}.

Consistent with the above goals, we generate data under assumptions compatible with those in Section \ref{sec_2}. In particular, we simulate each  $y_{i} \in \{0;1\}$, for $i=1, \ldots, n$, from a $\mbox{Bern}\{\Phi(\bx^{\intercal}_i\bbeta)\}$, where $\bx_i=(1, x_{i2}, \ldots, x_{ip})^{\intercal}$ is a $p$-dimensional vector of predictors, comprising an intercept term and $p-1$ covariates simulated from standard normal $\mbox{N}(0,1)$ variables, independently for $i=1, \ldots, n$ and $j=2, \ldots, p$, and then standardized to have mean 0 and standard deviation 0.5, as suggested in \citet{gelman_2008} and \citet{chopin_2017}. Following the remarks in  \citet{gelman_2008}, the coefficients $\beta_1, \ldots, \beta_p$ in $\bbeta$ are instead simulated from $p$ independent uniforms on the range $[-5,5]$. To carefully illustrate performance under varying $(n,p)$ settings, we simulate such a dataset under three different sample sizes $n \in \{50; 100; 250\}$ and, for each $n$, we evaluate  performance at varying ratios $p/n \in \{0.5; 1; 2; 4\}$. These dimensions for $n$ and $p$ are much lower relative to those that can be easily handled under the partially-factorized and mean-field strategies analyzed. In fact, such moderate dimensions are required to make inference still feasible also under the widely-used \textsc{stan} implementation of Hamiltonian Monte Carlo \citep{hoff_2014}, which serves here as a benchmark to assess the accuracy of  the two variational methods in approximating key functionals of the exact posterior.

In performing Bayesian inference for the above scenarios, we  implement the probit regression model in~\eqref{eq1} under independent weakly informative Gaussian priors with mean $0$ and standard deviation $5$ for each $\beta_j$ $(j=1, \ldots, p)$ \citep{gelman_2008,chopin_2017}. Such priors are then updated with the probit likelihood of the simulated data to obtain a posterior for $\bbeta$ in each of the twelve $(n,p)$ settings, which we approximate with both mean-field variational Bayes and our novel partially-factorized solution. Figure \ref{fsimu1} illustrates the quality of these two approximations with a focus on three key functionals of the posterior for $\bbeta$, taking as benchmark the Monte Carlo estimates of such functionals computed from 20000 posterior samples of $\bbeta$, under the \texttt{rstan} implementation of the Hamiltonian Monte Carlo. The latter strategy leads to accurate estimates of the exact posterior functionals but, as highlighted in Section \ref{sec_1} and in Table~\ref{table1}, is clearly impractical in high dimensions. Therefore, it is of interest to quantify how far the approximations provided by the two more scalable variational methods are from  the \textsc{stan} Monte Carlo estimates, under different $(n,p)$ settings. In Figure \ref{fsimu1}, this assessment focuses on the first two posterior moments of each $\beta_j$ $(j=1, \ldots,p)$, and on the out-of-sample predictive probabilities of newly simulated  units $i'=1, \ldots, 100$.  For such functionals, we represent the median and the quartiles of the absolute differences between the corresponding \textsc{stan} Monte Carlo estimates and  the approximations provided by the two variational methods, at varying combinations of $(n,p)$. In the first and second panel, the three quartiles are computed from the $p$ absolute differences among the Monte Carlo and approximate estimates for the two posterior moments of each $\beta_j$ $(j=1, \ldots, p)$, respectively, whereas in the third panel these summary measures are computed from the $100$ absolute differences for the   predictive probabilities of the newly simulated units $i'=1, \ldots, 100$.

\begin{figure}[t]
	\captionsetup{font={small}}
	\centering
	\includegraphics[width=\linewidth]{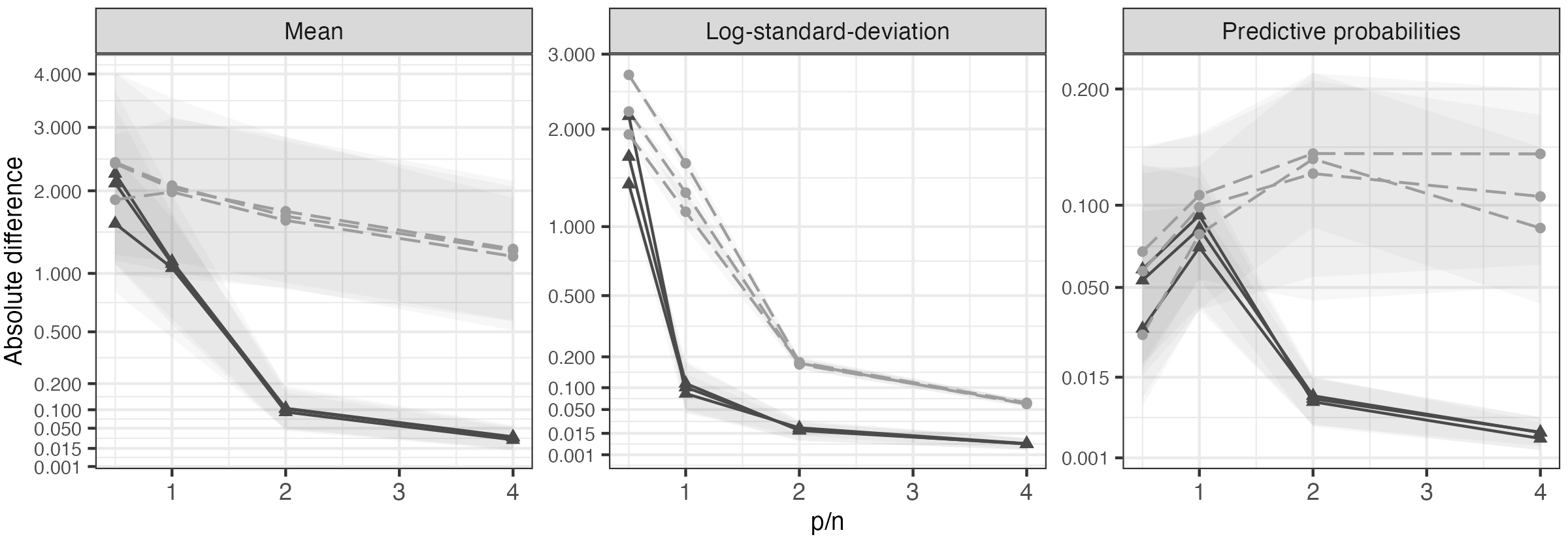}
	\caption{Accuracy in the approximation for three key  functionals of the posterior distribution for $\bbeta$. Trajectories for the median of  the  absolute differences between an accurate but expensive Monte Carlo estimate of such functionals and their approximations provided by  partially-factorized variational Bayes (dark grey solid lines) and mean-field variational Bayes (light grey dashed lines), respectively, for increasing values of the ratio $p/n$, i.e.,  $p/n \in \{0.5;1;2;4\}$. The different trajectories for each of the two methods correspond to three different settings of the sample size $n$,  i.e.,  $n  \in \{50;100;250\}$, whereas the grey areas denote the first and third quartiles computed from the absolute differences. For graphical purposes we consider a square-root scale for the $y$-axis.}
	\label{fsimu1}
\end{figure}

According to Figure \ref{fsimu1}, the partially-factorized solution uniformly improves the accuracy of  the mean-field approximation, and the magnitude of such gains  increases as $p$ grows relative to $n$. Notably, the median and the two quartiles of the absolute differences between the \textsc{stan}  estimates and  those provided by our novel partially-factorized approximation rapidly shrink toward $0$ for all the functionals under analysis, when $p>n$. This means that in such settings the proposed strategy matches almost perfectly the functionals of interest of the exact posterior, thus providing the same quality in inference and prediction as the one obtained, at massively higher computational costs, under state-of-the-art  Hamiltonian Monte Carlo implementations. For instance, while our approximation can be obtained in fractions of seconds in the setting $(n=250, p=1000)$, \textsc{stan} requires several hours, even when considering parallel implementations. On the contrary, the classical mean-field approximation always leads to biased estimates, even when $p>n$, not only for the scales, but also for the locations and the predictive probabilities. As  we will clarify in the  application in Section \ref{sec_3}, this bias can have dramatic consequences for inference and prediction. 

All the above empirical findings are fully coherent with the theoretical results presented in  Sections \ref{sec_2.1}--\ref{sec_2.2} and provide consistent evidence that the behaviors proved in asymptotic regimes are clearly visibile even in finite-dimensional $p>n$ settings, without requiring $p$ to be massively higher than $n$. Also Theorem \ref{teo_4} finds evidence in the simulation studies since, for all the $p>n$ settings under analysis, Algorithm~\ref{algo2} requires at most $6$ iterations to reach convergence, whereas Algorithm~\ref{algo1}  needs on average $103$ iterations. Finally, the almost perfect overlap in Figure  \ref{fsimu1}  of the trajectories for the different $n$ settings suggests that the quality of the two variational approximations may depend on $(p,n)$ just through the ratio $p/n$; see Section \ref{sec_4} for further discussion. 

As shown in Figure \ref{fsimu2} of the Supplementary Material, the above findings also apply to general data structures beyond the Assumptions~\ref{ass_1}--\ref{ass_3} we require to prove theory in Section~\ref{sec_2}. In fact, we obtain similar conclusions also when inducing either structured decaying pairwise correlation $\mbox{cor}(x_{ij},x_{i'j})=0.75^{|i-i'|}$ over the rows of $X$, or strong pairwise correlation $\mbox{cor}(x_{ij},x_{ij'})=0.75$ across the columns of $X$. Since in these simulations $\mbox{var}(x_{ij})=1$, then correlations coincide with covariances. Such results motivate additional explorations in real-world studies in Section~\ref{sec_3}.

\section{High-dimensional probit regression for medical data}\label{sec_3}
As shown in \citet{chopin_2017},  state-of-the-art computational methods for Bayesian binary regression, such as Hamiltonian Monte Carlo \citep{hoff_2014}, mean-field variational Bayes \citep{consonni_2007} and expectation-propagation \citep{chopin_2017} are feasible and powerful procedures in small-to-moderate $p$ settings, but tend to become rapidly impractical or inaccurate in large $p$ contexts, such as $p > 1000$.  The overarching focus of the present article is to close this gap and, consistent with this aim, we start by considering a large $p$ study to quantify the drawbacks encountered by the aforementioned strategies  along with the improvements provided by the proposed partially-factorized method in real-world applications.

Following the above remarks, we model  presence or absence of the Alzheimer's disease in its early stages as a function of demographic data, genotype and assay results; the source dataset is available in the \texttt{R} library \texttt{AppliedPredictiveModeling} \citep{craig_2011}. In the original article, the authors consider a variety of machine learning procedures to improve  flexibility relative  to a basic binary regression model. Here, we avoid excessively complex black-box algorithms and rely on an interpretable probit regression as in \eqref{eq1}, that improves flexibility by simply adding pairwise interactions, thus obtaining $p=9036$ predictors, including the intercept, collected for $333$ individuals. As done for the simulation studies in Section \ref{sec_simu}, the original measurements have been standardized to have mean $0$ and standard deviation $0.5$, before entering such variables and their interactions in the probit models.  Recalling Section \ref{sec_simu}, we recommend to always standardize the predictors and to include the intercept term when implementing the partially-factorized variational approximation  in real-world datasets. Besides being a common operation in routine implementations of probit regression  \citep[][]{gelman_2008,chopin_2017}, such a standardization typically reduces the covariance between units and thus also between the associated latent variables $z_i$, making the variational approximation more accurate. 

\begin{table}[b]
	\captionsetup{font={small}}
	\caption{Computational time of state-of-the-art routines in the Alzheimer's study. This includes the running time of the sampling or optimization procedure, and the time to compute means, standard deviations and predictive probabilities, for those routines that were feasible. \textsc{hmc}, Hamiltonian Monte Carlo; \textsc{ep}, expectation-propagation; \textsc{sun}, i.i.d.\ sampler from the  \textsc{sun} posterior; \textsc{mf}-\textsc{vb}, mean-field variational Bayes; \textsc{pfm}-\textsc{vb}, partially-factorized variational Bayes.
		\label{table1}}
	\centering
	\begin{tabular}{lccccc}
		& \textsc{stan} & \ \textsc{ep} & \ \textsc{sun} & \ \textsc{mf}-\textsc{vb} & \ \textsc{pfm}-\textsc{vb} \\ 
		Computational time in minutes $ \ $ &$>360.00$ &  \ $>360.00$ & \  $92.27$ & \ $0.06$ &  \ $0.06$ \\ 
	\end{tabular}
\end{table}

As for the simulation study, we perform Bayesian inference  by relying on independent Gaussian priors with mean $0$ and standard deviation $5$ for each $\beta_j$ $(j=1, \ldots, 9036)$ \citep{gelman_2008}. These priors  are updated with the probit likelihood of $n=300$ units, after holding out $33$ individuals to study the behavior of the posterior predictive probabilities in such large $p$ settings, along with the performance of the overall approximation of the posterior. Table \ref{table1} provides insights on the computational time of the two variational approximations, and highlights bottlenecks encountered by relevant routine-use competitors. These include the \texttt{rstan} implementation of Hamiltonian Monte Carlo, the expectation-propagation algorithm in the \texttt{R} library  \texttt{EPGLM},  and the Monte Carlo strategy based on $20000$ independent draws from the exact \textsc{sun} posterior using the algorithm in \citet{durante_2019}. As expected, these strategies are  impractical in such settings. In particular, \textsc{stan} and  expectation-propagation suffer from the large $p$, whereas sampling from the exact posterior is still feasible, but requires a non-negligible computational effort due to the moderately large $n$. Variational inference is orders of magnitude faster, thus providing the only viable approach in this study. Such results motivate our main  focus on the quality of the two variational approximations, taking as a benchmark Monte Carlo inference based on $20000$ independent samples from the exact posterior \citep{durante_2019}. In this example, Algorithm~\ref{algo2} requires only $6$  iterations to converge, instead of $212$ as for Algorithm~\ref{algo1}. This is in line with Theorem \ref{teo_4}.

As a first assessment, we evaluate the accuracy in approximating the whole posterior distributions by studying Monte Carlo estimates of the Wasserstein distances between the $p=9036$ exact posterior marginals and the associated approximations under the two variational methods.  Such quantities are computed with the \texttt{R} function \texttt{wasserstein1d}, which uses $20000$ values sampled  from the approximate  and exact marginals. The results of this analysis provide strong support in favor of the proposed partially-factorized solution that yields an overall Wasserstein distance, averaged across the $p=9036$ coefficients, of $0.07$, one order of magnitude lower than the one obtained under classical mean-field variational Bayes, which is $0.47$. This finding is in line with Proposition~\ref{rem_1}. More crucially, consistent with Theorem~\ref{teo_3}, the $94.2\%$ of the $p=9036$ Wasserstein distances under the partially-factorized approximation are within the quantiles $2.5\%$ and $97.5\%$ of the Wasserstein distances between two different  samples of 20000 draws from the same exact posterior marginals, meaning that,  in practice, our solution matches almost perfectly the exact posterior, with most of the variability  coming arguably from Monte Carlo error. This is not the case under the classical mean-field strategy for which the percentage of Wasserstain distances within the two quantiles notably drops from $94.2\%$ to $15.9\%$. The high accuracy of the new approximation is achieved despite the presence of 5000 pairs of highly associated predictors with correlation above 0.85, thus confirming its quality in real settings, beyond Assumption~\ref{ass_3}.

Consistent with Theorem \ref{teo_1}, the low quality of the classical mean-field solution is mainly due to a tendency to over-shrink towards $0$ the locations of the actual posterior. This is evident in the first panel of Figure \ref{f3}, that compares the posterior expectations computed from 20000 values sampled from the exact \textsc{sun} posterior with those provided by the closed-form expressions under the two competing variational approximations. Also the standard deviations are  under-estimated relative to our novel approximation, that notably removes bias also in the second order moments. Consistent with our previous results, the slight variability of the estimates provided by our solution in the second panel of Figure \ref{f3}  is arguably due to Monte Carlo error. In Figure  \ref{f3}, we also assess the quality in the approximation of the exact posterior predictive probabilities for the $33$ held-out individuals. Such measures are fundamental for prediction and, unlike for the first two marginal moments, their evaluation depends on the behavior of the entire posterior, since they rely on a non-linear mapping of a linear combination of the parameters $\bbeta$. In the third panel of Figure \ref{f3},  the proposed approximation essentially matches the exact posterior predictive probabilities, thus providing reliable classification and uncertainty quantification. Instead, as expected from the theoretical results in  Theorem  \ref{teo_1},  mean-field over-shrinks these quantities towards $0.5$. This leads to a test deviance $-\sum_{i=1}^{33} [y_{i,\textsc{new}}\log \hat{\mbox{pr}}(y_{i,\textsc{new}} =1\mid \by)+(1-y_{i,\textsc{new}})\log\{1-\hat{\mbox{pr}}(y_{i,\textsc{new}}=1 \mid \by)\}]$ for mean-field variational Bayes of $22.31$, dramatically increasing the test deviance of $14.62$ under the proposed partially-factorized solution. The latter is, instead, in line with the test deviance of $14.58$ obtained via Monte Carlo using independent samples from the exact  \textsc{sun} posterior.

\begin{figure}
	\captionsetup{font={small}}
	\centering
	\includegraphics[width=\linewidth]{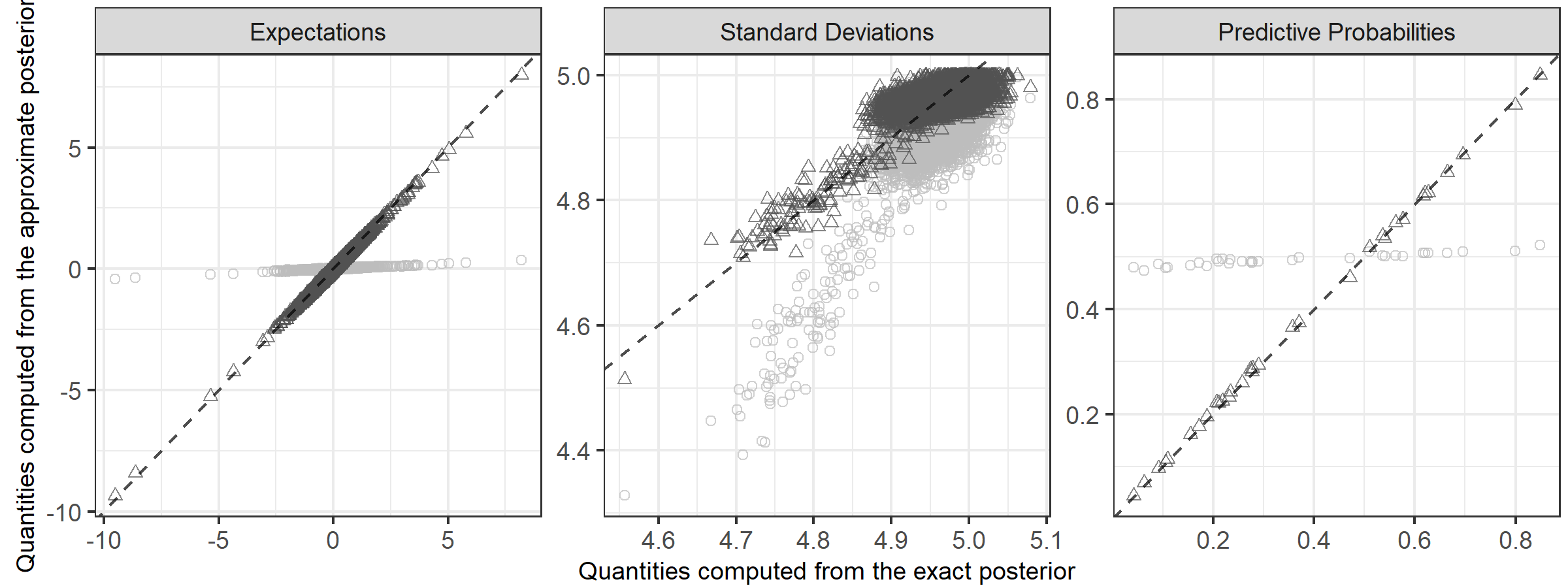}
	\caption{Scatterplots comparing the posterior expectations, standard deviations and predictive probabilities computed from 20000 values sampled from the exact \textsc{sun} posterior, with those provided by the mean-field variational Bayes (light grey circles) and partially-factorized variational Bayes (dark grey triangles), under the setting $\nu^2_p=25$.}
	\label{f3}
\end{figure}
\begin{figure}
	\captionsetup{font={small}}
	\centering
	\includegraphics[width=\linewidth]{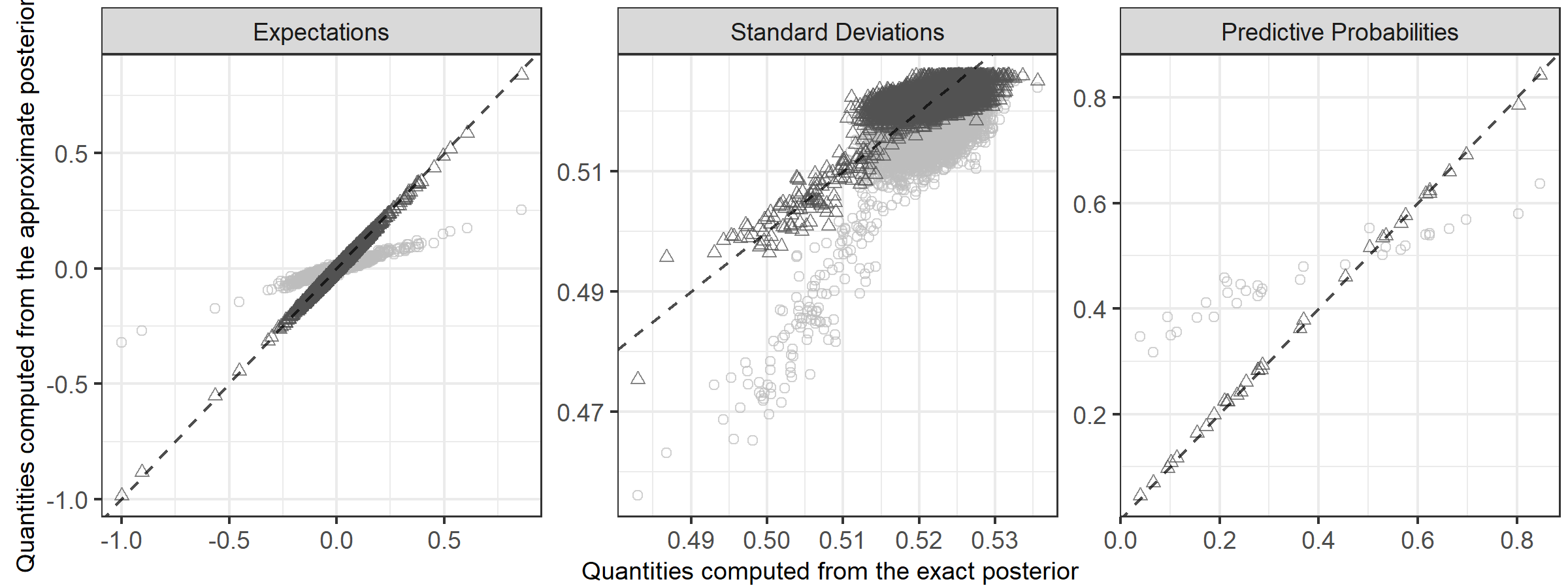}
	\caption{Scatterplots comparing the posterior expectations, standard deviations and predictive probabilities computed from 20000 values sampled from the exact \textsc{sun} posterior, with those provided by the mean-field variational Bayes (light grey circles) and partially-factorized variational Bayes (dark grey triangles), under a different setting for   $\nu^2_p$ controlling the variance of the linear predictor via $\nu^2_p=25 \cdot 100/p$, to induce increasing shrinkage.}
	\label{f4}
\end{figure}

The above conclusions are confirmed also in Figure~\ref{f4}, when setting a prior variance of order $\mathcal{O}(p^{-1})$, namely $\nu_p^2 =25\cdot 100/p$.  This choice controls, heuristically, the total variance of the linear predictor as if there were $100$ coefficients, out of $9036$,  with prior variance  $25$, while the others were fixed to zero, thus  inducing increasing shrinkage in high dimensions. As a consequence of this strong shrinkage effect also the exact posterior means increasingly shrink towards 0, thus mitigating the issues of classical mean-field approximations which, however, still maintain a bias that is not present under the proposed partially-factorized solution, in all settings. 

\begin{table}[b]
	\captionsetup{font={small}}
	\caption{For the  three methods analyzed, five-fold cross-validation estimates of the test deviance under different medical datasets. Italics values denote best performance. \textsc{mf}-\textsc{vb}, mean-field variational Bayes; \textsc{pfm}-\textsc{vb}, partially-factorized variational Bayes; \textsc{svb} sparse variational Bayes.
		\label{table2}}
	\begin{tabular}{lcccc}
		&\qquad  {\texttt{parkinson}}  &{ \texttt{voice}} & { \texttt{lesion}} &  { \texttt{alzheimer}} \\ 
		&\qquad {\footnotesize$(n=756,p=754)$} &{\footnotesize\ \ $(n=126,p=310)$} & {\footnotesize\ \ $(n=76,p=952)$}& {\footnotesize\ \ $(n=333,p=9036)$}\\ [5pt]
		\textsc{mf}-\textsc{vb}&  \qquad $309.82$ &  $59.38$ & $48.66$ & $228.71$ \\ 
		\textsc{pfm}-\textsc{vb} & \qquad ${\it 306.37}$ &  $46.35$ &   ${\it 27.24}$ & $187.52$ \\[2pt]
		\textsc{svb} & \qquad $317.42$ &  ${\it 42.16}$ &  $35.81$ & ${ \it126.24}$ \\  
	\end{tabular}
\end{table}

We conclude by providing further evidence for the behavior of the two variational approximations  in settings where the focus is accurate prediction. With this goal in mind, we consider the Alzheimers' study and three additional medical datasets \citep{sakar2019,tsanas2013,mesejo2016} with different combinations of $(n,p)$ available at the UCI Machine Learning Repository ({{\texttt{https://archive.ics.uci.edu/ml/datasets.php}}}). To each of these datasets we apply model \eqref{eq1} with the same settings initially considered for the Alzheimer's study, and evaluate predictive performance of the classical mean-field approximation and the proposed partially-factorized alternative. Moreover, we also compare to the sparse variational Bayes (\textsc{svb}) approximations for spike-and-slab Bayesian  logistic regression by \citet{ray2020spike}, with intercept and zero-mean Gaussian slabs having standard deviation $5$, so as to facilitate comparison. Unlike for the mean-field approximation and the proposed partially-factorized modification, this alternative competitor relies on a different model and, as a consequence, it approximates a different posterior. Hence, while changes in predictive performance between the first two strategies can be directly interpreted in terms of approximation accuracy, it is not possible to separate the quality of the approximation from model performance in  the comparison with sparse variational Bayes. Nonetheless, assessing predictive power against this strategy is useful in studying the effectiveness of the ridge-type shrinkage induced by model \eqref{eq1} relative to a sparse spike-and-slab one. Following common practice in machine learning, we consider five-fold cross-validation estimates $-\sum_{i=1}^n [y_i\log \hat{\mbox{pr}}(y_i =1\mid \by_{\textsc{fold}(-i)})+(1-y_i)\log\{1-\hat{\mbox{pr}}(y_i=1 \mid \by_{\textsc{fold}(-i)})\}]$  of test deviances under the three competitors, where $ \by_{\textsc{fold}(-i)}$ denotes the response vector for all the individuals that are not in the same fold of unit $i$. Such measures are reported in Table \ref{table2}, and confirm the higher quality of the partially-factorized solution relative to the mean-field one.  As expected, these gains are found in all the datasets and become substantial as $p$ grows relative to $n$. The comparison against sparse variational Bayes is, instead, less obvious since the two types of shrinkage induced are different. In fact, in Table~\ref{table2}, the sparse variational Bayes solution and the proposed  partially-factorized approximation are both competitive and the improved performance of one over the other inherently depends on the dataset analyzed, without a clear pattern in relation to $p/n$. These conclusions were confirmed when replicating the analyses under different settings of $\nu_p^2$, namely $\nu_p^2=25 \cdot 100/p$ and $\nu_p^2=25 \cdot  10/p$. More specifically, in 12 scenarios, corresponding to $3$ prior choices for $4$ datasets, the proposed partially-factorized solution ranked first $8$ times, whereas sparse variational Bayes over-performed the other methods the remaining  $4$ times. Considering Laplace slabs instead of Gaussian ones when implementing sparse variational Bayes in some of these experiments did not substantially change the final conclusions.

\section{Future research directions}
\label{sec_4}
While our contribution provides an important advancement in a non-Gaussian regression context where previously available Bayesian computational strategies are unsatisfactory \citep{chopin_2017}, the results in this article open several avenues for future research. For instance, the theory on mean-field approximations and mode estimators presented in Section~\ref{sec_2.1} for large $p$ settings points to the need of further theoretical studies on the use of such quantities in high-dimensional regression with non-Gaussian responses. In these contexts, our idea of relying on a partially-factorized approximating family could provide a general strategy to solve potential issues of current global-local approximations; see \citet{cao2020} and \citet{fasano2020} for ongoing research in this direction that has been motivated by the ideas and methods developed in this article. For instance, in the context of multinomial probit models,  \citet{fasano2020} further enlarge the partially-factorized mean-field family by replacing the independence assumption among all the unit-specific augmented data $z_i$, for $i=1, \ldots, n$, with a relaxed version which only assumes  independence between pre-specified low-dimensional blocks of units, while preserving dependence within each block. This leads to low-dimensional truncated normal approximating densities for the blocks of augmented data which preserve tractability, while further improving approximation accuracy. This strategy can be employed also in univariate probit regression and, recalling Section~\ref{sec_2}, is expected to yield additional gains, especially when grouping units with highly correlated predictors. We shall also emphasize that, although our focus on Gaussian priors with variance possibly decreasing with $p$ already allows to enforce shrinkage in high dimensions, this  setting also provides a key building-block which can be included in more complex scale mixtures of Gaussians priors to obtain improved theoretical and practical performance in state-of-the-art approximations under sparse settings. The comparison against sparse variational Bayes  \citep{ray2020spike} in Section~\ref{sec_3}, suggests that this is a relevant direction.

Finally, it would interesting to extend  Theorems \ref{teo_1}, \ref{teo_3} and \ref{teo_4} to settings in which $n$ grows with $p$ at some rate. Motivated by  Figure \ref{fsimu1}, we expect that $n$ growing sublinearly with $p$ is a sufficient condition to obtain asymptotic-exactness results analogous to Theorem \ref{teo_3}.  However, the proof of Theorem \ref{teo_3} exploits the fact that the dimension of  $z$ is fixed in our asymptotic regime, and thus would not directly extend to a regime where $n$ grows with $p$. One possibility to obtain such an extension would be to derive quantitative bounds on the  \textsc{kl} divergence between truncated normal distributions with growing dimension, but we are not aware of similar bounds in the literature.

\section*{Acknowledgement}
Augusto Fasano is further affiliated  to the de Castro Statistics Initiative at Collegio Carlo Alberto, Turin.
Daniele Durante is also affiliated to the Bocconi Institute for Data Science and Analytics (BIDSA) at Bocconi University, Milan, and acknowledges the support from MIUR–PRIN 2017 project 20177BRJXS. Giacomo Zanella is affiliated to the Innocenzo Gasparini Institute for Economic Research (IGIER) and the Bocconi Institute for Data Science and Analytics (BIDSA) at Bocconi University, Milan, and acknowledges the support from MIUR–PRIN project 2015SNS29B.

\appendix
\section*{Supplementary Material}
The Supplementary Material includes extended proofs of the theoretical results, details on computational complexity of the algorithms developed, additional simulations and further empirical evidence in real-world medical applications for the performance of the methods developed in the article.

\section{ELBO and computational cost of partially-factorized variational Bayes}
\label{app:comp_cost}

We discuss the computational cost of partially-factorized variational Bayes, showing that the whole routine requires matrix pre-computations with $\mathcal{O}(pn \cdot\min\{p,n\})$ cost and iterations with $\mathcal{O}(n\cdot \min\{p,n\})$ cost. Consider first Algorithm \ref{algo2}. When $p\geq n$, one can pre-compute $\bX\bV \bX^\intercal$ at $\mathcal{O}(pn^2)$ cost and then perform every iteration at $\mathcal{O}(n^2)$ cost. More specifically, by applying Woodbury's identity to $\bV$, it follows that \smash{$\bX\bV \bX^\intercal=\nu^{2}_pXX^\intercal(I_n+\nu^{2}_pXX^\intercal)^{-1}$}. Hence, the $\mathcal{O}(pn^2)$ cost of  matrix multiplication operations dominates the $\mathcal{O}(n^3)$ cost of inversion, when $p\geq n$. Instead, when $p<n$, one can pre-compute $\bX\bV$ at an $\mathcal{O}(p^2n)$ cost, and then perform each iteration at $\mathcal{O}(pn)$ cost noting that 
\begin{eqnarray*}
	\textstyle
	\mu^{(t)}_i=\sigma^{*2}_i \sum\nolimits_{j=1}^p(\bX\bV)_{ij}d^{(t)}_{ij}, \qquad d^{(t)}_{ij}=\sum\nolimits_{i'=1}^{i-1}x_{i' j}\bar{z}^{(t)}_{i'}+\sum\nolimits_{i'=i+1}^{n}x_{i' j}\bar{z}^{(t-1)}_{i'},
\end{eqnarray*}
where $d^{(t)}_{i}=(d^{(t)}_{i1}, \ldots, d^{(t)}_{ip})^{\intercal}$ can be computed at $\mathcal{O}(p)$ cost from $d_{i-1}^{(t)}$ via the recursive equation
\begin{eqnarray*}
	d^{(t)}_{ij}=d^{(t)}_{i-1,j}-x_{ij}\bar{z}^{(t-1)}_i+x_{i-1,j}\bar{z}^{(t)}_{i-1}.
\end{eqnarray*}
Therefore, computing $\mu^{(t)}_i$ $(i=1,\dots,n)$, which is the most expensive part of Algorithm \ref{algo2}, can be done in $\mathcal{O}(np)$ operations using $\bX\bV$ and \smash{$d_i^{(t)}$}. With simple calculations one can check that also computing the {\normalfont  \textsc{elbo}} requires $\mathcal{O}(n\cdot \min\{p,n\})$ operations, as it involves quadratic forms of $n\times n$ matrices with rank at most $\min\{p,n\}$. Indeed, recalling that \smash{$q_{\textsc{pfm}}^{*}(\beta \mid z)=p(\beta \mid z)$}, and letting $\bLambda = (\bI_n + \nu_p^2\bX\bX^\intercal)^{-1}$, it holds
\begin{eqnarray*}
	\begin{split}
		\textstyle
		&	{\normalfont  \textsc{elbo}}\{q_{\textsc{pfm}}^{(t)}(\bbeta, \bz)\} = {\normalfont  \textsc{elbo}}\{q_{\textsc{pfm}}^{(t)}(\bz)\}= \E_{q_{\textsc{pfm}}^{(t)}(\bz)}\{\log p(\bz,\by)\} - \E_{q_{\textsc{pfm}}^{(t)}(\bz)}\{\log q^{(t)}_{\textsc{pfm}}(\bz)\}\\
		\textstyle
		&  \textstyle = \mbox{const} + \sum\nolimits_{i=1}^n \log\Phi\{(2y_i-1)\mu_i^{(t)}/\sigma_i^*\}+ 0.5\E_{q_{\textsc{pfm}}^{(t)}(\bz)}\{(\bz-\mu^{(t)})^\intercal \bsigma^{*^{-1}} (\bz-\mu^{(t)})-\bz^{\intercal}\bLambda\bz\},\\
	\end{split}
\end{eqnarray*}	
where the calculation of the above expectation only requires computation of the means and variances for the $n$ univariate truncated normals obtained in step $t$ of Algorithm \ref{algo2}. Recalling basic properties of univariate truncated normals, both quantities are available in closed form as a function of \smash{$\mu^{(t)}_i$}, \smash{$\bar{z}^{(t)}_i$} and $\sigma^*_i$, $(i=1, \ldots, n)$, already computed in Algorithm \ref{algo2}.

Given the output of Algorithm \ref{algo2}, the expectation of $\bbeta$ under the partially-factorized approximation can be computed at $\mathcal{O}(p n \cdot  \min\{p,n\})$ cost noting that, by  \eqref{eq10}, 
$\E_{q_{\textsc{pfm}}^{*}(\bbeta)}(\bbeta)=\bV\bX^\intercal\bar{\bz}^*$, where $\bV\bX^\intercal$ can be computed at $\mathcal{O}(p n \cdot  \min\{p,n\})$ cost using either its definition, when $p\leq n$, or the equality \smash{$\bV\bX^\intercal = \nu_p^2\bX^\intercal\left(\bI_n+\nu_p^2\bX\bX^\intercal\right)^{-1}$}, when $p>n$. Given $\bV\bX^\intercal$, one can compute the covariance matrix of $\bbeta$ under partially-factorized variational Bayes  at $\mathcal{O}(p^2 n)$ cost using~\eqref{eq10}, and applying the Woodbury's identity to $\bV$, when $p> n$.
On the other hand, the marginal variances $\mbox{var}_{q_{\textsc{pfm}}^{*}(\beta_j)}(\beta_j)$ $(j=1,\dots, p)$, can be obtained at $\mathcal{O}(p n \cdot \min\{p,n\})$ cost by first  computing $\bV\bX^\intercal$, and then exploiting  \eqref{eq10} along with \smash{$V_{jj}=\nu_p^2\left\{1-\sum_{i=1}^n (\bV\bX^\intercal)_{ji} x_{ij} \right\}$}, which follows from  \smash{$\bV(\bI_p+\nu_p^2\bX^\intercal \bX)=\nu_p^2\bI_p$}.

Finally, the Monte Carlo estimates of the approximate predictive probabilities $\mbox{\normalfont pr}_{\textsc{pfm}}(y_{\textsc{new}}=1\mid \by)$ in \eqref{eq11} can be computed at $\mathcal{O}(pn \cdot \min\{p,n\}+nR)$ cost, where $R$ denotes the number of Monte Carlo samples. Indeed, simulating i.i.d. realizations $\bz^{(r)}$ $(r=1, \ldots, R)$, from $q^*_{\textsc{pfm}}(\bz)$ has an $\mathcal{O}(nR)$ cost, whereas computing $\Phi\{\bx_{\textsc{new}}^{\intercal}\bV\bX^{\intercal}\bz^{(r)}(1+\bx_{\textsc{new}}^{\intercal}\bV\bx_{\textsc{new}})^{-1/2}\}$ for $r=1,\dots,R$ has $\mathcal{O}(pn \cdot \min\{p,n\}+nR)$ cost because, given $\bV\bX^{\intercal}$, the computation of $\bx_{\textsc{new}}^{\intercal}\bV\bX^{\intercal}\bz^{(r)}$  for $r=1,\dots,R$ requires $\mathcal{O}(pn+nR)$ operations, while the computation of $\bx_{\textsc{new}}^{\intercal}\bV\bx_{\textsc{new}}$ can be done in $\mathcal{O}(pn \cdot \min\{p,n\})$ operations using either its original definition, when $p\leq n$, or  Woodbury's identity on $\bV$, when $p> n$, leading to
$\bx_{\textsc{new}}^{\intercal}\bV\bx_{\textsc{new}}
=
\nu_p^2\{\|\bx_{\textsc{new}}\|^2-
\nu_p^2(\bX\bx_{\textsc{new}})^{\intercal}
(\bI_n+\nu_p^2\bX\bX^\intercal)^{-1}(\bX\bx_{\textsc{new}})\}
$.

Similar derivations can be considered to prove that also Algorithm 1 for classical mean-field variational Bayes has the same $\mathcal{O}(pn \cdot\min\{p,n\})$ pre-computation cost and $\mathcal{O}(n\cdot \min\{p,n\})$ per-iteration cost of the proposed partially-factorized strategy.

\section{Proofs of Propositions, Lemmas, Theorems and Corollaries}\label{sec_proof}
We first prove some lemmas that are useful for the proofs of Theorems \ref{teo_1},  \ref{teo_3} and  \ref{teo_4}. 
We use the notation $\bM=o(p^d)$ for a matrix $\bM$ to indicate that all the entries of $\bM$ are $o(p^d)$ as $p\to\infty$.
Almost sure convergence and convergence in probability are denoted, respectively, by \smash{$\stackrel{a.s.}\to$} and \smash{$\stackrel{p}\to$}.
Moreover, convergence of matrices, both almost sure and in probability, is meant element-wise.

\vspace{10pt}
\renewcommand{\thelemma}{S1}
\begin{lemma}\label{lemma:SLLN}
	Let $(w_j)_{j\geq 1}$ be a sequence of independent random variables with mean $0$  and\\  ${\normalfont \mbox{sup}}_{j\geq 1}\{{\normalfont \mbox{var}}(w_j) \}< \infty$. Then \smash{$p^{-1/2-\delta}\sum_{j=1}^pw_j \stackrel{a.s.}\to 0$} as $p\to\infty$ for every $\delta>0$.
\end{lemma}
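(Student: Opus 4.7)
The plan is to reduce this to a standard strong-law-of-large-numbers statement via a classical combination of Kolmogorov's convergence criterion for series of independent random variables and Kronecker's lemma.

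First I would set $a_j = j^{1/2+\delta}$ (which is positive, strictly increasing and diverges to $+\infty$) and look at the series $\sum_{j\geq 1} w_j/a_j$. Because the $w_j$ are independent with mean zero and $\mathrm{Var}(w_j)\leq C$ for some finite constant $C$ not depending on $j$, we get
\[
\sum_{j=1}^{\infty}\mathrm{Var}\!\left(\frac{w_j}{a_j}\right)\;\leq\; C\sum_{j=1}^{\infty}\frac{1}{j^{1+2\delta}}\;<\;\infty ,
\]
since $1+2\delta>1$. By Kolmogorov's one-series theorem, this implies that $\sum_{j=1}^{\infty} w_j/a_j$ converges almost surely.

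Then I would invoke Kronecker's lemma: if $a_j$ is a positive non-decreasing sequence diverging to $\infty$ and $\sum_j w_j/a_j$ converges, then $a_p^{-1}\sum_{j=1}^p w_j \to 0$. Applied with $a_p = p^{1/2+\delta}$, this yields
\[
p^{-1/2-\delta}\sum_{j=1}^{p} w_j \;\stackrel{a.s.}{\longrightarrow}\; 0, \qquad p\to\infty,
\]
which is the desired conclusion.

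The proof is essentially a two-line application of well-known results, so there is no real obstacle. The only thing to be careful about is making sure the hypotheses of Kolmogorov's criterion and of Kronecker's lemma are fully met: independence and mean zero of the $w_j$ give the former, while the strict positivity and monotone divergence of $a_j = j^{1/2+\delta}$ for $\delta>0$ give the latter. No additional moment assumptions beyond the uniformly bounded variances are needed.
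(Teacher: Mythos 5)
Your proof is correct and follows exactly the route the paper indicates: the paper omits the details but explicitly attributes the lemma to the Khintchine--Kolmogorov convergence theorem combined with Kronecker's lemma, which is precisely your argument with $a_j=j^{1/2+\delta}$ and the convergent series $\sum_j j^{-1-2\delta}$.
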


\vspace{5pt}

Lemma \ref{lemma:SLLN} is a  variant of the strong law of large numbers, and it follows from the Khintchine--Kolmogorov convergence theorem and Kronecker's lemma.
Being a classical result, we omit the proof for brevity.

\vspace{10pt}
\renewcommand{\thelemma}{S2}
\begin{lemma}\label{lemma:covariance}
	Under Assumptions \ref{ass_1}--\ref{ass_2} we have  that $(\sigma_x^2 p)^{-1}\bX\bX^\intercal \stackrel{a.s.}=\bI_n+o(p^{-1/2+\delta})$ for every $\delta>0$, and \smash{\mbox{$(1+\sigma_x^2p\nu_p^2)^{-1}(\bI_n+\nu_p^2\bX\bX^\intercal)\stackrel{a.s.}\to \bI_n$}} as $p\to\infty$.
	Under Assumptions \ref{ass_2}--\ref{ass_3}, we have   \smash{$(\sigma_x^2 p)^{-1}\bX\bX^\intercal \stackrel{p}\to \bI_n$} and, consequently,  \smash{$(1+\sigma_x^2p\nu_p^2)^{-1}(\bI_n+\nu_p^2\bX\bX^\intercal)\stackrel{p}\to \bI_n$} as $p\to\infty$.
\end{lemma}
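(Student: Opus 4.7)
The plan is to first prove that $(\sigma_x^2 p)^{-1}\bX\bX^{\intercal}$ converges to $\bI_n$, with the quantitative a.s.\ rate under Assumptions \ref{ass_1}--\ref{ass_2} and only in probability under Assumptions \ref{ass_2}--\ref{ass_3}, and then to derive the second statement algebraically. The key identity is
\[
(1+\sigma_x^2 p\nu_p^2)^{-1}(\bI_n + \nu_p^2\bX\bX^{\intercal}) = \lambda_p\,\bI_n + (1-\lambda_p)\,(\sigma_x^2 p)^{-1}\bX\bX^{\intercal},\qquad \lambda_p := (1+\sigma_x^2 p\nu_p^2)^{-1}\in[0,1],
\]
and Assumption \ref{ass_2} ensures that $\lambda_p$ and $1-\lambda_p$ converge to limits in $[0,1]$ summing to $1$ (to $0$ and $1$ when $\alpha=\infty$, and to $(1+\sigma_x^2\alpha)^{-1}$ and $\sigma_x^2\alpha(1+\sigma_x^2\alpha)^{-1}$ when $\alpha<\infty$); hence the right-hand side inherits the limit $\bI_n$ in whichever mode of convergence is available for $(\sigma_x^2 p)^{-1}\bX\bX^{\intercal}$.

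For the a.s.\ rate under Assumption \ref{ass_1}, I would decompose $(\bX\bX^{\intercal})_{ii}-\sigma_x^2 p = \sum_{j=1}^p (x_{ij}^2-\sigma_x^2)$ and, for $i\neq i'$, $(\bX\bX^{\intercal})_{ii'}=\sum_{j=1}^p x_{ij}x_{i'j}$. In both cases the summands are independent across $j$ with mean zero and variance uniformly bounded by $c_x$ (directly from $E(x_{ij}^4)\leq c_x$ for the squared terms, and by Cauchy--Schwarz applied to $E(x_{ij}^2 x_{i'j}^2)$ for the cross products). Lemma \ref{lemma:SLLN} then gives $p^{-1/2-\delta}$ times each such partial sum converges to $0$ a.s.\ for every $\delta>0$, which yields entrywise $(\sigma_x^2 p)^{-1}\bX\bX^{\intercal} = \bI_n + o(p^{-1/2+\delta})$ a.s., completing the first a.s.\ statement.

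For the in-probability statement under Assumption \ref{ass_3}, I would apply Chebyshev's inequality entry by entry. On the diagonal, the mean is correct and the variance decomposes as $(\sigma_x^2 p)^{-2}\bigl[\sum_j \mathrm{Var}(x_{ij}^2)+\sum_{j\neq j'}\mathrm{Cov}(x_{ij}^2,x_{ij'}^2)\bigr]$, whose first piece is $O(p^{-1})$ by the fourth-moment bound, while each off-diagonal covariance is bounded by $c_x$ times $|\mathrm{corr}(x_{ij}^2,x_{ij'}^2)|$, so Assumption \ref{ass_3}(b) forces the double sum to be $o(p^2)$. Off the diagonal, Assumption \ref{ass_3}(a) gives $(\sigma_x^2 p)^{-1}E[(\bX\bX^{\intercal})_{ii'}] = p^{-1}\sum_j \mathrm{corr}(x_{ij},x_{i'j})\to 0$, and an analogous variance bound using Assumption \ref{ass_3}(c) in place of \ref{ass_3}(b), together with $\mathrm{Var}(x_{ij}x_{i'j})\leq c_x$ by Cauchy--Schwarz, completes the Chebyshev step. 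Combining entrywise convergence with the identity above delivers both in-probability claims.

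The main obstacle I anticipate is the bookkeeping in the two variance bounds under Assumption \ref{ass_3}, because that assumption is phrased in terms of signed rather than absolute correlations. One handles this using the trivial bound $|\mathrm{corr}|\leq 1$ and the fourth-moment control $c_x$ via Cauchy--Schwarz to factor standard deviations out of each covariance, after which the $o(p^2)$ rates in \ref{ass_3}(b) and \ref{ass_3}(c) translate directly into $o(1)$ for the scaled variances; no further moment or independence conditions are required.
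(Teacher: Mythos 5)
Your proposal follows essentially the same route as the paper's proof: the same entrywise decomposition (squared terms on the diagonal, cross products off it), Lemma \ref{lemma:SLLN} for the almost-sure rate, a Chebyshev/second-moment argument in which Assumptions \ref{ass_3}(a)--(c) play exactly the roles you assign them, and the same convex-combination identity to transfer the limit to $(1+\sigma_x^2p\nu_p^2)^{-1}(\bI_n+\nu_p^2\bX\bX^\intercal)$. One caveat: your closing remark about handling signed correlations via $|\mathrm{corr}|\le 1$ and absolute-value bounds does not quite work as stated, since Assumption \ref{ass_3} controls signed rather than absolute correlation sums; however, the paper's own proof passes from covariance sums to correlation sums with the same implicit step, so this is not a gap relative to the paper's argument.
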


\vspace{10pt}

\begin{proof}
	Considering the first part of the statement, by Assumption \ref{ass_1}, $(x_{ij}^2)_{j\geq 1}$ are independent random variables with mean $\sigma_x^2$ and variance bounded over $j$, for every  $i=1, \ldots, n$. Thus
	\begin{eqnarray*}
		\textstyle p^{1/2-\delta}\{(\sigma_x^2 p)^{-1}\bX\bX^\intercal -\bI_n\}_{ii} = p^{-1/2-\delta}\sum\nolimits_{j=1}^p(\sigma_x^{-2}x_{ij}^2 -1) \stackrel{a.s.}\to 0,
	\end{eqnarray*}
	as $p\to\infty$ by Lemma \ref{lemma:SLLN}. Similarly, when $i\neq i'$, $(x_{ij}x_{i'j})_{j\geq 1}$ are independent  variables with  mean $0$ and variance  $\sigma_x^4<\infty$. Hence
	\begin{eqnarray*}
		\textstyle p^{1/2-\delta}\{(\sigma_x^2 p)^{-1}\bX\bX^\intercal -\bI_n\}_{ii'}=\sigma_x^{-2} p^{-1/2-\delta}\sum\nolimits_{j=1}^p x_{ij}x_{i'j}\stackrel{a.s.}\to 0,
	\end{eqnarray*}
	as $p\to\infty$ by Lemma \ref{lemma:SLLN}. It follows that $(\sigma_x^2 p)^{-1}\bX\bX^\intercal \stackrel{a.s.}=\bI_n+o(p^{-1/2+\delta})$ as $p\to\infty$. Finally, since by Assumption \ref{ass_2}, $p\nu_p^2$ converges to a positive constant or  to infinity, in both cases we have 
	\begin{equation}\label{eq:manipulations}
	\begin{split}
	&(1+\sigma_x^2p\nu_p^2)^{-1}(\bI_n+\nu_p^2\bX\bX^\intercal)\\
	&= (1+\sigma_x^2p\nu_p^2)^{-1}\bI_n+(1+\sigma_x^2p\nu_p^2)^{-1} (\sigma_x^2p\nu_p^2)(\sigma_x^2p)^{-1}\bX\bX^\intercal \stackrel{a.s.}\to \bI_n, \quad \mbox{as } p\to\infty.
	\end{split}
	\end{equation}	
	
	Consider now the second part of the statement.
	Since \smash{$\{(\sigma_x^2 p)^{-1}\bX\bX^\intercal\}_{ii}=p^{-1}\sum_{j=1}^p \sigma_x^{-2} x_{ij}^2$} and $\E(\sigma_x^{-2} x_{ij}^2)=1$ for any $j=1,\ldots,p$, Chebyshev's inequality implies that, for any $\epsilon > 0$,
	\begin{equation*}
	\begin{split}
	&\textstyle \mbox{pr}\{|p^{-1}\sum\nolimits_{j=1}^p \sigma_x^{-2} x_{ij}^2 - 1|>\epsilon\} \\
	&\textstyle \le (p^2 \epsilon^2)^{-1} \mbox{\normalfont var} (\sum\nolimits_{j=1}^p \sigma_x^{-2} x_{ij}^2)= (\epsilon^2 p^2 \sigma_x^4)^{-1}\{\sum\nolimits_{j=1}^p \mbox{\normalfont var}(x_{ij}^2) + \sum\nolimits_{j=1}^p \sum\nolimits_{j'\neq j} \mbox{\normalfont cov}(x_{ij}^2,x_{ij'}^2) \}\\
	&\textstyle \le (\epsilon^2 p \sigma_x^4)^{-1}[\max_{j=1,\ldots,p}\{\E(x_{ij}^4)\}] + (\epsilon^2 p^2 \sigma_x^4)^{-1} \sum\nolimits_{j=1}^p \sum\nolimits_{j'\neq j} \mbox{\normalfont cov}(x_{ij}^2,x_{ij'}^2) \to 0, 
	\end{split}
	\end{equation*}
	as $p\to\infty$, by Assumption \ref{ass_3}(b) and $\sup_{j\geq 1}\{\E(x_{ij}^4)\} <\infty$.
	Thus $\{(\sigma_x^2 p)^{-1}\bX\bX^\intercal\}_{ii}\stackrel{p}{\to} 1$ as $p\to\infty$ for any $i=1,\ldots,n$.
	
	We now consider the off-diagonal and show \smash{$\{(\sigma_x^2 p)^{-1}\bX\bX^\intercal\}_{ii'}=p^{-1}\sum_{j=1}^p \sigma_x^{-2} x_{ij} x_{i'j} \stackrel{L^2}{\to} 0$} as $p\to\infty$ for any $i'\ne i$, which implies the desired result.
	Indeed, 
	$$\textstyle \E\{(p^{-1}\sum\nolimits_{j=1}^p \sigma_x^{-2} x_{ij} x_{i'j})^2\} = \mbox{\normalfont var}(p^{-1} \sum\nolimits_{j=1}^p \sigma_x^{-2} x_{ij} x_{i'j}) + \{\E(p^{-1} \sum\nolimits_{j=1}^p \sigma_x^{-2} x_{ij} x_{i'j})\}^2.$$
	By the Cauchy--Schwarz inequality we have 
	\begin{equation*}
	\begin{split}
	&\textstyle \mbox{\normalfont var}(p^{-1} \sum\nolimits_{j=1}^p \sigma_x^{-2} x_{ij} x_{i'j}) \\
	&=\textstyle
	(p\sigma_x^2)^{-2}
	\sum\nolimits_{j=1}^p\mbox{\normalfont var}\left(x_{ij} x_{i'j}\right) + (p\sigma^2_x)^{-2}\sum\nolimits_{j=1}^p\sum\nolimits_{j'\neq j}\mbox{\normalfont cov}(x_{ij} x_{i'j},x_{ij'} x_{i'j'})\\
	&\textstyle \leq
	(p\sigma_x^2)^{-2}
	\sum\nolimits_{j=1}^p[\E(x_{ij}^4)\E(x_{i'j}^4)]^{1/2}
	+ (p\sigma^2_x)^{-2}\sum\nolimits_{j=1}^p\sum\nolimits_{j'\neq j}\mbox{\normalfont cov}(x_{ij} x_{i'j},x_{ij'} x_{i'j'})
	\to 0,
	\end{split}
	\end{equation*}
	as $p\to 0$ by Assumption \ref{ass_3}(c) and $\sup_{i=1,\dots,n;j\geq 1}\{\E(x_{ij}^4)\} <\infty$.
	Moreover 
	$$\textstyle
	\E(p^{-1} \sum\nolimits_{j=1}^p \sigma_x^{-2} x_{ij} x_{i'j}) = p^{-1} \sigma_x^{-2} \sum\nolimits_{j=1}^p \mbox{\normalfont cov}(x_{ij},x_{i'j}) \to 0,
	$$
	as $p\to\infty$ by Assumption \ref{ass_3}(a).
	The above derivations imply that $\{(\sigma_x^2 p)^{-1}\bX\bX^\intercal\}_{ii'} \stackrel{L^2}{\to} 0$, from which it follows \smash{$\{(\sigma_x^2 p)^{-1}\bX\bX^\intercal\}_{ii'} \stackrel{p}{\to} 0$} as $p\to\infty$.
	Finally, calculations analogous to \eqref{eq:manipulations} imply \smash{$(1+\sigma_x^2p\nu_p^2)^{-1}(\bI_n+\nu_p^2\bX\bX^\intercal)\stackrel{p}\to \bI_n$} as $p\to\infty$.
\end{proof}

\vspace{15pt}
\renewcommand{\thelemma}{S3}
\begin{lemma}\label{lemma:H_asymp}
	Define $\bH$ as $\bH=\bX\bV \bX^\intercal $.
	Then, under Assumptions \ref{ass_1} and \ref{ass_2},  we have that $(1+\sigma_x^2p\nu_p^2)\left(\bI_n-\bH\right)\ \smash{\stackrel{a.s.}\to} \ \bI_n$  as $p\to\infty$.
	In particular, for $p\to\infty$, we have that $\bH \ \smash{\stackrel{a.s.}\to}\ \{(\alpha\sigma_x^2)/(1+\alpha\sigma_x^2)\}\bI_n$, where $(\alpha\sigma_x^2)/(1+\alpha\sigma_x^2)=1$ if $\alpha=\infty$.
	Under Assumptions \ref{ass_2}--\ref{ass_3}, the same holds, with a.s.\ convergence replaced by convergence in probability.
\end{lemma}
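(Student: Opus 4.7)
The plan is to reduce the statement to Lemma \ref{lemma:covariance} by rewriting $\bI_n - \bH$ as the inverse of an $n \times n$ matrix that is already analyzed there. Concretely, since $\bV = (\nu_p^{-2}\bI_p + \bX^\intercal \bX)^{-1}$, one has $\bH = \nu_p^2 \bX(\bI_p + \nu_p^2 \bX^\intercal \bX)^{-1}\bX^\intercal$. Using the push-through identity $\bX(\bI_p + \nu_p^2 \bX^\intercal \bX)^{-1} = (\bI_n + \nu_p^2 \bX \bX^\intercal)^{-1}\bX$, which I would verify in one line, this becomes $\bH = (\bI_n + \nu_p^2 \bX \bX^\intercal)^{-1} \nu_p^2 \bX \bX^\intercal$. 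Writing $\nu_p^2 \bX \bX^\intercal = (\bI_n + \nu_p^2 \bX \bX^\intercal) - \bI_n$, I obtain the key identity
\begin{equation*}
\bI_n - \bH = (\bI_n + \nu_p^2 \bX \bX^\intercal)^{-1}.
\end{equation*}

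Multiplying by $(1+\sigma_x^2 p \nu_p^2)$ gives $(1+\sigma_x^2 p \nu_p^2)(\bI_n - \bH) = \bigl[(1+\sigma_x^2 p \nu_p^2)^{-1}(\bI_n + \nu_p^2 \bX \bX^\intercal)\bigr]^{-1}$. By Lemma \ref{lemma:covariance}, the matrix in brackets converges almost surely to $\bI_n$ under Assumptions \ref{ass_1}--\ref{ass_2} and in probability under Assumptions \ref{ass_2}--\ref{ass_3}. Since matrix inversion is continuous at $\bI_n$, passing to inverses yields $(1+\sigma_x^2 p \nu_p^2)(\bI_n - \bH) \stackrel{a.s.}{\to} \bI_n$ (resp.\ $\stackrel{p}{\to} \bI_n$), which is the first claim.

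For the limit of $\bH$ itself, I would write $\bH = \bI_n - (1+\sigma_x^2 p \nu_p^2)^{-1} \cdot (1+\sigma_x^2 p \nu_p^2)(\bI_n - \bH)$. The second factor converges to $\bI_n$ by what we just proved, while Assumption \ref{ass_2} gives $(1+\sigma_x^2 p \nu_p^2)^{-1} \to (1+\alpha\sigma_x^2)^{-1}$, with the stated convention $(1+\alpha\sigma_x^2)^{-1}=0$ when $\alpha=\infty$. Combining these, $\bH \to \bI_n - (1+\alpha\sigma_x^2)^{-1}\bI_n = [\alpha\sigma_x^2/(1+\alpha\sigma_x^2)]\bI_n$ in the appropriate mode of convergence.

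There is no real obstacle here: the whole argument hinges on the algebraic identity $\bI_n - \bH = (\bI_n + \nu_p^2 \bX \bX^\intercal)^{-1}$, which converts a $p\times p$ inversion hidden inside $\bV$ into an $n\times n$ inversion to which Lemma \ref{lemma:covariance} directly applies. The only thing to be slightly careful about is keeping the two regimes $\alpha<\infty$ and $\alpha=\infty$ consistent via the stated convention, and noting that almost sure and in-probability convergence are preserved under the continuous map of matrix inversion near $\bI_n$.
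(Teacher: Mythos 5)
Your proof is correct and follows essentially the same route as the paper: the identity $\bI_n-\bH=(\bI_n+\nu_p^2\bX\bX^\intercal)^{-1}$ (obtained in the paper via Woodbury, in your case via the equivalent push-through identity), followed by Lemma \ref{lemma:covariance} and continuity of matrix inversion, and the same decomposition $\bH=\bI_n-(1+\sigma_x^2p\nu_p^2)^{-1}\,(1+\sigma_x^2p\nu_p^2)(\bI_n-\bH)$ for the limit of $\bH$. The only cosmetic difference is that the paper treats $\alpha=\infty$ and $\alpha\in(0,\infty)$ as separate cases while you absorb both into the stated convention, which is equally valid.
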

\vspace{5pt}

\begin{proof} 
	Applying Woodbury identity to $(\bI_n + \nu_p^2\bX\bX^\intercal)^{-1}$ and using $\bV=(\nu_p^{-2}\bI_p+\bX^\intercal \bX)^{-1}$, we have
	\begin{equation}
	\begin{split}
	\label{eq:H1}
	(\bI_n + \nu_p^2\bX\bX^\intercal )^{-1} &= \bI_n - \nu_p^2\bX(\bI_p  + \nu_p^2\bX^\intercal \bX)^{-1}\bX^\intercal\\
	&= \bI_n - \bX(\nu_p^{-2}\bI_p  + \bX^\intercal \bX)^{-1}\bX^\intercal = \bI_n - \bH\,.
	\end{split}
	\end{equation}
	Thus, under Assumptions \ref{ass_1} and \ref{ass_2}, 
	$$\{(1+\sigma_x^2p\nu_p^2)(\bI_n-\bH)\}^{-1}=(1+\sigma_x^2p\nu_p^2)^{-1}(\bI_n + \nu_p^2\bX\bX^\intercal)\stackrel{a.s.}\to \bI_n,$$
	as $p\to\infty$ by Lemma \ref{lemma:covariance}.
	Under Assumptions \ref{ass_2} and \ref{ass_3} the same convergence holds in probability again  by Lemma \ref{lemma:covariance}.
	The convergence of $(1+\sigma_x^2p\nu_p^2)\left(\bI_n-\bH\right)$ to the identity $\bI_n$ follows by the one of $\{(1+\sigma_x^2p\nu_p^2)(\bI_n-\bH)\}^{-1}$ and by the continuity of the inverse operator over the set of non-singular $n\times n$ matrices.
	
	Finally, to obtain the convergence of $\bH$, note that by (\ref{eq:H1}) we have
	\begin{equation}
	\label{eq:H2}
	\bH = \bI_n - (1+\sigma_x^2p\nu_p^2)^{-1} (1+\sigma_x^2p\nu_p^2) (\bI_n + \nu_p^2\bX\bX^\intercal)^{-1}.
	\end{equation}
	By Lemma \ref{lemma:covariance} and the above mentioned continuity of the inverse operator, under Assumptions \ref{ass_1} and \ref{ass_2}, $(1+\sigma_x^2p\nu_p^2) (\bI_n + \nu_p^2\bX\bX^\intercal)^{-1}\stackrel{a.s.}{\to} \bI_n$ as $p\to\infty$.
	Thus, if $\alpha = \lim_{p \to \infty} p \nu_p^2 = \infty$, $\bH \stackrel{a.s.}{\to} \bI_n$ as $p\to \infty$, since the second addend in (\ref{eq:H2}) converges a.s.\ to the null matrix. On the other hand, if $\alpha = \lim_{p \to \infty} p \nu_p^2 < \infty$, the second addend converges a.s.\ to $(1+\alpha\sigma_x^2)^{-1}\bI_n$ as $p\to \infty$, and thus we have that, for $p\to \infty$,
	$$
	\bH \stackrel{a.s.}{\to} \bI_n - (1+\alpha\sigma_x^2)^{-1}\bI_n = \{\alpha\sigma_x^2/(1+\alpha\sigma_x^2)\}\bI_n,
	$$
	as desired. Under Assumptions \ref{ass_2} and \ref{ass_3}, the same argument holds, replacing a.s.\ convergence with convergence in probability.
\end{proof}

\vspace{15pt}
\renewcommand{\thelemma}{S4}
\begin{lemma}\label{lemma:KL continuity}
	Let \smash{$\bmu^{(p)}_l\to\bzero$} and \smash{$\bSigma^{(p)}_l\to\bI_n$} as $p\to\infty$ for $l=1,2$, with $\bmu^{(p)}_l\in\mathbb{R}^n$ and $\bSigma^{(p)}_l\in \smash{\mathbb{R}^{n^2}}$ $(l=1,2)$.
	Then
	$$
	\textsc{kl}\{\textsc{tn}(\bmu^{(p)}_1,\bSigma^{(p)}_1, \mathbb{A})\mid \mid \textsc{tn}(\bmu^{(p)}_2,\bSigma^{(p)}_2, \mathbb{A})\}\to 0,
	$$
	as $p\to\infty$, with $\mathbb{A}$ denoting an orthant of $\mathbb{R}^n$ that, in the subsequent derivations, is defined as  $\mathbb{A}=\{\bz \in \mathbb{R}^n: (2y_i-1)z_i>0, \mbox{ for } i=1, \ldots, n \}$.
\end{lemma}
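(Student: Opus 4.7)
The plan is to expand the KL divergence into an explicit sum of terms in the truncated-normal parameters and show each term vanishes as $p\to\infty$. Writing the two densities as $f_l^{(p)}(\bz)=\phi_n(\bz-\bmu_l^{(p)};\bSigma_l^{(p)})/P_l^{(p)}\cdot\mathbf{1}[\bz\in\mathbb{A}]$, with $P_l^{(p)}=\int_{\mathbb{A}}\phi_n(\bz-\bmu_l^{(p)};\bSigma_l^{(p)})\mathrm{d}\bz$ for $l=1,2$, a direct substitution and cancellation yields
\begin{equation*}
\begin{split}
\textsc{kl}[f_1^{(p)}\mid\mid f_2^{(p)}]
=&\ \tfrac{1}{2}\,E_{f_1^{(p)}}\bigl[(\bz-\bmu_2^{(p)})^{\intercal}(\bSigma_2^{(p)})^{-1}(\bz-\bmu_2^{(p)})-(\bz-\bmu_1^{(p)})^{\intercal}(\bSigma_1^{(p)})^{-1}(\bz-\bmu_1^{(p)})\bigr]\\
&\ +\tfrac{1}{2}\log\bigl(|\bSigma_2^{(p)}|/|\bSigma_1^{(p)}|\bigr)+\log\bigl(P_2^{(p)}/P_1^{(p)}\bigr),
\end{split}
\end{equation*}
so it suffices to treat the three contributions separately.

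First I would dispatch the log-determinant term: since $\bSigma_l^{(p)}\to\bI_n$, continuity of the determinant gives $|\bSigma_l^{(p)}|\to 1$ and hence $\log(|\bSigma_2^{(p)}|/|\bSigma_1^{(p)}|)\to 0$. Next I would handle the normalizing constants. Because $\phi_n(\bz-\bmu_l^{(p)};\bSigma_l^{(p)})\to\phi_n(\bz;\bI_n)$ pointwise and is dominated on a suitable enlarged Gaussian envelope for $p$ large (since $\bSigma_l^{(p)}\to\bI_n$ makes eigenvalues eventually bounded between $1/2$ and $2$, say), dominated convergence yields $P_l^{(p)}\to\int_{\mathbb{A}}\phi_n(\bz;\bI_n)\mathrm{d}\bz=2^{-n}>0$; taking logs gives $\log(P_2^{(p)}/P_1^{(p)})\to 0$.

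The main step, and the one that requires the most care, is the convergence of the expected quadratic forms. Rewriting each quadratic form via $\mathrm{tr}$-cycling gives
$$E_{f_1^{(p)}}[(\bz-\bmu_l^{(p)})^{\intercal}(\bSigma_l^{(p)})^{-1}(\bz-\bmu_l^{(p)})]=\mathrm{tr}\bigl[(\bSigma_l^{(p)})^{-1}\,M_1^{(p)}(\bmu_l^{(p)})\bigr],$$
where $M_1^{(p)}(\bmu)=E_{f_1^{(p)}}[(\bz-\bmu)(\bz-\bmu)^{\intercal}]$ depends on $\bmu$ affinely through the mean and second moment of $\bz$ under $f_1^{(p)}$. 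These first and second moments of a multivariate truncated normal on the fixed orthant $\mathbb{A}$ are known to be continuous functions of $(\bmu_1^{(p)},\bSigma_1^{(p)})$ on the open set of positive-definite covariances (this can be verified either by invoking the standard Tallis-type closed-form expressions for orthant moments or, more directly, by combining the total-variation convergence $f_1^{(p)}\to\phi_n(\cdot;\bI_n)/2^{-n}\cdot\mathbf{1}_{\mathbb{A}}$ that follows from Scheff\'e's lemma with uniform integrability of $\|\bz\|^2$, which itself follows from the Gaussian tail bound uniformly in $p$ for large $p$). Consequently, for $l=1,2$ both $M_1^{(p)}(\bmu_l^{(p)})\to M^*$ and $(\bSigma_l^{(p)})^{-1}\to \bI_n$, where $M^*=E_{\textsc{tn}(\mathbf{0},\bI_n,\mathbb{A})}[\bz\bz^{\intercal}]$, so each trace converges to $\mathrm{tr}(M^*)$ and their difference tends to $0$.

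The combined vanishing of all three terms yields $\textsc{kl}[f_1^{(p)}\mid\mid f_2^{(p)}]\to 0$. The hard part is truly just the quadratic-form piece; once one verifies continuity of truncated-normal moments in the parameters (which is a mild and standard fact), the rest is bookkeeping.
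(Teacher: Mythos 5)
Your proposal is correct and follows essentially the same route as the paper: the identical three-term decomposition of the KL divergence, with the log-determinant term handled by continuity of the determinant, the normalizing constants by a domination/weak-convergence argument giving $P_l^{(p)}\to 2^{-n}$, and the expected difference of quadratic forms shown to vanish using eventual eigenvalue bounds on $\bSigma_l^{(p)}$. The only (immaterial) variation is in that last term, where the paper applies dominated convergence once to the combined integrand $g_p(\bu)\exp[-\tfrac12(\bu-\bmu_1^{(p)})^{\intercal}(\bSigma_1^{(p)})^{-1}(\bu-\bmu_1^{(p)})]$, whereas you separate it into two second-moment traces and show each converges to $\mathrm{tr}(M^*)$ — both hinge on the same Gaussian-envelope domination, so the arguments are interchangeable.
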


\vspace{8pt}

\begin{proof}
	By definition, $\textsc{kl}\{\textsc{tn}(\bmu^{(p)}_1,\bSigma^{(p)}_1, \mathbb{A})\mid \mid \textsc{tn}(\bmu^{(p)}_2,\bSigma^{(p)}_2, \mathbb{A})\}$ is equal to 
	\begin{eqnarray*}
		\textstyle
		\log\{(\psi^{(p)}_1)^{-1}\psi^{(p)}_2\}+
		0.5\log\{\det(\bSigma^{(p)}_1)^{-1}\det(\bSigma^{(p)}_2)\}
		+
		(\psi^{(p)}_1)^{-1}\det(2\pi\bSigma^{(p)}_1)^{-1/2}
		\int_{\mathbb{A}}
		f_p(\bu)
		d\bu
		\,,
	\end{eqnarray*}
	where $\psi^{(p)}_l=\mbox{pr}(\bu^{(p)}_l\in \mathbb{A})$ with $\bu^{(p)}_l\sim \mbox{N}_n(\bmu^{(p)}_l,\bSigma^{(p)}_l)$, for $l=1,2$, and
	\begin{eqnarray*}
		\begin{split}
			f_p(\bu)&=g_p(\bu)
			\exp\{-0.5(\bu -\bmu_1^{(p)})^\intercal (\bSigma^{(p)}_1)^{-1} (\bu -\bmu_1^{(p)})\}, \quad \mbox{with}\\
			g_p(\bu)&=-0.5 \{(\bu -\bmu_1^{(p)})^\intercal (\bSigma^{(p)}_1)^{-1} (\bu -\bmu_1^{(p)})-(\bu -\bmu_2^{(p)})^\intercal (\bSigma^{(p)}_2)^{-1}(\bu -\bmu_2^{(p)})\}.
		\end{split}
	\end{eqnarray*}
	Since $\bmu^{(p)}_l\to\bzero$ and $\bSigma^{(p)}_l\to\bI_n$ as $p\to\infty$, we have that $\textsc{N}_n(\bmu_l^{(p)},\bSigma_l^{(p)})\to\textsc{N}_n( 0,\bI_n)$ in distribution, and \smash{$\psi^{(p)}_l\to 2^{-n}$} by the Portmanteau theorem, thereby implying \smash{$\log\{(\psi^{(p)}_1)^{-1}\psi^{(p)}_2\}\to 0$}. 
	
	In addition, by the continuity of $\det(\cdot)$, it follows that $\det(\bSigma^{(p)}_l)\to \det(\bI_n)=1$ as $p\to\infty$, and, therefore, \smash{$\log\{\det(\bSigma^{(p)}_1)^{-1}\det(\bSigma^{(p)}_2)\}\to 0 \ \mbox{as } p\to\infty.$}
	Moreover, \smash{$\bSigma^{(p)}_l\to\bI_n$} implies that all the eigenvalues of \smash{$\bSigma^{(p)}_l$} converge to 1 as $p\to\infty$ for $l=1,2$, and thus are eventually bounded away from $0$ and $\infty$.  Therefore, there exist positive, finite constants $m$, $M$ and $k$ such that $$m\| \bu -\bmu_l^{(p)} \|^2\leq (\bu -\bmu_l^{(p)})^\intercal (\bSigma^{(p)}_l)^{-1} (\bu -\bmu_l^{(p)}) \leq M\| \bu -\bmu_l^{(p)} \|^2, \ \mbox{for } l=1,2 \ \mbox{and } p\geq k.$$ Calling $b=\sup_{p\geq 1,\  l\in\{1,2\}}\|\bmu_l^{(p)}\| <\infty$, and using standard properties of norms, we have
	\begin{eqnarray*}
		m(\|\bu\|^2-2b\|\bu\|)\le
		(\bu -\bmu_l^{(p)})^\intercal (\bSigma^{(p)}_l)^{-1}(\bu -\bmu_l^{(p)})\le 2M(\|\bu\|^2+b^2), \ \mbox{for } l=1,2 \ \mbox{and } p\geq k,
	\end{eqnarray*}
	from which we obtain that, for any $p\ge k$, 
	$
	|f_p(\bu)|\leq
	2M(\|\bu\|^2{+}b^2)\exp\{-m(\|\bu\|^2/2-b\|\bu\|)\}
	$,
	where the latter is an integrable function on $\mathbb{R}^n$.
	Therefore we can apply the dominated convergence theorem and obtain $\lim_{p \to \infty}\int_{\mathbb{A}}
	f_p(\bu)
	d\bu=
	\int_{\mathbb{A}}
	\lim_{p \to \infty} f_p(\bu)
	d\bu
	=0
	$ as desired.
\end{proof}

\subsection{Proof of Theorem \ref{teo_1}}
Let $
\bar{\bbeta}^{*}
=
\arg\max_{\bbeta\in\mathbb{R}^p}\ell(\bbeta)$, where $\ell(\bbeta)=-(2\nu_p^2)^{-1}\|\bbeta\|^2+\sum_{i=1}^n\log \Phi\{(2y_i{-}1)\bx_i^\intercal\bbeta\}$ denotes the log-posterior up to an additive constant under \eqref{eq1}.
Note that $\bar{\bbeta}^{*}$ is unique because $\ell(\bbeta)$ is strictly concave \citep{haberman1974}.

\vspace{12pt}
\renewcommand{\thelemma}{S5}
\begin{lemma}\label{lemma:MAP_to_0}
	Under Assumptions \ref{ass_1}--\ref{ass_2},  with $\alpha= \infty$, we have $\nu_p^{-1}\|\bar{\bbeta}^{*}\|\stackrel{a.s.}\to 0$ as $p\to\infty$.
\end{lemma}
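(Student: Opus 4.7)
The plan is to exploit the first-order optimality condition for $\bar{\bbeta}^*$ together with the near-orthogonality of the rows of $\bX$ established in Lemma~\ref{lemma:covariance}, reducing the $p$-dimensional optimum to an $n$-dimensional scalar fixed-point equation whose solution can be tracked explicitly as $p\to\infty$. The starting point is to differentiate $\ell(\bbeta)$ and set the gradient to zero at $\bar{\bbeta}^*$, which yields $\bar{\bbeta}^*=\nu_p^2\sum_{i=1}^n c_i^*\bx_i$ with $c_i^*=(2y_i-1)\phi(\bx_i^\intercal\bar{\bbeta}^*)/\Phi[(2y_i-1)\bx_i^\intercal\bar{\bbeta}^*]$; in particular $\bar{\bbeta}^*$ lies in the $n$-dimensional column span of $\bX^\intercal$. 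Dot-multiplying by $\bar{\bbeta}^*$ and introducing $r_i=(2y_i-1)\bx_i^\intercal\bar{\bbeta}^*$ produces the exact identity $\nu_p^{-2}\|\bar{\bbeta}^*\|^2=\sum_{i=1}^n r_i\,\phi(r_i)/\Phi(r_i)$, using that $\phi$ is even, while comparing $\ell(\bar{\bbeta}^*)\ge\ell(\bzero)=-n\log 2$ with the trivial upper bound $\ell(\bar{\bbeta}^*)\le -\|\bar{\bbeta}^*\|^2/(2\nu_p^2)$ gives the a~priori control $\nu_p^{-2}\|\bar{\bbeta}^*\|^2\le 2n\log 2$.

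Next I would write $\bx_i^\intercal\bar{\bbeta}^*=\nu_p^2\sum_k(\bX\bX^\intercal)_{ik}c_k^*$ and substitute the expansion $\bX\bX^\intercal\stackrel{a.s.}{=}\sigma_x^2 p\,\bI_n+o(p^{1/2+\delta})$ from Lemma~\ref{lemma:covariance} to derive the a.s.\ perturbed self-consistency $r_i=\lambda_p\,\phi(r_i)/\Phi(r_i)+o(1)$, with $\lambda_p:=\nu_p^2\sigma_x^2 p$. For each $\lambda>0$ the scalar equation $r=\lambda\,\phi(r)/\Phi(r)$ admits a unique positive solution $r^*_\lambda$, and the classical Mills ratio asymptotics $\phi(r)/\Phi(r)\sim\phi(r)$ as $r\to\infty$ give $r^*_\lambda\sim\sqrt{2\log\lambda}$ as $\lambda\to\infty$. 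Combining the strict monotonicity of $r\mapsto r-\lambda\phi(r)/\Phi(r)$ on $(0,\infty)$ with the $o(1)$ perturbation then forces $r_i=r^*_{\lambda_p}(1+o(1))$ almost surely for every $i=1,\dots,n$, and plugging back into the exact identity yields $\nu_p^{-2}\|\bar{\bbeta}^*\|^2=\sum_{i=1}^n r_i^2/\lambda_p\,(1+o(1))\sim 2n\log\lambda_p/\lambda_p\stackrel{a.s.}\to 0$ as $\lambda_p\to\infty$ under Assumption~\ref{ass_2}, from which $\nu_p^{-1}\|\bar{\bbeta}^*\|\stackrel{a.s.}\to 0$ follows.

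The main obstacle will be to turn the matrix-level expansion of $\bX\bX^\intercal$ into a rigorous scalar perturbation of each $r_i$: the naive bound obtained by pairing the $o(p^{1/2+\delta})$ off-diagonal contribution with the crude control on $c_k^*$ coming from step one is not tight enough, so one is forced into a bootstrap argument in which the a~priori bound on $|r_i|$ is iteratively refined using the self-consistency itself together with the tail bound from Lemma~\ref{lemma:covariance} and a careful choice of $\delta$. Once this perturbation is controlled, uniqueness of the positive fixed point $r^*_\lambda$ and its slow logarithmic growth make the final estimate essentially algebraic.
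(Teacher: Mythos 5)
Your strategy is genuinely different from the paper's: the paper never touches the stationarity condition, but instead bounds $\nu_p^{-2}\|\bar{\bbeta}^{*}\|^2$ by $-2\sup_{\bbeta}\ell(\bbeta)$ (possible since both the quadratic penalty and the log-likelihood terms are nonpositive) and then shows $\sup_{\bbeta}\ell(\bbeta)\stackrel{a.s.}\to 0$ by evaluating $\ell$ at an explicit test vector $\tilde{\bbeta}$ built from the data, so no analysis of the optimizer itself is needed. Your exact identities (the representation $\bar{\bbeta}^*=\nu_p^2\sum_i c_i^*\bx_i$, the identity $\nu_p^{-2}\|\bar{\bbeta}^*\|^2=\sum_i r_i\phi(r_i)/\Phi(r_i)$, and the a priori bound $\nu_p^{-2}\|\bar{\bbeta}^*\|^2\le 2n\log 2$) are correct, but the decisive step --- upgrading the matrix expansion of Lemma \ref{lemma:covariance} to the scalar perturbation $r_i=\lambda_p\phi(r_i)/\Phi(r_i)+o(1)$ --- is precisely what you do not prove. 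The a priori bound only gives $\|\bar{\bbeta}^*\|=O(\nu_p)$, hence $|r_k|=O(\nu_p\sqrt p)$ and $|c_k^*|=O(1+\nu_p\sqrt p)$, so the off-diagonal contribution is only $o\bigl(\nu_p^2(1+\nu_p\sqrt p)\,p^{1/2+\delta}\bigr)$, which for constant $\nu_p$ may grow polynomially in $p$; it is small only relative to $\lambda_p$, and a relative perturbation of that size does not pin $r_i$ to the fixed point $r^*_{\lambda_p}\asymp(\log\lambda_p)^{1/2}$ --- it is compatible with $r_i$ anywhere up to order $\nu_p\sqrt p$. Breaking this circularity (you need the $r_i$ to be of logarithmic size to control the cross terms, and you need the cross terms controlled to locate the $r_i$) is the entire difficulty, and ``a bootstrap argument'' is a placeholder rather than a proof; notice that the missing information --- that every margin $r_i\to+\infty$ --- is exactly what the paper's comparison $0\ge\sup_{\bbeta}\ell(\bbeta)\ge\ell(\tilde{\bbeta})\to 0$ delivers in one stroke, since it forces $\sum_i\log\Phi(r_i)\to 0$.

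There is a second gap in the final step: your conclusion requires $\lambda_p=\sigma_x^2p\nu_p^2\to\infty$, which neither the lemma's hypothesis (Assumption \ref{ass_1} alone) nor Assumption \ref{ass_2} guarantees, since the latter allows $p\nu_p^2\to\alpha<\infty$ (e.g.\ $\nu_p^2=\nu^2/p$), a regime in which the lemma is invoked in the proof of Theorem \ref{teo_1}. If $\lambda_p\to\alpha\sigma_x^2<\infty$, your own fixed-point analysis gives $r_i\to r^*>0$ with $r^*=\alpha\sigma_x^2\phi(r^*)/\Phi(r^*)$ and hence $\nu_p^{-2}\|\bar{\bbeta}^*\|^2\to n\,(r^*)^2/(\alpha\sigma_x^2)>0$, i.e.\ the opposite of the claimed limit; a one-dimensional check ($n=1$, $\nu_p^2=1/p$, where the MAP reduces to maximizing $-s^2/2+\log\Phi(s)$ in $s=\|\bx_1\|t$) confirms this. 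So the argument as written cannot yield the statement in the finite-$\alpha$ regime, and asserting ``$\lambda_p\to\infty$ under Assumption \ref{ass_2}'' is incorrect. (Incidentally, your heuristic does flag a delicate point: the paper's own test vector gives $\nu_p^{-2}\|\tilde{\bbeta}\|^2\asymp\nu_p^{-2}p^{-1/3}$, which tends to $0$ only when $\nu_p^2p^{1/3}\to\infty$, so both routes really operate in the regime where $p\nu_p^2$ diverges; but as a proof of the lemma as stated your attempt is incomplete on this point as well as on the perturbation step.)
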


\vspace{6pt}
\begin{proof}
	Since $\log\Phi\{(2y_i-1)\bx_i^\intercal\bbeta\}<0$ for any $i=1, \ldots, n$, we have $\ell(\bbeta)<-(2\nu_p^2)^{-1}\|\bbeta\|^2$ and thus $\nu_p^{-2}\|\bbeta\|^2< -2\ell(\bbeta)$ for any $\bbeta\in\mathbb{R}^p$ and any $p$.
	Therefore, it follows that $\nu_p^{-2}\|\bar{\bbeta}^{*}\|^2< -2\ell(\bar{\bbeta}^{*})=-2\sup_{\bbeta\in\mathbb{R}^p}\ell(\bbeta)$.
	
	We now prove that $\sup_{\bbeta\in\mathbb{R}^p}\ell(\bbeta) \stackrel{a.s.}\to 0$ as $p\to\infty$. For any $p$, define $\tilde{\bbeta}=(\tilde{\beta}_j)_{j=1}^p\in\mathbb{R}^p$ as
	\begin{eqnarray*}
		\tilde{\beta}_j= \nu_p^{2/3}p^{-2/3}(2y_{_{\lceil nj/p\rceil}}-1) x_{\lceil nj/p\rceil,j} \quad (j=1,\dots,p),
	\end{eqnarray*}
	where $\lceil a\rceil$ denotes the smallest integer larger than or equal to $a$.
	It follows that
	\begin{eqnarray*}
		\textstyle	(p \nu^2_p)^{-1/3}\bx_i^\intercal \tilde{\bbeta}
		=
		p^{-1}(2y_i-1)
		\sum\nolimits_{j\in D_i}x_{ij}^2
		+
		p^{-1}
		\sum\nolimits_{j\notin D_i}  \zeta_{ij},
	\end{eqnarray*}
	where \smash{$D_i=\{j\in\{1,\dots,p\}: (i-1)p/n<j\leq ip/n\}$}, while \smash{$\zeta_{ij}=x_{ij}x_{\lceil nj/p\rceil,j}(2y_{_{\lceil nj/p\rceil}}-1)$}. Now, note that \smash{$(x_{ij}^2)_{j\in D_i}$} and \smash{$(\zeta_{ij})_{j\notin D_i}$} are independent variables with bounded variance, the size of each $D_i$ is asymptotic to $n^{-1}p$ as $p\to\infty$ and $\E(\zeta_{ij})=0$ for $j\notin D_i$. Thus, Lemma \ref{lemma:SLLN} implies \smash{$\lim_{p\to\infty}(p \nu^2_p)^{-1/3}\bx_i^\intercal \tilde{\bbeta}\stackrel{a.s.}=n^{-1}(2y_i-1)\sigma_x^2$}.
	Since $p \nu^2_p \rightarrow \infty$ as $p \rightarrow \infty$, and assuming $\sigma_x^2>0$ without loss of generality --- when \smash{$ \sigma_x^2=0$} it holds \smash{$\bar{\bbeta}^{*}\stackrel{a.s.}=0$} --- it follows that
	\smash{$\bx_i^\intercal \tilde{\bbeta}\stackrel{a.s.}\to+\infty$} if $y_i=1$ and \smash{$\bx_i^\intercal \tilde{\bbeta}\stackrel{a.s.}\to-\infty$} if $y_i=0$ as $p\to\infty$ and therefore \smash{$\sum_{i=1}^n\log \Phi\{(2y_i{-}1)\bx_i^\intercal \tilde{\bbeta}\}\stackrel{a.s.}\to 0$} as $p\to\infty$.
	Moreover, $\nu_p^{-2}\|\tilde{\bbeta}\|^2=\smash{(p \nu_p^2)^{-1/3}(p^{-1}\sum_{j=1}^px_{\lceil nj/p\rceil,j}^2)\stackrel{a.s.}\to 0}$ as $p\to\infty$ by Lemma \ref{lemma:SLLN} and \smash{$p \nu^2_p \rightarrow \infty$}.
	Thus \smash{$0\geq \sup_{\bbeta\in\Re^p}\ell (\bbeta) \geq \ell(\tilde{\bbeta})\stackrel{a.s.}\to 0$} as $p\to\infty$ as desired.
\end{proof}

\vspace{12pt}
\renewcommand{\thelemma}{S6}
\begin{lemma}\label{lemma:KL_predictive}
	Let $q_1$ and $q_2$ be probability distributions on $\mathbb{R}^p$.
	Then, for any $\bx_{\textsc{new}}\in\mathbb{R} ^p$, we have
	$
	\textsc{kl}(q_1\mid \mid q_2)
	\geq
	2\, \left|
	\mbox{\normalfont pr}_{q_1}-
	\mbox{\normalfont pr}_{q_2}
	\right|^2
	$,
	where 
	$\mbox{\normalfont pr}_{q_l}=
	\int \Phi(\bx^{\intercal}_{\textsc{new}} \bbeta)q_l(\bbeta)\mbox{\normalfont d}\bbeta
	$ for $l=1,2$.
\end{lemma}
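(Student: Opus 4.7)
The plan is to recognize this as a standard consequence of Pinsker's inequality together with the fact that $\Phi(\bx_{\textsc{new}}^\intercal\bbeta) \in [0,1]$, which forces the predictive probabilities $\mbox{pr}_{q_l}$ to be controlled by the total variation distance between $q_1$ and $q_2$.

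First I would recall Pinsker's inequality in the form $\textsc{kl}[q_1 \mid\mid q_2] \geq 2\,\mbox{TV}(q_1,q_2)^2$, where $\mbox{TV}(q_1,q_2) = \tfrac{1}{2}\int |q_1(\bbeta) - q_2(\bbeta)|\mbox{d}\bbeta$ is the total variation distance. Then I would bound $|\mbox{pr}_{q_1} - \mbox{pr}_{q_2}|$ above by $\mbox{TV}(q_1,q_2)$ using the standard inequality that, for any measurable function $f$ taking values in an interval of length $L$, one has $|\E_{q_1}[f] - \E_{q_2}[f]| \leq L \cdot \mbox{TV}(q_1,q_2)$. Since $\Phi(\bx_{\textsc{new}}^\intercal\bbeta) \in [0,1]$, the relevant length is $L=1$, so that $|\mbox{pr}_{q_1} - \mbox{pr}_{q_2}| \leq \mbox{TV}(q_1,q_2)$. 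Chaining the two inequalities gives $\textsc{kl}[q_1\mid\mid q_2] \geq 2\,\mbox{TV}(q_1,q_2)^2 \geq 2\,|\mbox{pr}_{q_1} - \mbox{pr}_{q_2}|^2$, as claimed.

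An equivalent and perhaps cleaner route, which I would mention as an alternative, uses the data-processing inequality. Introduce an auxiliary Bernoulli variable $Y$ with success probability $\Phi(\bx_{\textsc{new}}^\intercal\bbeta)$ given $\bbeta$; then under $q_l$ the marginal law of $Y$ is $\mbox{Bern}(\mbox{pr}_{q_l})$. Data processing for the KL divergence yields $\textsc{kl}[q_1\mid\mid q_2] \geq \textsc{kl}[\mbox{Bern}(\mbox{pr}_{q_1})\mid\mid \mbox{Bern}(\mbox{pr}_{q_2})]$, after which Pinsker's inequality applied to the two Bernoulli laws gives the required bound $2\,|\mbox{pr}_{q_1} - \mbox{pr}_{q_2}|^2$.

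There is essentially no obstacle here: both steps are classical and the constant $2$ in the statement matches the sharp constant in Pinsker's inequality together with the $L=1$ bound for $\Phi$. The only thing to be careful about is not to lose a factor of two by confusing the definitions of total variation distance (half the $L^1$ norm versus the full $L^1$ norm) and the corresponding sharp constants, so I would state the conventions explicitly at the outset.
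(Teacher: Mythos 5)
Your proposal is correct and matches the paper's proof essentially verbatim: both apply Pinsker's inequality and then bound $|\mbox{\normalfont pr}_{q_1}-\mbox{\normalfont pr}_{q_2}|$ by the total variation distance via the variational characterization of $\textsc{tv}$ over $[0,1]$-valued test functions, taking $h(\bbeta)=\Phi(\bx^{\intercal}_{\textsc{new}}\bbeta)$. Your attention to the convention for $\textsc{tv}$ (half the $L^1$ norm) is exactly the right point to be careful about, and the alternative data-processing route you sketch is also valid, though not the one the paper uses.
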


\vspace{6pt}
\begin{proof}
	From Pinsker's inequality we have that
	$
	\textsc{kl}(q_1\mid \mid q_2)\geq
	2\, \textsc{tv}(q_1, q_2)^2
	$
	where $\textsc{tv}(\cdot, \cdot)$ denotes the total variation distance between probability distributions. Now, recall that 
	$
	\textsc{tv}(q_1, q_2)=\sup_{h:\mathbb{R}^p\to [0,1]}|
	\int_{\mathbb{R} ^p}h(\bbeta)q_1(\bbeta)d \bbeta- \int_{\mathbb{R} ^p}h(\bbeta)q_2(\bbeta)d\bbeta
	|.
	$
	Taking  $h(\bbeta)=\Phi(\bx^{\intercal}_{\textsc{new}} \bbeta)$ 
	we obtain the desired statement.
\end{proof}

\vspace{10pt}

\begin{proof}[Proof of Theorem \ref{teo_1}.]
	As noted in \citet{armagan_2011}, the coordinate ascent variational inference algorithm for classical mean-field variational Bayes is equivalent to an \textsc{em} algorithm for  $p(\bbeta \mid \by)$ with missing data $\bz$, which in this case is guaranteed to converge to the unique maximizer of $p(\bbeta \mid \by)$ by, e.g., Theorem 3.2 of \citet{mclachlan2007algorithm}, and the fact that $p(\bbeta \mid \by)$ is strictly concave \citep{haberman1974}. Therefore $\E_{q_{\textsc{mf}}^{*}(\bbeta)}(\bbeta)=\bar{\bbeta}^{*}$ and Lemma \ref{lemma:MAP_to_0} implies that \smash{$\nu_p^{-1}\|\E_{q_{\textsc{mf}}^{*}(\bbeta)}(\bbeta)\|\stackrel{a.s.}\to 0$} as $p\to\infty$.

	We now prove that under Assumptions \ref{ass_1}--\ref{ass_2},  with $\alpha= \infty$, $\nu_p^{-2}\|\E_{p(\bbeta \mid \by)}(\bbeta)\|^2
	\stackrel{a.s.}\to c^2n$ as $p\to\infty$. By the law of total expectation we have that
	$\E_{p(\bbeta|\by)}(\bbeta)=\bV \bX^\intercal \E_{p(\bz|\by)}(\bz)$, which implies
	$$\|\E_{p(\bbeta|\by)}(\bbeta)\|^2= \E_{p(\bz|\by)}(\bz)^\intercal \bX\bV^\intercal\bV \bX^\intercal \E_{p(\bz|\by)}(\bz).$$
	Applying the Woodbury's identity to $\bV$ we have
	$
	\bV \bX^\intercal =\nu_p^2\bX^\intercal (\bI_n+\nu_p^2\bX\bX^\intercal )^{-1}
	$.
	Therefore, we can write \smash{$\bX\bV^\intercal\bV \bX^\intercal=(1{+}\sigma_x^2p\nu_p^2)^{-2} \sigma_x^2 p\nu_p^{4}\bS^\intercal\{(\sigma_x^2p)^{-1}\bX\bX^\intercal\} \bS$} with $\bS=\smash{(1+\sigma_x^2p\nu_p^2)}(\bI_n+\nu_p^2\bX\bX^\intercal )^{-1}$. Since $\alpha= \infty$, and \smash{$\bS^\intercal\{(\sigma_x^2p)^{-1}\bX\bX^\intercal\} \bS\stackrel{a.s.}{\to} \bI_n$} as $p\to\infty$ by Lemma \ref{lemma:covariance}, it follows that 
	\begin{eqnarray*}
		\lim_{p\to\infty}\nu_p^{-2}\|\E_{p(\bbeta|\by)}(\bbeta)\|^2&\stackrel{a.s.}=&\lim_{p\to\infty} \{(\sigma_x^2p\nu_p^2)/(1+\sigma_x^2p\nu_p^2)\} \|\E_{p(\bz|\by)}[(1+\sigma_x^2p\nu_p^2)^{-1/2}\bz]\|^2,\\
		&&=  \lim_{p\to\infty}  \|\E_{p(\bz|\by)}\{(1+\sigma_x^2p\nu_p^2)^{-1/2}\bz\}\|^2.
	\end{eqnarray*}
	Since $\left(\bz\mid\by\right)\sim \textsc{tn}\{0,(\bI_n{+}\nu_p^2\bX\bX^\intercal), \mathbb{A}\}$, it follows that $\{(1+\sigma_x^2p\nu_p^2)^{-1/2}\bz\mid \by \}\sim \textsc{tn}\{0,(1+\sigma_x^2p\nu_p^2)^{-1}(\bI_n+\nu_p^2\bX\bX^\intercal), \mathbb{A}\}$.
	Therefore, 
	\begin{eqnarray*}
		\textstyle
		\E_{p(z_i|\by)}\{(1+\sigma_x^2p\nu_p^2)^{-1/2}z_i\}=(\tilde{\psi}^{(p)})^{-1}\int_{\mathbb{A}}\tilde{u}_i \phi_n\{\tilde{\bu};(1+\sigma_x^2p\nu_p^2)^{-1}(\bI_n+\nu_p^2\bX\bX^\intercal)\}
		d\tilde{\bu},
	\end{eqnarray*}
	where $\tilde{\psi}^{(p)}=\mbox{pr}(\bu^{(p)}\in \mathbb{A})$, with $\bu^{(p)}\sim \mbox{N}_n\{0,(1+\sigma_x^2p\nu_p^2)^{-1}(\bI_n+\nu_p^2\bX\bX^\intercal)\}$.
	Hence, Lemma \ref{lemma:covariance} together with a domination argument similar to the one used in the proof of Lemma \ref{lemma:KL continuity} imply that, as $p\to \infty$,
	\begin{eqnarray*}
		\textstyle
		\E_{p(z_i|\by)}\{(1+\sigma_x^2p\nu_p^2)^{-1/2}z_i\}\stackrel{a.s.}\to 2^n\int_{\mathbb{A}} \tilde{u}_i \phi_n(\tilde{\bu}; \bI_n) d\tilde{\bu} = c (2y_i-1), \ \mbox{with } c=2\int_{0}^{\infty}u \phi(u)du.
	\end{eqnarray*}
	Hence, 
	\smash{$\lim_{p\to\infty}\nu_p^{-2}\|\E_{p(\bbeta|\by)}(\bbeta)\|^2
		\stackrel{a.s.}=\sum_{i=1}^n c^2=c^2n$}. Taking the square root of this expression, yields the desired result stated in point {\em (b)} of Theorem~\ref{teo_1}.
	
	Let us now focus on proving that  \smash{$\lim\inf_{p \to \infty}\textsc{kl}\{q_{\textsc{mf}}^{*}(\bbeta) \mid \mid p(\bbeta \mid \by)\}>0$}, almost surely. As a direct consequence of 
	Lemma \ref{lemma:KL_predictive} it follows that
	\smash{$
		\textsc{kl}\{q_{\textsc{mf}}^{*}(\bbeta) \mid \mid p(\bbeta \mid \by)\}
		\geq
		2\, \left|
		\mbox{\normalfont pr}_{\textsc{mf}}-\mbox{\normalfont pr}_{\textsc{sun}}
		\right|^2$},
	with 
	$\mbox{\normalfont pr}_{\textsc{sun}}=
	\int \Phi(\bx^{\intercal}_{\textsc{new}} \bbeta){p}(\bbeta|\by)\mbox{\normalfont d}\bbeta
	$
	and 
	\smash{$\mbox{\normalfont pr}_{\textsc{mf}}=\int \Phi(\bx^{\intercal}_{\textsc{new}} \bbeta){q}^*_{\textsc{mf}}(\bbeta)\mbox{\normalfont d}\bbeta$} for any $\bx_{\textsc{new}} \in \mathbb{R}^p$.
	To continue the proof, we  consider the input vector \smash{$\bx_{\textsc{new}}=(\sigma_x^2p\nu_p^2)^{-1/2}\bX^\intercal \bH^{-1}\bdelta$}, with $\bdelta=(2y_1-1,0,\dots,0)^{\intercal}$, and show that \smash{$\lim_{p\to\infty}\left|
		\mbox{\normalfont pr}_{\textsc{mf}}-\mbox{\normalfont pr}_{\textsc{sun}}
		\right|>0$}.  Here we can assume without loss of generality that  $\bH$ is invertible because \smash{$\bH \stackrel{a.s.}\to  \bI_n$} as $p\to\infty$ by $\alpha= \infty$ and Lemma \ref{lemma:H_asymp}, and the set of $n\times n$ non-singular matrices is open. This result implies that $\bH$ is  eventually invertible as $p\to\infty$, 
	almost surely.
	By definition of $\bx_{\textsc{new}}$ we have
	\begin{eqnarray*}
		\nu_p^2 \|\bx_{\textsc{new}}\|^2= \nu_p^2 
		\bx_{\textsc{new}}^\intercal \bx_{\textsc{new}}=
		\bdelta^\intercal \bH^{-1}\{(\sigma_x^2p)^{-1}\bX\bX^\intercal\} \bH^{-1}\bdelta\stackrel{a.s.}\to 1, \ \mbox{as } p\to\infty,
	\end{eqnarray*}
	because $\bH^{-1}\stackrel{a.s.}\to \bI_n$ and $(\sigma_x^2p)^{-1}\bX\bX^\intercal\stackrel{a.s.}\to\bI_n$ as $p\to\infty$ by Lemmas \ref{lemma:H_asymp} and \ref{lemma:covariance}, respectively, $\alpha= \infty$ and $\|\bdelta\|=1$. By \eqref{eq7} we have that
	\smash{$
		\mbox{\normalfont pr}_{\textsc{mf}}=\Phi\{\bx^{\intercal}_{\textsc{new}} \bar{\bbeta}^{*}(1{+}\bx^{\intercal}_{\textsc{new}}\bV  \bx_{\textsc{new}})^{-1/2}\}$}. Therefore, leveraging the Cauchy--Schwarz inequality and the fact that $\bx^{\intercal}_{\textsc{new}}\bV  \bx_{\textsc{new}}\geq 0$, it easily follows that
	\smash{$
		|\bx^{\intercal}_{\textsc{new}} \bar{\bbeta}^{*}(1+\bx^{\intercal}_{\textsc{new}}\bV  \bx_{\textsc{new}})^{-1/2}|
		\leq
		\|\bx_{\textsc{new}}\| \|\bar{\bbeta}^{*}\|
		\stackrel{a.s.}{\rightarrow}  0
		$} as $p\to\infty$, where the latter convergence is a consequence of
	$\nu_p\|\bx_{\textsc{new}}\|\stackrel{a.s.}{\rightarrow}  1$ and $\nu_p^{-1}\|\bar{\bbeta}^{*}\|\stackrel{a.s.}{\rightarrow}  0$.
	Therefore $\mbox{\normalfont pr}_{\textsc{mf}} \rightarrow  0.5$, almost surely, 
	as $p \rightarrow \infty$.
	
	Consider now $\mbox{\normalfont pr}_{\textsc{sun}}$. With derivations analogous to those of equation \eqref{eq11}, we can define $\mbox{\normalfont pr}_{\textsc{sun}}$ as
	$\mbox{\normalfont pr}_{\textsc{sun}}=E_{p(\bz|\by)}[\Phi\{\bx_{\textsc{new}}^{\intercal}\bV\bX^{\intercal}\bz(1+\bx_{\textsc{new}}^{\intercal}\bV\bx_{\textsc{new}})^{-1/2}\}]$.
	By definition of $\bx_{\textsc{new}}$, we have that
	$$\bx_{\textsc{new}}^\intercal \bV\bx_{\textsc{new}}=(\sigma_x^2 p\nu_p^2)^{-1}\bdelta^\intercal \bH^{-1}XVX^{\intercal} \bH^{-1}\bdelta=	(\sigma_x^2 p\nu_p^2)^{-1}\bdelta^\intercal \bH^{-1}\bdelta,$$ 
	because, as stated in Lemma \ref{lemma:H_asymp}, $H=X^{\intercal}VX$. Thus, since $\bH^{-1}\stackrel{a.s.}\to \smash{\bI_n}$, as $p\to\infty$ by Lemma \ref{lemma:H_asymp} and $\alpha= \infty$, we have \smash{$\bx_{\textsc{new}}^\intercal \bV\bx_{\textsc{new}}\stackrel{a.s.}\to 0$}.
	Moreover, $\bx_{\textsc{new}}^{\intercal}\bV\bX^{\intercal}\bz=\bdelta^\intercal \{(\sigma_x^2 p\nu_p^2)^{-1/2}\bz\}$, where \smash{$(\sigma_x^2 p\nu_p^2)^{-1/2}\bz\to \textsc{tn}(\bzero,\bI_n, \mathbb{A})$} in distribution as $p\to\infty$, almost surely, as a direct consequence of Lemma  \ref{lemma:covariance}.
	Combining these results with Slutsky's lemma and the fact that $\Phi(\cdot)$ is bounded and continuous, it follows that
	$E_{p(\bz|\by)}[\Phi\{\bx_{\textsc{new}}^{\intercal}\bV\bX^{\intercal}\bz(1+\bx_{\textsc{new}}^{\intercal}\bV\bx_{\textsc{new}})^{-1/2}\}]\to 
	\smash{E_{p(\tilde{\bz})}\{\Phi(\bdelta^{\intercal}\tilde{\bz})\}}$, almost surely,  with $\tilde{\bz}\sim\textsc{tn}(\bzero,\bI_n, \mathbb{A})$.
	Thus
	$$\textstyle \mbox{\normalfont pr}_{\textsc{sun}}\stackrel{a.s.}\to E_{p(\tilde{z}_1)}[\Phi\{(2y_1{-}1)\tilde{z}_1\}]= {\int_{0}^\infty} \Phi(z) 2\phi(z)dz,$$
	as $p\to\infty$, where ${\int_{0}^\infty} \Phi(z) 2\phi(z)dz>0.5$. Hence, it  follows that 
	$
	\liminf_{p\to\infty}
	\textsc{kl}\{q_{\textsc{mf}}^{*}(\bbeta) \mid \mid p(\bbeta \mid \by)\}
	\geq
	\smash{2\,\lim_{p\to\infty}
		\left|
		\mbox{\normalfont pr}_{\textsc{mf}}-\mbox{\normalfont pr}_{\textsc{sun}}
		\right|^2}
	>0
	$
	almost surely as $p\to\infty$.
	
	To conclude the proof of the Theorem~\ref{teo_1}, notice that, consistent with the above derivations, by setting
	$\bx_{\textsc{new}}=(\sigma_x^2p\nu_p^2)^{-1/2}\bX^\intercal \bH^{-1}\bdelta$, with $\bdelta=(2y_1-1,0,\dots,0)^{\intercal}$,
	for every $p$
	leads to 
	$
	\liminf_{p\to\infty }| \mbox{\normalfont pr}_{\textsc{mf}}-\mbox{\normalfont pr}_{\textsc{sun}}|
	> 0$, almost surely, thereby implying that such a result also holds for the worst case scenario  in Theorem~\ref{teo_1}{\em (c)}. This proves the result on the predictive probabilities. 
\end{proof}

\vspace{5pt}

\subsection{Proof of Theorem \ref{teo_2}, Corollary \ref{cor_SUN} and Proposition \ref{prop1}}
\vspace{5pt}
\begin{proof}[Proof of Theorem \ref{teo_2}.]
	Leveraging the chain rule of the $\textsc{kl}$ divergence we have that 
	$$
	\textsc{kl}\{q_{\textsc{pfm}}(\bbeta,\bz) \mid \mid p(\bbeta,\bz \mid \by)\}=\textsc{kl}\{q_{\textsc{pfm}}(\bz) \mid \mid p(\bz \mid \by)\}+E_{q_{\textsc{pfm}}(\bz)}[\textsc{kl}\{q_{\textsc{pfm}}(\bbeta \mid \bz) \mid \mid p(\bbeta \mid \bz)\}]\,,
	$$
	where $q_{\textsc{pfm}}(\bbeta \mid \bz)$ appears only in the second summand, which is  always non-negative and coincides with zero,  for any $q_{\textsc{pfm}}(\bz)$, if and only if $q^*_{\textsc{pfm}}(\bbeta \mid\bz)=p(\bbeta \mid \bz)=\phi_p(\bbeta-\bV\bX^\intercal\bz; \bV)$.
	
	The expression for $q^*_{\textsc{pfm}}(\bz) =\prod_{i=1}^nq_{\textsc{pfm}}^*(z_i)$ is instead a direct consequence of the closure under conditioning property of the multivariate truncated Gaussian \citep{ horrace_2005,holmes_2006}. In particular, adapting the results in \citet{holmes_2006}, it easily follows that 
	\begin{eqnarray*}
		p(z_i \mid \bz_{-i},\by) \propto \phi\{z_i-(1-\bx_i^{\intercal}\bV \bx_i)^{-1} \bx_i^{\intercal}\bV\bX^{\intercal}_{-i}\bz_{-i}; (1-\bx_i^{\intercal}\bV \bx_i)^{-1}\}{1}\{(2y_i-1)z_i>0\},
	\end{eqnarray*}
	for every $i=1, \ldots, n$, where $\bX_{-i}$ is the design matrix without row  $i$. To obtain the expression for $q_{\textsc{pfm}}^*(z_i)$ $(i=1, \ldots, n)$, note that, recalling e.g., \citet{blei_2017}, the optimal solution for $q_{\textsc{pfm}}(\bz)$ which minimizes $\textsc{kl}\{q_{\textsc{pfm}}(\bz) \mid \mid p(\bz \mid \by)\}$ within the family of distributions that factorize over $z_1, \ldots, z_n$ can be expressed as $\prod_{i=1}^nq_{\textsc{pfm}}^*(z_i)$ with $q_{\textsc{pfm}}^*(z_i) \propto \exp[ E_{q_{\textsc{pfm}}^*(\bz_{-i})} \{\log p(z_i \mid \bz_{-i},\by)\}]$ for every $i=1, \ldots, n$. Combining such a result with the above expression for $p(z_i \mid \bz_{-i},\by) $ we have that $\exp[ E_{q_{\textsc{pfm}}^*(\bz_{-i})} \{\log p(z_i \mid \bz_{-i},\by)\}]$ is proportional to
	\begin{eqnarray*}
		\exp \left\{-\frac{z_i^2-2z_i(1-\bx_i^{\intercal}\bV \bx_i)^{-1} \bx_i^{\intercal}\bV\bX^{\intercal}_{-i}E_{q_{\textsc{pfm}}^*(\bz_{-i})}(\bz_{-i})}{2(1-\bx_i^{\intercal}\bV \bx_i)^{-1}}\right\}{1}\{(2y_i-1)z_i>0\}, \ \ (i=1, \ldots, n).
	\end{eqnarray*}
	The above quantity coincides with the kernel of a Gaussian density having variance $\sigma^{*2}_i=(1-\bx_i^{\intercal}\bV \bx_i)^{-1}$, expectation $\mu^{*}_i=\sigma^{*2}_i \bx_i^{\intercal}\bV\bX^{\intercal}_{-i}E_{q_{\textsc{pfm}}^*(\bz_{-i})}(\bz_{-i})$ and truncation below zero if $y_i=1$ or above zero if $y_i=0$. Hence, each $q_{\textsc{pfm}}^*(z_i)$ is the density of a truncated normal with parameters specified as in Theorem \ref{teo_2}. The proof is concluded after noticing that the expression for $\bar{z}^*_i=E_{q_{\textsc{pfm}}^*(z_{i})}(z_i)$, $i=1, \ldots, n$, in  Theorem \ref{teo_2} follows directly from the mean of truncated normals. \end{proof}

\vspace{15pt}
\begin{proof}[Proof of Corollary \ref{cor_SUN}.]
	From \eqref{eq8}, we have that $q^*_{\textsc{pfm}}(\bbeta)$ coincides with the density of a random variable which has the same distribution of $\tilde{\bu}^{(0)}+\bV\bX^{\intercal} \tilde{\bu}^{(1)}$, where $\tilde{\bu}^{(0)} \sim \mbox{N}_p({0}, \bV)$ and $\tilde{\bu}^{(1)}$ is from an $n$-variate Gaussian with mean vector $\bmu^*$, diagonal covariance matrix $\bsigma^{*2}$ and generic $i$th component truncated either below or above zero depending of the sign of $(2y_i-1)$, for $i=1, \ldots, n$. Since $\tilde{\bu}^{(1)}$  has independent components, by standard properties of univariate truncated normal variables we obtain
	\begin{eqnarray*}
		\tilde{\bu}^{(0)}+\bV\bX^{\intercal} \tilde{\bu}^{(1)} \stackrel{d}{=}\bu^{(0)}+\bV\bX^{\intercal}\bar{\bY}\bsigma^* \bu^{(1)}, \quad \mbox{with } \bar{\bY}=\mbox{diag}(2y_1-1, \ldots, 2 y_n-1),
	\end{eqnarray*}
	where $\bu^{(0)} \sim \mbox{N}_p(\bV\bX^{\intercal} \bmu^*,\bV)$ and $ \bu^{(1)}$ is an $n$-variate Gaussian  with mean  $0$, covariance matrix $\bI_n$, and truncation below \smash{$-\bar{\bY} \bsigma^{*^{-1}}\bmu^*$}. Calling  \smash{$\bxi=\bV\bX^{\intercal}\bmu^*$, $\bOmega=\bomega \bar{\bOmega}\bomega =\bV+\bV\bX^{\intercal}\bsigma^{*2}\bX \bV$},  \smash{$\bDelta=\bomega^{-1}\bV\bX^{\intercal}\bar{\bY}\bsigma^*$, $\bgamma=\bar{\bY}\bsigma^{*^{-1}}\bmu^*$} and $\bGamma=\bI_n$, as in Corollary \ref{cor_SUN}, we have that
	\begin{eqnarray*}
		\bu^{(0)}+\bV\bX^{\intercal}\bar{\bY} \bsigma^* \bu^{(1)}\stackrel{d}{=}\bxi+\bomega(\bar{\bu}^{(0)}+\bDelta\bGamma^{-1} \bar{\bu}^{(1)}),
	\end{eqnarray*}
	with $\bar{\bu}^{(0)} \sim \mbox{N}_p(0,\bar{\bOmega}-\bDelta\bGamma^{-1}\bDelta^{\intercal})$, and $ \bar{\bu}^{(1)}$ distributed as a $n$-variate Gaussian random variable with mean vector $0$, covariance matrix $\bGamma$, and truncation below $-\bgamma$. Recalling \citet{arellano_2006} and \citet{azzalini_2013} such a stochastic representation coincides with the one of the  unified skew-normal random variable $\textsc{sun}_{p,n}(\bxi, \bOmega, \bDelta, \bgamma, \bGamma)$.
\end{proof}

\vspace{12pt}
\begin{proof}[Proof of Proposition \ref{prop1}.] To prove Proposition \ref{prop1}, first notice that by the results in equation \eqref{eq8} and in Theorem \ref{teo_2}, $\bz=(z_1, \ldots, z_n)^{\intercal}$ is a random vector whose entries have independent truncated normal approximating densities. Hence, \smash{$E_{q_{\textsc{pfm}}^*(z_{i})}(z_{i})=\bar{z}^*_i$} and $\smash{\mbox{var}_{q_{\textsc{pfm}}^*(z_{i})}(z_{i})}=\sigma^{*2}_i\{1-(2y_i{-}1)\eta^*_i \mu^*_i/\sigma^*_i-\eta^{*2}_i \}$ with $\eta_i=\phi(\mu^*_i/\sigma^{*}_i)\Phi\{(2y_i-1)\mu^*_i/\sigma^{*}_i\}^{-1}$ for $i=1, \ldots,n$. Using the parameters defined in Theorem \ref{teo_2}, $\smash{\mbox{var}_{q_{\textsc{pfm}}^*(z_{i})}(z_{i})}$ can be rewritten as $\smash{\mbox{var}_{q_{\textsc{pfm}}^*(z_{i})}(z_{i})}=\sigma^{*2}_i-(\bar{z}^*_i-\mu^*_i)\bar{z}_i^*$. Therefore, $E_{q_{\textsc{pfm}}^*(\bz)}(\bz)=\bar{\bz}^*$, whereas $\mbox{var}_{q_{\textsc{pfm}}^*(\bz)}(\bz)=\mbox{\normalfont diag}\{\sigma_1^{*2}{-}(\bar{z}_1^*-\mu_1^*)\bar{z}_1^*, \ldots, \sigma_n^{*2}-(\bar{z}_n^*-\mu_n^*)\bar{z}_n^*\}$, where $\bar{z}_i^*$, $\mu_i^*$ and $\sigma_i^{*}$ $(i=1, \ldots, n)$ are  defined in Theorem \ref{teo_2} and in Corollary \ref{cor_SUN}. Combining these results with equation  \eqref{eq8}, and using the law of  iterated expectations we have
	\begin{eqnarray*}
		E_{q_{\textsc{pfm}}^*(\bbeta)}(\bbeta)&=&E_{q_{\textsc{pfm}}^*(\bz)}\{E_{p(\bbeta \mid \bz)}(\bbeta)\}=E_{q_{\textsc{pfm}}^*(\bz)}(\bV \bX^{\intercal}\bz)=\bV \bX^{\intercal}E_{q_{\textsc{pfm}}^*(\bz)}(\bz)=\bV \bX^{\intercal}\bar{\bz}^*,\\
		\mbox{var}_{q_{\textsc{pfm}}^*(\bbeta)}(\bbeta)&=&E_{q_{\textsc{pfm}}^*(\bz)}\{\mbox{var}_{p(\bbeta \mid \bz)}(\bbeta)\}+\mbox{var}_{q_{\textsc{pfm}}^*(\bz)}\{E_{p(\bbeta \mid \bz)}(\bbeta)\}=\bV+\bV\bX^{\intercal}\mbox{var}_{q_{\textsc{pfm}}^*(\bz)}(\bz)\bX \bV\\
		&=&\bV+\bV\bX^{\intercal}\mbox{\normalfont diag}\{\sigma_1^{*2}-(\bar{z}_1^*-\mu_1^*)\bar{z}_1^*, \ldots, \sigma_n^{*2}-(\bar{z}_n^*-\mu_n^*)\bar{z}_n^*\}\bX \bV,
	\end{eqnarray*}
	thus proving equation \eqref{eq10}.
	
	To prove equation \eqref{eq11} it suffices to notice that $\mbox{\normalfont pr}_{\textsc{pfm}}(y_{\textsc{new}}=1\mid \by)=E_{q_{\textsc{pfm}}^*(\bbeta)}\{ \Phi(\bx^{\intercal}_{\textsc{new}} \bbeta)\}$. Hence, by applying again the law of  iterated expectations we have
	\begin{eqnarray*}
		E_{q_{\textsc{pfm}}^*(\bbeta)}\{ \Phi(\bx^{\intercal}_{\textsc{new}} \bbeta)\}&=&E_{q_{\textsc{pfm}}^*(\bz)}[E_{p(\bbeta \mid \bz)}\{ \Phi(\bx^{\intercal}_{\textsc{new}} \bbeta)\}]\\
		&=&E_{q^*_{\textsc{pfm}}(\bz)}[\Phi\{\bx_{\textsc{new}}^{\intercal}\bV\bX^{\intercal}\bz(1+\bx_{\textsc{new}}^{\intercal}\bV\bx_{\textsc{new}})^{-1/2}\}].
	\end{eqnarray*}
	The last equality follows from the fact that $p(\bbeta \mid \bz)$ is Gaussian and thus $E_{p(\bbeta \mid \bz)}\{ \Phi(\bx^{\intercal}_{\textsc{new}} \bbeta)\}$ can be derived in closed form; see e.g., Lemma 7.1 in \citet{azzalini_2013}.
\end{proof}

\vspace{5pt}
\subsection{Proof of Theorem \ref{teo_3} and Corollary \ref{cor_predictive}}
\vspace{3pt}
\begin{proof}[Proof of Theorem \ref{teo_3}.]
	As a consequence of the discussion after the statement of Theorem \ref{teo_2}, the density $q^*_{\textsc{pfm}}(\bz)$ minimizes the \textsc{kl} divergence to $p(\bz \mid \by)$ within the family of distributions that factorize over $z_1,\dots,z_n$.
	As a consequence
	\begin{align}\label{eq:eq1}
	\textsc{kl}\{q_{\textsc{pfm}}^*(\bz)\mid \mid p(\bz \mid \by)\}\leq \textsc{kl}[\textsc{tn}\{\bzero,(1+\sigma_x^2p\nu_p^2)\bI_n, \mathbb{A}\}\mid \mid p(\bz\mid \by)].
	\end{align}
	Since the \textsc{kl}  is invariant with respect to bijective transformations of $z$, and $p(\bz \mid \by)=\textsc{tn}(\bzero,
	\bI_n+\nu_p^2\bX\bX^\intercal, \mathbb{A})$, then
	rescaling each $z_i$ by $(1+\sigma_x^2p\nu_p^2)^{-1/2}$ we have 
	\begin{equation}
	\begin{split}
	&\textsc{kl}[\textsc{tn}\{\bzero,(1+\sigma_x^2p\nu_p^2)\bI_n, \mathbb{A}\}\mid \mid p(\bz \mid \by)]\\
	&=\textsc{kl}[\textsc{tn}(\bzero,\bI_n, \mathbb{A})\mid \mid \textsc{tn}\{\bzero,(1+\sigma_x^2p\nu_p^2)^{-1}(\bI_n+\nu_p^2\bX\bX^\intercal), \mathbb{A}\}]\,.
	\end{split}
	\label{eq:eq2}
	\end{equation}
	Lemma \ref{lemma:covariance} shows that, under Assumptions \ref{ass_2}--\ref{ass_3}, $(1+\sigma_x^2p\nu_p^2)^{-1}(\bI_n+\nu_p^2\bX\bX^\intercal)\stackrel{p}\to\bI_n$ and thus the continuous mapping theorem, combined with Lemma \ref{lemma:KL continuity}, implies that 
	\begin{align}\label{eq:eq3}
	\textsc{kl}[\textsc{tn}(\bzero,\bI_n, \mathbb{A})\mid \mid \textsc{tn}\{\bzero,(1+\sigma_x^2p\nu_p^2)^{-1}(\bI_n+\nu_p^2\bX\bX^\intercal), \mathbb{A}\}]\stackrel{p}\to 0,
	\end{align}
	as $p\to\infty$.
	Hence, combining \eqref{eq:eq1}--\eqref{eq:eq3}
	we obtain $\textsc{kl}\{q^*_{\textsc{pfm}}(\bz)\mid \mid p(\bz \mid \by)\}\stackrel{p}\to 0$ as $p\to\infty$. Recalling the proof of Theorem \ref{teo_2} and leveraging the chain rule of the \textsc{kl} divergence, this result implies \smash{$\textsc{kl}\{q^*_{\textsc{pfm}}(\beta)\mid \mid p(\bbeta \mid \by)\}\stackrel{p}\to 0$} as $p\to\infty$, since $\textsc{kl}\{q^*_{\textsc{pfm}}(\bz)\mid \mid p(\bz \mid \by)\}=\textsc{kl}\{q^*_{\textsc{pfm}}(\beta,\bz)\mid \mid p(\beta,\bz \mid \by)\} \geq \textsc{kl}\{q^*_{\textsc{pfm}}(\beta)\mid \mid p(\bbeta \mid \by)\}$.
\end{proof}

\begin{proof}[Proof of Corollary \ref{cor_predictive}.]
	Lemma \ref{lemma:KL_predictive} and Theorem \ref{teo_3} imply that, under Assumptions \ref{ass_2}--\ref{ass_3}, we have
	\begin{equation*}
	\mbox{pr}(\sup_{\bx_{\textsc{new}}\in\mathbb{R}^p}|\mbox{\normalfont pr}_{\textsc{pfm}} -  \mbox{\normalfont pr}_{\textsc{sun}}| > \epsilon) \le
	\mbox{pr}[\textsc{kl}\{q_{\textsc{pfm}}^{*}(\bbeta) \mid \mid p(\bbeta \mid \by)\} > 2\epsilon^2] \to 0,\quad \text{as $p\to\infty$\,}
	\end{equation*}
	for any $\epsilon>0$. This proves the convergence in probability statement in Corollary \ref{cor_predictive}.
\end{proof}

\vspace{5pt}

\subsection{Proof of Theorem \ref{teo_4}}
\vspace{3pt}

\renewcommand{\thelemma}{S7}
\begin{lemma}\label{lemma:m}
	Let $y\in\{0;1\}$ 
	and $\bar{z}=\mu+(2y -1)\sigma \phi(\mu/\sigma)/\Phi\{(2y -1)\mu/\sigma\}$ be a function of $\mu\in\mathbb{R}$ and $\sigma\geq 0$.
	Then $\sup_{\mu,\sigma}(|\mu|+\sigma)^{-1}|\bar{z}|<\infty$.
\end{lemma}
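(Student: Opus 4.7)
The plan is to rescale everything by $\sigma^*$ so as to express the bound in terms of the inverse Mills ratio $\phi/\Phi$, exploit symmetry to reduce to a single truncation direction, and then patch the asymptotic behavior of $\phi/\Phi$ together with continuity on a compact interval.

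First I would dispose of the degenerate case $\sigma^* = 0$ separately (where $\bar{z}^* = \mu^*$, yielding the trivial ratio $1$). Assuming $\sigma^* > 0$ and setting $t = \mu^*/\sigma^*$ and $s = 2y - 1 \in \{-1,+1\}$, the formula becomes $\bar{z}^*/\sigma^* = t + s\,\phi(t)/\Phi(st)$. Since $|\mu^*| + \sigma^* = \sigma^*(|t|+1)$, the statement reduces to the boundedness on $\Re$ of
\begin{equation*}
f_s(t) \;=\; \frac{|\,t + s\,\phi(t)/\Phi(st)\,|}{|t|+1}.
\end{equation*}
The evenness of $\phi$ together with $\Phi(-u) = 1 - \Phi(u)$ gives $f_{-1}(t) = f_{+1}(-t)$, so it suffices to bound $f_{+1}$.

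I would then split $\Re$ into three regions. For $t \geq 0$, monotonicity of $\phi/\Phi$ on $[0,\infty)$ yields $0 \leq \phi(t)/\Phi(t) \leq \phi(0)/\Phi(0) = \sqrt{2/\pi}$, hence $f_{+1}(t) \leq (t + \sqrt{2/\pi})/(t+1)$, which is clearly uniformly bounded. For $t \leq -1$, the classical Mills ratio sandwich $|t| < \phi(t)/\Phi(t) < |t| + 1/|t|$ gives $0 < t + \phi(t)/\Phi(t) < 1/|t| \leq 1$, so $f_{+1}(t) \leq 1/(|t|+1) \leq 1$. The remaining interval $-1 \leq t \leq 0$ is compact and $f_{+1}$ is continuous on it (since $\Phi > 0$ everywhere), so $f_{+1}$ attains a finite maximum there. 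Combining the three uniform bounds with the symmetry identity $f_{-1}(t) = f_{+1}(-t)$ produces a single finite constant $C$ with $(|\mu^*|+\sigma^*)^{-1}|\bar{z}^*| \leq C$ for all admissible $(\mu^*,\sigma^*,y)$. The only non-routine ingredient is the Mills ratio tail bound on the negative axis, which can be derived by a one-line integration by parts; everything else is elementary monotonicity and continuity.
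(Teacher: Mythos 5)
Your proof is correct and follows essentially the same route as the paper's: reduce to the ratio $t=\mu^*/\sigma^*$, control the adverse-truncation tail via the standard Mills-ratio lower bound $\Phi(-t)\ge (2\pi)^{-1/2}t(t^2+1)^{-1}\exp(-t^2/2)$, and bound the remaining bounded-$|t|$ region directly. The only cosmetic differences are that the paper collapses the two signs at the outset through a worst-case triangle-inequality bound (rather than your symmetry reduction and the cancellation $0<t+\phi(t)/\Phi(t)<1/|t|$) and that on $|\mu^*|\le\sigma^*$ it gives the explicit constant $1+\phi(0)/\Phi(-1)$ where you invoke continuity on a compact interval.
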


\vspace{5pt}
\begin{proof} To prove Lemma~\ref{lemma:m}, first  notice that $\Phi\{(2y -1)\mu/\sigma\} \geq \Phi(-|\mu|/\sigma)$. Combining this result with the  triangle inequality on $|\bar{z}|$, it follows
	\begin{eqnarray*}
		(|\mu|+\sigma)^{-1}|\bar{z}|
		\leq
		1 +(|\mu|+\sigma)^{-1}\sigma \phi(|\mu|/\sigma)
		/\Phi(-|\mu|/\sigma).
	\end{eqnarray*}
	If $|\mu|\leq\sigma$, then 
	$
	|\bar{z}|/(|\mu|+\sigma)
	\leq
	1 +1\cdot \phi\left(0\right)
	\left/\Phi\left(-1\right)\right.<\infty
	$.
	If $|\mu|> \sigma$, then setting $t=|\mu|/\sigma$ and using the bound 
	$
	\Phi(-t)\geq (2\pi)^{-1/2}t(t^2+1)^{-1}\exp(-t^2/2)$,
	which holds for every $t>0$, we have 
	\begin{eqnarray*}
		\begin{split}
			(|\mu|+\sigma)^{-1}|\bar{z}|
			&\leq
			1 +|\mu|^{-1}\sigma\phi(t)
			\left/\Phi\left(-t\right)\right.\\
			& \leq
			1 +t^{-1}
			\exp(-t^2/2)
			[(t^2+1)^{-1}t\exp(-t^2/2)]^{-1}
			=
			1 +t^{-2}(t^2+1)
			<
			3
		\end{split}
	\end{eqnarray*}
	where in the last inequality we have used $t>1$, since $|\mu|> \sigma$.
	Combining the above results it follows that $\sup_{\mu,\sigma}(|\mu|+\sigma)^{-1}|\bar{z}|<\infty$ as desired.
\end{proof}

\vspace{15pt}

\renewcommand{\thelemma}{S8}
\begin{lemma}\label{lemma:mu_i}
	Under Assumptions \ref{ass_2} and \ref{ass_3}, for every statistical unit $i=1, \ldots, n$, we have that \smash{$(1+\sigma_x^2p\nu_p^2)^{-1/2}\mu_i^{(1)}\stackrel{p}\to 0$ as $p\to\infty$, where $\mu_i^{(1)}$} is defined as in Algorithm \ref{algo2}.
\end{lemma}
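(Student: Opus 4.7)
The plan is to exploit cancellations between $\sigma_i^{*2}$ (which blows up like $1+\sigma_x^2 p\nu_p^2$) and the off-diagonal entries of $\bH = \bX\bV\bX^\intercal$ (which shrink at the reciprocal rate), so that the scaled quantity $(1+\sigma_x^2 p\nu_p^2)^{-1/2}\mu_i^{(1)}$ vanishes. First I would exploit the identity $\bI_n - \bH = (\bI_n + \nu_p^2 \bX\bX^\intercal)^{-1}$ established in the proof of Lemma \ref{lemma:H_asymp} and rewrite it as $(1+\sigma_x^2 p\nu_p^2)(\bI_n-\bH) = (1+\sigma_x^2 p\nu_p^2)(\bI_n + \nu_p^2 \bX\bX^\intercal)^{-1}$, whose right-hand side converges in probability to $\bI_n$ by Lemmas \ref{lemma:covariance}--\ref{lemma:H_asymp}. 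Reading the diagonal entries gives $\sigma_i^{*2}/(1+\sigma_x^2 p\nu_p^2) \stackrel{p}\to 1$, while reading the off-diagonal entries gives $(1+\sigma_x^2 p\nu_p^2)(\bH)_{ij}\stackrel{p}\to 0$ for $j\neq i$. Multiplying these two facts yields the key estimate $\sigma_i^{*2}(\bH)_{ij}\stackrel{p}\to 0$ for every $j\neq i$.

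Next I would unpack the update in Algorithm \ref{algo2} at $t=1$, namely
\[
\mu_i^{(1)} = \sigma_i^{*2}\,\bx_i^\intercal \bV \bX_{-i}^\intercal\,\tilde{\bz}^{(i)}= \sigma_i^{*2}\sum_{j\neq i}(\bH)_{ij}\,\tilde{z}^{(i)}_j,
\]
where $\tilde{z}^{(i)}_j = \bar{z}^{(1)}_j$ for $j<i$ and $\tilde{z}^{(i)}_j = \bar{z}^{(0)}_j$ for $j>i$. Because the starting values $\bar{z}^{(0)}_j$ are fixed constants, the contribution of the indices $j>i$ is bounded by a constant times $\sigma_i^{*2}(\bH)_{ij}$ and hence is $o_p(1)$, indeed $o_p(\sqrt{1+\sigma_x^2 p\nu_p^2})$. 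To handle $j<i$ I would proceed by induction on $i$. The base case $i=1$ is trivial since the sum involves only initial values. For the inductive step, Lemma \ref{lemma:m} gives $|\bar{z}^{(1)}_j|\leq C(|\mu_j^{(1)}|+\sigma_j^*)$, and since $\sigma_j^* = O_p(\sqrt{1+\sigma_x^2 p\nu_p^2})$ by step one, the inductive hypothesis $(1+\sigma_x^2 p\nu_p^2)^{-1/2}\mu_j^{(1)}\stackrel{p}\to 0$ yields $\bar{z}^{(1)}_j = O_p(\sqrt{1+\sigma_x^2 p\nu_p^2})$.

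Combining, each of the $n-1$ summands in $\mu_i^{(1)}$ is of the form $[\sigma_i^{*2}(\bH)_{ij}]\cdot \tilde{z}^{(i)}_j = o_p(1)\cdot O_p(\sqrt{1+\sigma_x^2 p\nu_p^2}) = o_p(\sqrt{1+\sigma_x^2 p\nu_p^2})$, and since $n$ is fixed the finite sum retains this order. Dividing by $\sqrt{1+\sigma_x^2 p\nu_p^2}$ yields the claim. The only subtlety I anticipate is the bookkeeping of the scaling rate $\sqrt{1+\sigma_x^2 p\nu_p^2}$, which can be unbounded when $\alpha=\infty$; however once the two rates from Lemma \ref{lemma:H_asymp}'s proof are combined as above, the cancellation is exact and the argument goes through uniformly in $\alpha \in (0,\infty]$, consistently with the convention adopted after Assumption \ref{ass_2}.
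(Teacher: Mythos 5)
Your proof is correct and follows essentially the same route as the paper's: the same decomposition $\mu_i^{(1)}=\sigma_i^{*2}\sum_{j\neq i}(\bH)_{ij}\tilde z_j^{(i)}$, the same key estimate $\sigma_i^{*2}(\bH)_{ij}\stackrel{p}\to 0$ obtained by pairing $(1+\sigma_x^2p\nu_p^2)(1-H_{ii})\to 1$ with $(1+\sigma_x^2p\nu_p^2)H_{ij}\to 0$, and the same induction on $i$ driven by Lemma \ref{lemma:m}. The only difference is cosmetic: you track the rate $(1+\sigma_x^2p\nu_p^2)^{1/2}$ uniformly in $\alpha\in(0,\infty]$, whereas the paper splits the argument into the cases $\alpha=\infty$ and $\alpha\in(0,\infty)$.
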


\vspace{8pt}
\begin{proof}	\textsc{Case [a]} $\alpha=\infty$. 
	For any $i=1,\dots, n$, we have
	\begin{equation}\label{eq:upper_bound1}
	\dfrac{|\bar{z}_i^{(0)}|}{(1+\sigma_x^2p\nu_p^2)^{1/2}} = 
	\dfrac{|\bar{z}_i^{(0)}|}{|\mu_i^{(0)}| + \sigma_i^*}
	\dfrac{|\mu_i^{(0)}|  +\sigma_i^*}{(1+\sigma_x^2p\nu_p^2)^{1/2}}
	\leq
	C
	\dfrac{|\mu_i^{(0)}|  +\sigma_i^*}{(1+\sigma_x^2p\nu_p^2)^{1/2}},
	\end{equation}
	where $C=\sup_{\mu_i^{(0)},\sigma_i^*}(|\mu_i^{(0)}|+\sigma_i^*)^{-1}|\bar{z}_i^{(0)}|<\infty$, by Lemma \ref{lemma:m}.
	Moreover
	\begin{equation}\label{eq:upper_bound2}
	(1+\sigma_x^2p\nu_p^2)^{-1/2}(|\mu_i^{(0)}|  +\sigma_i^*)
	\stackrel{p}\to 
	1
	\quad \hbox{as }p\to\infty,
	\end{equation}
	because \smash{$(1+\sigma_x^2p\nu_p^2)^{-1/2} \sigma_i^*\stackrel{p}\to 1$} as  \smash{$p\to\infty$} and $(1+\sigma_x^2p\nu_p^2)^{-1/2} |\mu_i^{(0)}|\to 0$ as $p\to\infty$. The latter result is due to the fact that $\alpha=\infty$, whereas the former follows as a consequence of Lemma \ref{lemma:H_asymp} after noticing fact that $\sigma_i^* = (1 - H_{ii})^{-1/2}$.
	Note that we are implicitly assuming Algorithm \ref{algo2} to have fixed initialization 
	\smash{$(\mu^{(0)}_1,\dots,\mu^{(0)}_n)\in \mathbb{R}^n$}, obtained from \smash{$(\bar{z}^{(0)}_1, \ldots, \bar{z}^{(0)}_n)\in\mathbb{R}^n$}.
	Combining \eqref{eq:upper_bound1}--\eqref{eq:upper_bound2} if follows that 
	\begin{equation}\label{eq:bounded_in_prob}
	\mbox{pr}\{(1+\sigma_x^2p\nu_p^2)^{-1/2}|\bar{z}_i^{(0)}|> C+\epsilon\} \to 0, \quad\hbox{as }p\to \infty, \hbox{ for any }\epsilon>0\,.
	\end{equation}

	We now prove that  $(1{+}\sigma_x^2p\nu_p^2)^{-1/2}\mu_i^{(1)}\stackrel{p}\to 0$ and $\mbox{pr}\{(1{+}\sigma_x^2p\nu_p^2)^{-1/2}|\bar{z}_i^{(1)}|> C+\epsilon\} \to 0$ as $p\to \infty$, for any $\epsilon>0$ and for every $i=1,\dots,n$.
	We proceed by induction on $i$.
	Recalling the definition of \smash{$\mu_1^{(1)}$} in Algorithm \ref{algo2}, 
	we have that
	\begin{equation}\label{eq:induction_1}
	\textstyle
	(1+\sigma_x^2p\nu_p^2)^{-1/2}\mu_1^{(1)}
	=\sum_{i'=2}^n \sigma_1^{*2}H_{1i'}(1+\sigma_x^2p\nu_p^2)^{-1/2}\bar{z}_{i'}^{(0)}.
	\end{equation}
	Moreover, 
	\begin{equation}\label{eq:to_0_in_prob}
	\sigma_i^{*2}H_{ii'} = \{(1+\sigma_x^2p\nu_p^2)(1-H_{ii})\}^{-1}\cdot\{(1+\sigma_x^2p\nu_p^2)H_{ii'}\}\stackrel{p}{\to} 0,
	\end{equation}
	as $p\to\infty$ for every $i\neq i'$ since, by Lemma \ref{lemma:H_asymp}, the first and second factors converge in probability to $1$ and $0$, respectively.
	Combining the result in \eqref{eq:to_0_in_prob} with that in   \eqref{eq:bounded_in_prob}, we have
	$\sigma_1^{*2}H_{1i'}(1+\sigma_x^2p\nu_p^2)^{-1/2}\smash{\bar{z}_{i'}^{(0)}\stackrel{p}{\to} 0}$ as $p\to\infty$ for every $i'\geq 2$. Thus, by \eqref{eq:induction_1}, also 
	\smash{$(1+\sigma_x^2p\nu_p^2)^{-1/2}\mu_1^{(1)}\stackrel{p}{\to} 0$} as $p\to\infty$, which implies \smash{$(1+\sigma_x^2p\nu_p^2)^{-1/2}|\mu_1^{(1)}|\stackrel{p}{\to} 0$} as $p\to\infty$. Finally, the statement 
	\begin{equation*}
	\mbox{pr}\{(1+\sigma_x^2p\nu_p^2)^{-1/2}|\bar{z}_1^{(1)}|> C+\epsilon\} \to 0, \quad\hbox{as }p\to \infty, \hbox{ for any }\epsilon>0,
	\end{equation*}	
	follows from \smash{$(1+\sigma_x^2p\nu_p^2)^{-1/2}|\mu_1^{(1)}|\stackrel{p}{\to} 0$} as $p\to\infty$, and arguments analogous to the ones in \eqref{eq:upper_bound1} and \eqref{eq:upper_bound2}.
	We thus proved the desired induction statement for $i=1$.
	When $i\geq 2$ we have
	\begin{equation}\label{eq:general_i}
	\begin{split}
	\textstyle
	&(1+\sigma_x^2p\nu_p^2)^{-1/2}\mu_i^{(1)} \\
	&=	\textstyle
	\sum_{i'=1}^{i-1}\sigma_i^{*2}H_{ii'}(1+\sigma_x^2p\nu_p^2)^{-1/2}\bar{z}_{i'}^{(1)}+\sum_{i'=i+1}^n\sigma_i^{*2}H_{ii'}(1+\sigma_x^2p\nu_p^2)^{-1/2}\bar{z}_{i'}^{(0)}.
	\end{split}
	\end{equation}
	For $i'> i$, we have that \smash{$\sigma_i^{*2}H_{ii'}(1+\sigma_x^2p\nu_p^2)^{-1/2}\bar{z}_{i'}^{(0)}\stackrel{p}\to 0,$} as $p\to\infty,$ by the convergence results in \eqref{eq:bounded_in_prob} and in \eqref{eq:to_0_in_prob}. Instead, for $i'< i$, it follows that 
	\smash{$\sigma_i^{*2}H_{ii'}(1+\sigma_x^2p\nu_p^2)^{-1/2}\bar{z}_{i'}^{(1)} \stackrel{p}\to 0$}, as  $p\to\infty,$
	provided that, by \eqref{eq:to_0_in_prob}, \smash{$\sigma_i^{*2}H_{ii'}\to 0$}, in probability, as $p\to\infty$, and that
	$\mbox{pr}\{(1+\sigma_x^2p\nu_p^2)^{-1/2}\smash{|\bar{z}_{i'}^{(1)}|}> C+\epsilon\} \to 0$ as $p\to\infty$, for every $\epsilon>0$, as a direct consequence of the inductive argument.
	Combining these results, it follows by \eqref{eq:general_i} that \smash{$(1+\sigma_x^2p\nu_p^2)^{-1/2}\mu_i^{(1)}\stackrel{p}\to 0$} and thus
	that  \smash{$\mbox{pr}\{(1+\sigma_x^2p\nu_p^2)^{-1/2}|\bar{z}_{i}^{(1)}|>  C+\epsilon\} \to 0$} as $p\to\infty$ by arguments analogous to the ones in \eqref{eq:upper_bound1} and \eqref{eq:upper_bound2}.
	The thesis follows by induction.
	\vspace{3pt}
	
	\textsc{Case [b] $\alpha\in(0,\infty)$}. Here, the stronger result $\mu_i^{(1)}\stackrel{p}\to 0$ as $p\to\infty$ holds. The proof follows the same steps of \textsc{Case [a]}. First,  adapting the derivations above, for each $i=1,\ldots,n$, we have	\begin{equation}\textstyle \label{eq:upper_bound1b}
	|\bar{z}_i^{(0)}| = \dfrac{|\bar{z}_i^{(0)}|}{|\mu_i^{(0)}|+ \sigma_i^*}(|\mu_i^{(0)}|+ \sigma_i^*) \le C (|\mu_i^{(0)}|+ \sigma_i^*),
	\end{equation}
	where $C=\sup_{\mu_i^{(0)},\sigma_i^*}(|\mu_i^{(0)}|+\sigma_i^*)^{-1}|\bar{z}_i^{(0)}|<\infty$, by Lemma \ref{lemma:m}.
	Moreover,
	\begin{equation}\label{eq:upper_bound2b}
	\sigma_i^*\stackrel{p}{\to} (1+\alpha\sigma_x^2)^{1/2}, \quad \mbox{as } p \to \infty,
	\end{equation}
	by $\sigma_i^* = (1 - H_{ii})^{-1/2}$, and Lemma \ref{lemma:H_asymp}.
	Thus, calling $M = \max_{i=1,\ldots,n} |\mu_i^{(0)}|< \infty$, it follows by \eqref{eq:upper_bound1b} and \eqref{eq:upper_bound2b} that
	\begin{equation}\label{eq:bounded_in_probb}
	\mbox{pr}[|\bar{z}_i^{(0)}|> C\{M+ (1+\alpha\sigma_x^2)^{1/2}+\epsilon\}] \to 0, \quad\hbox{as }p\to \infty, \hbox{ for any }\epsilon>0\,.
	\end{equation}
	Adapting the proof of the previous case, we now prove that \smash{$\mu_i^{(1)}\stackrel{p}\to 0$} and $\mbox{pr}[\smash{|\bar{z}_i^{(1)}|}> C\{(1+\alpha\sigma_x^2)^{1/2}+\epsilon\}] \to 0$ as $p\to \infty$, for any $\epsilon>0$ and for every $i=1,\dots,n$.
	We proceed by induction on $i$.
	Recalling the definition of \smash{$\mu_1^{(1)}$} in Algorithm \ref{algo2}, 
	we have that
	\begin{equation}\label{eq:induction_1b} \textstyle
	\mu_1^{(1)}
	=\sum_{i'=2}^n \sigma_1^{*2}H_{1i'}\bar{z}_{i'}^{(0)} \, .
	\end{equation}
	Since the arguments leading to  \eqref{eq:to_0_in_prob} hold also in this case, by 
	\eqref{eq:to_0_in_prob} and \eqref{eq:bounded_in_probb} one has
	\smash{$\sigma_1^{*2}H_{1i'}\bar{z}_{i'}^{(0)}\stackrel{p}{\to} 0$} as $p\to\infty$ for every $i'\geq 2$. Thus, by \eqref{eq:induction_1b}, also 
	\smash{$\mu_1^{(1)}\stackrel{p}{\to} 0$} as $p\to\infty$, which implies \smash{$|\mu_1^{(1)}|\stackrel{p}{\to} 0$} as $p\to\infty$.
	Finally, the statement \smash{$\mbox{pr}[|\bar{z}_1^{(1)}|> C\{(1+\alpha\sigma_x^2)^{1/2}+\epsilon\}] \to 0$} as $p\to \infty$ for any $\epsilon>0$ follows from \smash{$|\mu_1^{(1)}|\stackrel{p}{\to} 0$} as $p\to\infty$ and arguments analogous to the ones in \eqref{eq:upper_bound1b} and \eqref{eq:upper_bound2b}.
	The desired induction statement for $i=1$ has thus been proved.
	When $i \geq 1$ we have
	\begin{equation}\label{eq:general_ib}\textstyle
	\mu_i^{(1)}
	=
	\sum_{i'=1}^{i-1}\sigma_i^{*2}H_{ii'}\bar{z}_{i'}^{(1)}+\sum_{i'=i+1}^n\sigma_i^{*2}H_{ii'}\bar{z}_{i'}^{(0)}.
	\end{equation}
	For $i'> i$, we have that $\sigma_i^{*2}H_{ii'}\bar{z}_{i'}^{(0)}\stackrel{p}\to 0$ as $p\to\infty$, by \eqref{eq:bounded_in_probb} and \eqref{eq:to_0_in_prob}.
	For $i'< i$, we have 
	\smash{$\sigma_i^{*2}H_{ii'}\bar{z}_{i'}^{(1)} \stackrel{p}\to 0$} as $p\to\infty$, 
	since \smash{$\sigma_i^{*2}H_{ii'}\stackrel{p}{\to}0$} as $p\to\infty$ by \eqref{eq:to_0_in_prob}, and 
	$\mbox{pr}[\smash{|\bar{z}_{i'}^{(1)}|}> C\{(1+\alpha\sigma_x^2)^{1/2}+\epsilon\}] \to 0$ as $p\to\infty$ for every $\epsilon>0$ by the inductive argument. Therefore,
	it follows by \eqref{eq:general_ib} that \smash{$\mu_i^{(1)}\stackrel{p}\to 0$} and thus
	that  \smash{$\mbox{pr}[|\bar{z}_{i}^{(1)}|> C\{(1+\alpha\sigma_x^2)^{1/2}+\epsilon\}] \to 0$ as $p\to\infty$} by arguments analogous to the ones in \eqref{eq:upper_bound1b} and \eqref{eq:upper_bound2b}.
	The thesis follows by induction.
\end{proof}

\vspace{10pt}
\begin{proof}[Proof of Theorem \ref{teo_4}.]
	By leveraging the chain rule for the \textsc{kl} divergence, combined with the fact that $q_{\textsc{pfm}}^{(1)}(\bbeta \mid \bz)=p(\bbeta \mid \by,\bz)$, we have
	\begin{align*}
	\textsc{kl}\{q_{\textsc{pfm}}^{(1)}(\bbeta) \mid \mid p(\bbeta \mid \by)\}
	&\leq
	\textsc{kl}\{q_{\textsc{pfm}}^{(1)}(\bbeta) \mid \mid p(\bbeta \mid \by)\}+E_{q_{\textsc{pfm}}^{(1)}(\bbeta)}[\textsc{kl}\{q_{\textsc{pfm}}^{(1)}( \bz\mid \bbeta ) \mid \mid p(\bz\mid \by,\bbeta)\}]\\
	&=\textsc{kl}\{q_{\textsc{pfm}}^{(1)}(\bz) \mid \mid p(\bz \mid \by)\}+E_{q_{\textsc{pfm}}^{(1)}(\bz)}[\textsc{kl}\{q_{\textsc{pfm}}^{(1)}(\bbeta \mid \bz) \mid \mid p(\bbeta \mid \by,\bz)\}]\\
	&=
	\textsc{kl}\{q_{\textsc{pfm}}^{(1)}(\bz)\mid\mid p(\bz \mid \by)\}\,.
	\end{align*}
	Since $q_{\textsc{pfm}}^{(1)}(\bz)=\textsc{tn}(\bmu^{(1)},\bsigma^{*2}, \mathbb{A})$, $p(\bz 		\mid \by)=\textsc{tn}\{0,(\bI_n+\nu_p^2\bX\bX^\intercal), \mathbb{A}\}$ and
	the \textsc{kl} divergence is invariant with respect to bijective transformations,  then, calling $k_p = (1+\sigma_x^2p\nu^2_p)$ and rescaling each $z_i$ by \smash{$k_p ^{-1/2}$}  we obtain
	$$
	\textsc{kl}\{q_{\textsc{pfm}}^{(1)}(\bz)\mid \mid p(\bz \mid \by)\}=
	\textsc{kl}[\textsc{tn}(k_p^{-1/2}\bmu^{(1)},k_p^{-1}\bsigma^{*2},\mathbb{A})\mid \mid \textsc{tn}\{0,k_p^{-1}(\bI_n+\nu_p^2\bX\bX^\intercal),\mathbb{A}\}].$$
	Lemma \ref{lemma:mu_i} implies that $k_p^{-1/2} \bmu^{(1)}\stackrel{p}\to 0$ as $p\to\infty$, while Lemmas \ref{lemma:covariance} and \ref{lemma:H_asymp} imply that both $k_p^{-1}(\bI_n+\nu_p^2\bX\bX^\intercal)$ and $k_p^{-1}\bsigma^{*2}$ converge in probability to $\bI_n$ as $p\to\infty$.
	Therefore
	$$
	\textsc{kl}[\textsc{tn}(k_p^{-1/2}\bmu^{(1)},k_p^{-1}\bsigma^{*2},\mathbb{A})\mid \mid \textsc{tn}\{0,k_p^{-1}(\bI_n+\nu_p^2\bX\bX^\intercal),\mathbb{A}\}]
	\stackrel{p}\to 0
	$$
	by Lemma \ref{lemma:KL continuity} and the continuous mapping theorem, implying
	$\textsc{kl}\{q_{\textsc{pfm}}^{(1)}(\bbeta)\mid \mid p(\bbeta \mid \by)\}
	\stackrel{p}\to 0
	$ as $p\to\infty$, as desired.
\end{proof}

\section{Additional simulation studies}\label{simu_1_sup}
\renewcommand{\thefigure}{S1}
\begin{figure}[b!]
	\captionsetup{font={small}}
	\includegraphics[width=\linewidth]{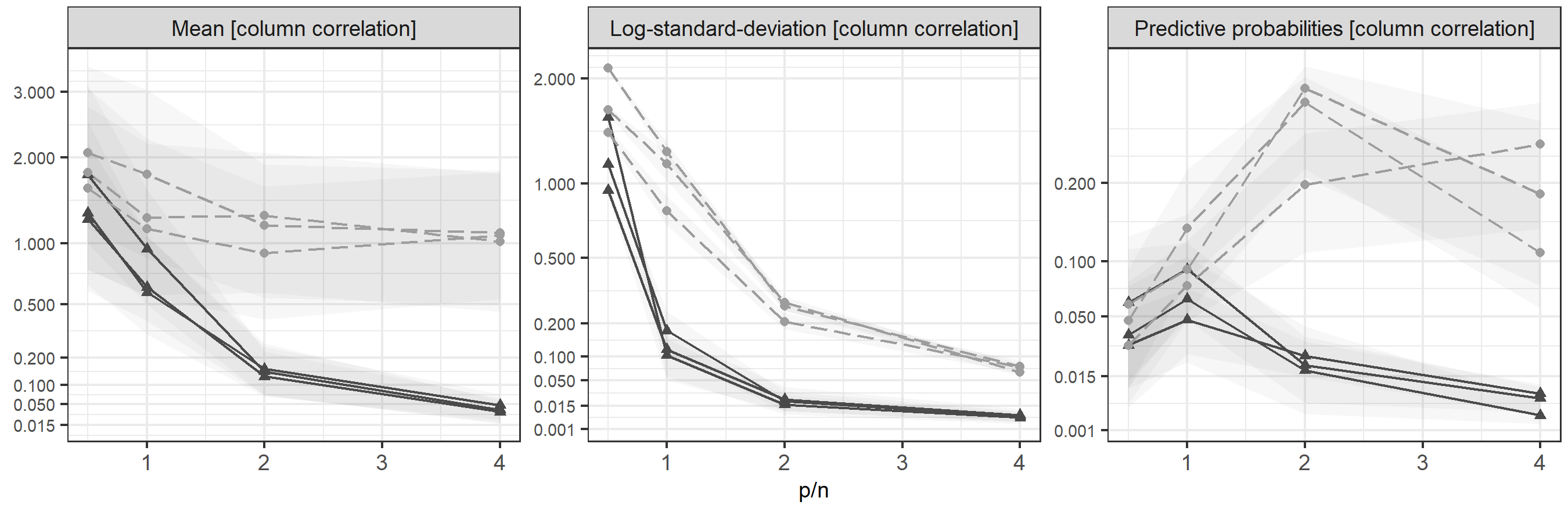}
	\includegraphics[width=\linewidth]{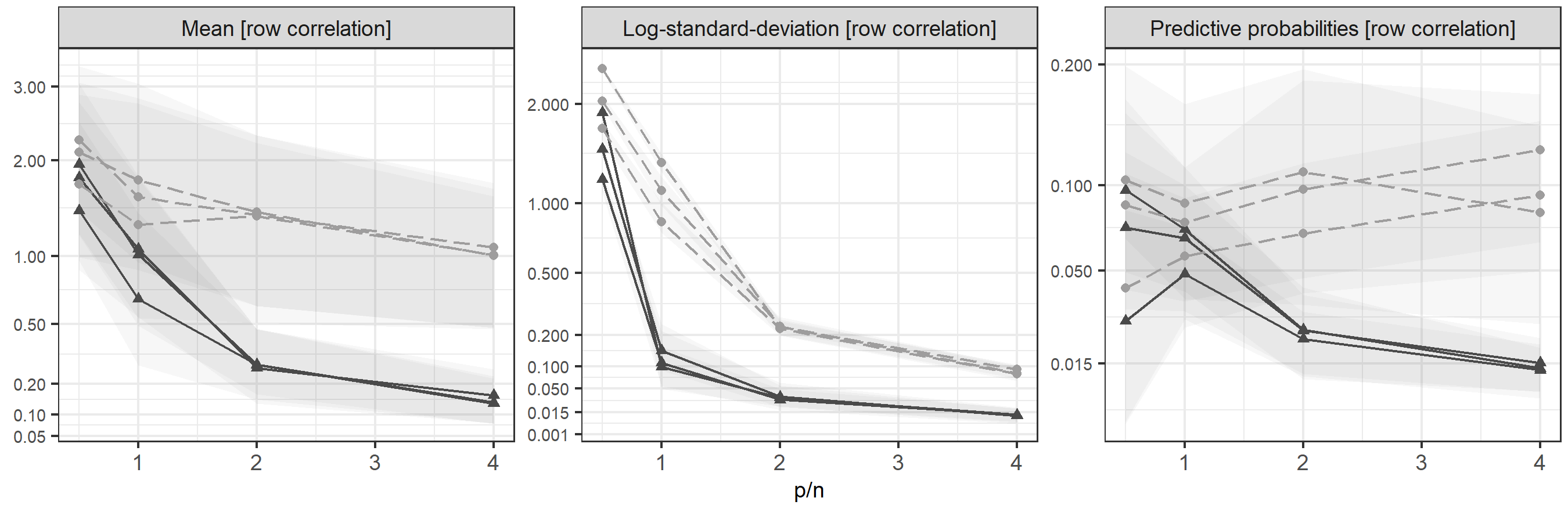}
	\caption{Accuracy in the approximation for three key  functionals of the posterior distribution for $\bbeta$. Trajectories for the median of the absolute differences between an accurate but expensive Monte Carlo estimate of such functionals and their approximations provided by  partially-factorized variational Bayes (dark grey solid lines) and mean-field variational Bayes (light grey dashed lines), respectively, for increasing values of the ratio $p/n$, i.e.,  $p/n \in \{0.5;1;2;4\}$. The different trajectories for each of the two methods correspond to three different settings of the sample size $n$,  i.e.,  $n  \in \{50;100;250\}$, whereas the grey areas denote the first and third quartiles computed from the absolute differences. The two row panels correspond to simulation scenarios in which there is dependence across the different predictors defining the columns of $\bX$ (first row), and dependence  between the statistical units denoting the rows of $\bX$ (second row). For graphical purposes we consider a square-root scale for the $y$-axis. }
	\label{fsimu2}
\end{figure}

We consider here additional simulation studies to check whether the findings reported in the article apply also to more general data structures which do not meet the assumptions we require to prove the theory in Sections \ref{sec_2.1}--\ref{sec_2.2} in the article. With this goal in mind, we replicate the simulation experiment in Section \ref{sec_simu} of the article under two alternative generative mechanisms for the predictors in the design matrix $\bX$, which avoid assuming independence across columns and rows. More specifically, in the first alternative scenario we induce dependence across columns by simulating the $n$ rows of $\bX$, excluding the intercept, from a $(p-1)$-variate Gaussian distribution with zero means, unit variances and strong pairwise correlations $\mbox{corr}(x_{ij},x_{ij'})=0.75$, for each $j=2, \ldots, p$ and $j'=1, \ldots, j-1$. In the second alternative scenario, we induce instead  correlation across statistical units by simulating the $p-1$ columns of $\bX$, excluding the intercept, from an $n$-variate Gaussian with zero means, unit variances and decaying pairwise correlations $\mbox{corr}(x_{ij},x_{i'j})=0.75^{|i-i'|}$, for each $i=1, \ldots, n$ and $i'=1, \ldots, n$. These two scenarios feature strong correlations, arguably more extreme relative to the ones found in real-world applications. Hence, such an assessment provides an important stress test to evaluate the performance of the proposed partially-factorized variational Bayes and the classical mean-field variational Bayes in challenging regimes which do not meet Assumptions 1 and 3; note that since in these simulations $\mbox{var}(x_{ij})=1$, then correlations coincide with covariances.  For instance, the more structured dependence that we induce among units is meant to provide a highly challenging scenario in which  standardization has almost no effect in reducing the dependence across rows. In fact, as discussed more in detail in Section \ref{sec_3} in the article, we found that the recommendations provided by \citet{gelman_2008} and \citet{chopin_2017} on standardization are beneficial in reducing the covariance between the rows of $\bX$ in routine real-world applications, thereby further improving the quality of the   partially-factorized approximation. When standardization is implemented, we recommend to always include the intercept term which is crucial to account for the centering effects \citep[e.g.,][]{gelman_2008, chopin_2017}.

Figure \ref{fsimu2} illustrates performance under these challenging regimes by  reproducing an assessment analogous to Figure \ref{fsimu1} in the article. According to Figure \ref{fsimu2}, the results are overall coherent with the evidence provided by Figure \ref{fsimu1}, for  both approximations, meaning that the proposed partially-factorized strategy provides a powerful solution under a variety of challenging settings, and the theoretical results in Sections \ref{sec_2.1}--\ref{sec_2.2} of the article match empirical evidence even beyond the assumptions we require for their proof. As one might expect, the partially-factorized approximation  is slightly less accurate when there is correlation across the $n$ units relative to scenarios with no correlation or correlation across the $p$ predictors. Nonetheless, the approximation error relative to \textsc{stan} Monte Carlo estimates is still very low also in such settings.

\section{Additional results for the medical applications}\label{app_1_sup}
\renewcommand{\thefigure}{S2}
\begin{figure}[b!]
	\captionsetup{font={small}}
	\centering
	\includegraphics[width=\linewidth]{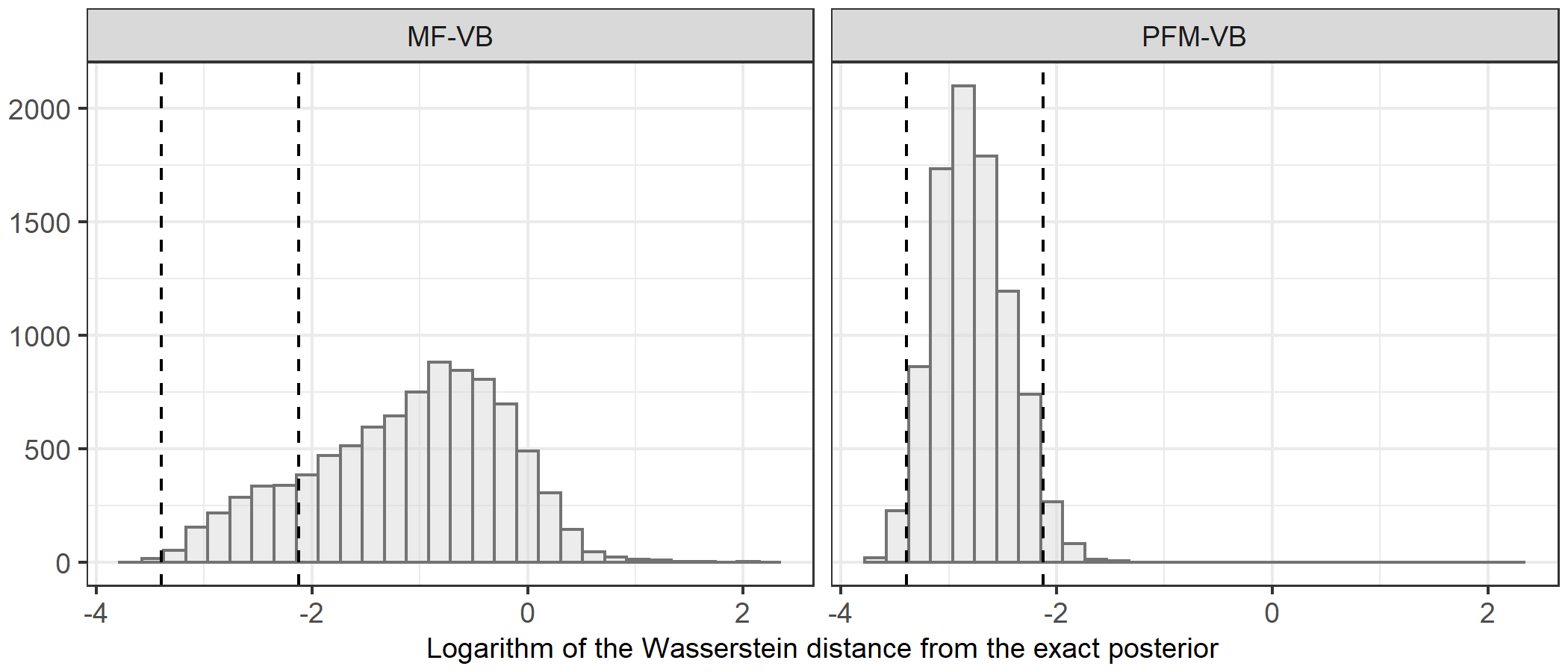}
	\caption{For the two variational approximations, histograms of the log-Wasserstein distances between the $p=9036$ approximate marginal densities provided by the two variational methods and the exact posterior marginals. These distances are computed leveraging 20000 samples from the approximate and exact marginals. To provide insights on Monte Carlo error, the dashed vertical lines represent the quantiles $2.5\%$ and $97.5\%$ of the log-Wasserstein distances between two different  samples of 20000 draws from the same exact posterior marginals.  \textsc{mf}-\textsc{vb}, mean-field variational Bayes; \textsc{pfm}-\textsc{vb}, partially-factorized variational Bayes.
		\vspace{-8pt}}
	\label{f1}
\end{figure}
We provide here additional analyses of the medical datasets  considered in Section \ref{sec_3} in the article. Figure \ref{f1} extends the comparison among the Wasserstein distances between the exact posterior marginals and the corresponding approximations provided by mean-field and partially-factorized variational Bayes. More specifically, Figure \ref{f1} displays the histograms of the log-Wasserstein distances among the $p=9036$ exact posterior marginals for the Alzheimers' dataset and the associated approximations under the two variational methods. As discussed in the article, such quantities are computed with the \texttt{R} function \texttt{wasserstein1d}, which uses $20000$ values sampled  from the approximate  and exact marginals. These histograms further clarify that  our partially-factorized solution improves the quality of the mean-field one and, in practice, it matches almost perfectly the exact posterior since it provides   distances within the range of values obtained by comparing two different  samples of 20000 draws from the same exact posterior marginals. 

\renewcommand{\thefigure}{S3}
\begin{figure}[t!]
	\captionsetup{font={small}}
	\includegraphics[width=\linewidth]{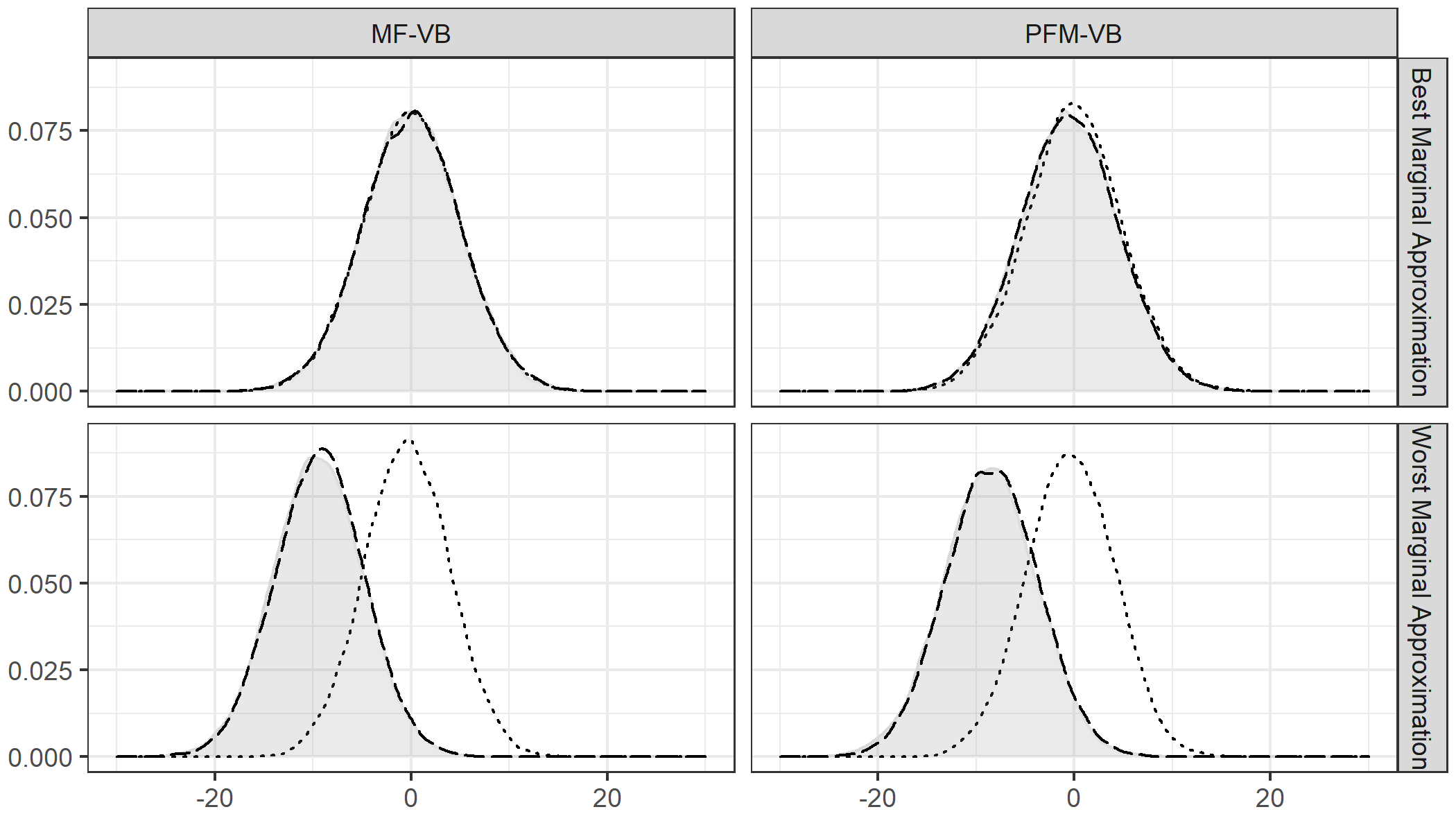}
	\caption{Quality of the marginal approximation for the coefficients associated with the highest and lowest Wasserstein distance from the exact posterior under mean-field variational Bayes (\textsc{mf}-\textsc{vb}) and partially-factorized variational Bayes (\textsc{pfm}-\textsc{vb}), respectively. The shaded grey area denotes the density of the exact posterior marginal, whereas the dotted and dashed lines represent the approximate densities provided by mean-field and partially-factorized variational Bayes, respectively. }
	\label{f2}
\end{figure}

\renewcommand{\thefigure}{S4}
\begin{figure}[b!]
	\captionsetup{font={small}}
	\includegraphics[width=\linewidth]{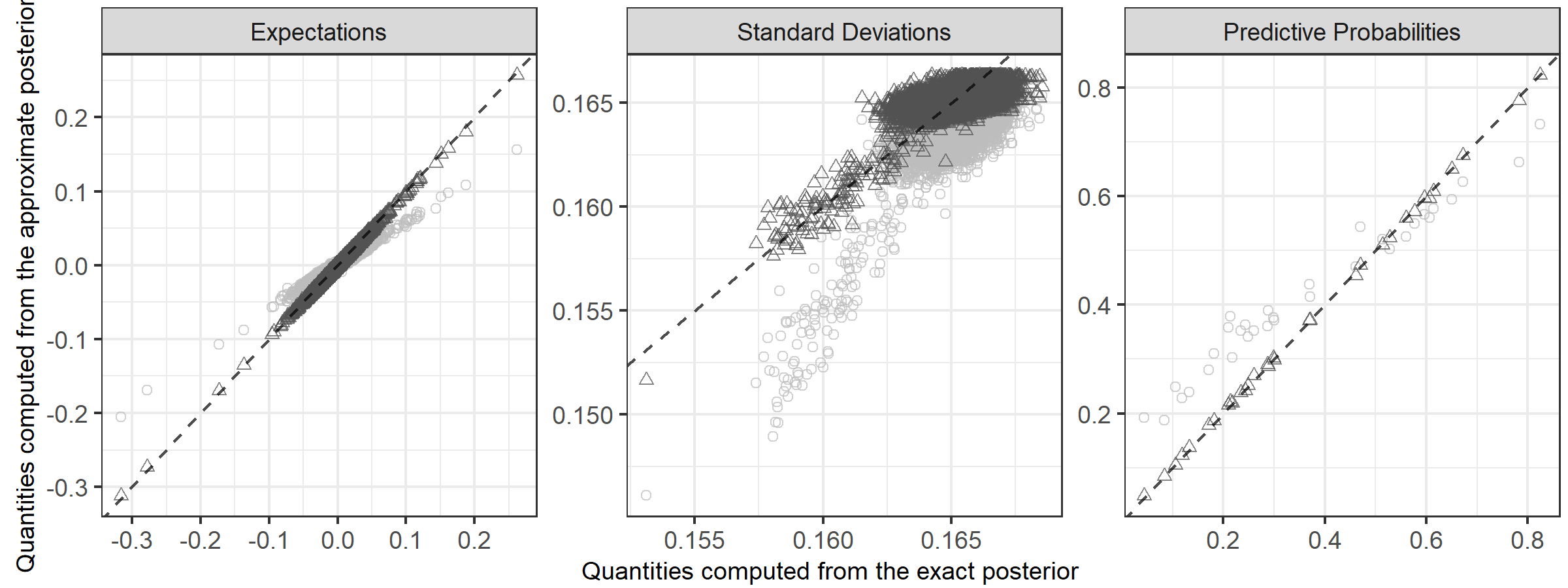}
	\caption{Scatterplots comparing the posterior expectations, standard deviations and predictive probabilities computed from 20000 values sampled from the exact \textsc{sun} posterior, with those provided by the mean-field variational Bayes (light grey circles) and partially-factorized variational Bayes (dark grey triangles), under a different setting for   $\nu^2_p$ controlling the variance of the linear predictor via $\nu^2_p=25 \cdot 10/p$, to induce increasing shrinkage. }
	\label{f4s}
\end{figure}

Figure \ref{f2}  complements the results in Figure  \ref{f1}  by  comparing graphically the quality of the marginal approximation for the coefficients associated with the highest and lowest Wasserstein distance from the exact posterior under the two variational approximations analyzed.
As is clear from Figure \ref{f2},  partially-factorized variational Bayes produces approximations which perfectly overlap with the exact posterior in all cases, including also the worst-case scenario with the highest Wasserstein distance.
Consistent with Theorem \ref{teo_1}, the mean-field approximation has instead reduced quality, mostly due to a tendency to over-shrink towards zero the locations of the actual posterior.  This  effect is evident from Figures  \ref{f3}--\ref{f4} in the article, and is further illustrated in Figure~\ref{f4s} which reproduces the analyses in  Figures \ref{f3}--\ref{f4} under an even more extreme setting in which the prior variance $\nu_p^2$ is set equal to $25\cdot 10/p$.  This choice controls, heuristically, the total variance of the linear predictor as if there were only $10$ coefficients, out of $9036$,  with prior variance  $25$, while the others were fixed to zero, thus  inducing an even stronger shrinkage effect in high dimensions relative to those considered in Figures  \ref{f3}--\ref{f4}. As a result, also the exact posterior means concentrate around 0, thus mitigating the issues of classical mean-field approximations. Nonetheless, even if the prior variance is $\nu_p^2 \approx 0.028$, the  classical mean-field solution still maintains a bias in the locations, standard deviations and predictive probabilities which is, instead, not present under the proposed partially-factorized solution.

To conclude our empirical studies, we replicate the analyses presented in Table \ref{table2} of the article, implementing now the alternative choices of $\nu^2_p$ considered in Figure \ref{f4} and Figure~\ref{f4s}, namely $25\cdot 100/p$ and $25\cdot 10/p$.   As previously discussed, these settings   induce increasing shrinkage in high dimension by controlling the variance of the whole linear predictor. Consistent with Table \ref{table2}, we compare predictive performance of classical mean-field variational Bayes, the proposed partially-factorized solution, and the sparse variational Bayes approximation for spike-and-slab Bayesian logistic regression  \citep{ray2020spike}; refer to Section \ref{sec_3} in the article and to the associated  repository \texttt{https://github.com/augustofasano/Probit-PFMVB} for details on the implementation of sparse variational Bayes. As mentioned in Section \ref{sec_3}, this competitor relies on a different model and, hence, it approximates a different posterior. Therefore, it is not possible to separate quality of the approximation from model performance in  the comparison with sparse variational Bayes. Results are reported in Table \ref{tableS2}, which also contains the case $\nu_p^2=25$ from Table  \ref{table2}  for completeness. Consistent with the findings discussed in the article, the partially-factorized solution over-performs the classical mean-field approximation in all settings, whereas the gains over sparse variational Bayes  depend on the specific dataset analyzed, without a clear pattern in relation to $p/n$. Nonetheless, in the 12 comparisons reported in Table \ref{tableS2}, the proposed partially-factorized solution ranks first $8$ times, whereas sparse variational Bayes over-performs the other methods the remaining  $4$ times. Overall, these results suggest that the ridge-type shrinkage associated with partially-factorized variational Bayes under model~(1) is competitive to state-of-the-art sparse variational methods, and, at the same time, it motivates future work to incorporate other types of shrinkage priors within our methodology, as discussed in Section \ref{sec_4} of the article. 

\renewcommand{\thetable}{S1}
\begin{table}[h]
	\caption{Under three different settings of $\nu^2_p$, five-fold cross-validation estimate of the test deviance in four different medical datasets. Italics values denote best performance. \textsc{mf}-\textsc{vb}, mean-field variational Bayes; \textsc{pfm}-\textsc{vb}, partially-factorized variational Bayes; \textsc{svb} sparse variational Bayes.
		\label{tableS2}}
	\begin{tabular}{llcccc}
		&&  {\texttt{parkinson}}  &{\texttt{voice}} & {\texttt{lesion}} &  {\texttt{alzheimer}} \\ 
		$\nu_p^2$&Method \qquad& {\scriptsize{$(n=756,p=754)$}} &\scriptsize{\ $(n=126,p=310)$} & \scriptsize{\ $(n=76,p=952)$}& \scriptsize{\ $(n=333,p=9036)$}\\ [5pt]
		$25$ & \textsc{mf}-\textsc{vb}&  $309.82$ &  $59.38$ & $48.66$ & $228.71$ \\ 
		&\textsc{pfm}-\textsc{vb} & ${\it 306.37}$ &  $46.35$ &   ${\it 27.24}$ & $187.52$ \\[2pt]
		&\textsc{svb} & $317.42$ &  ${\it 42.16}$ &  $35.81$ & ${ \it126.24}$ \\  [8pt]
		$25{\cdot}100/p$ \qquad & \textsc{mf}-\textsc{vb}& $298.13$ &   $52.39$ & $42.21$ & $215.98$  \\ 
		&\textsc{pfm}-\textsc{vb} &  $\it{294.91}$ &   $\it{45.74}$ &   $\it{27.24}$& $187.60$ \\[2pt]
		&\textsc{svb} & $305.18$ &   $46.32$ & $32.14$& $\it{139.12}$ \\  [8pt]
		$25{\cdot}10/p$ \qquad & \textsc{mf}-\textsc{vb}& $264.15$ &   $47.96$ &  $32.69$&  $202.28$  \\ 
		&\textsc{pfm}-\textsc{vb} &  $\it{261.87}$ &   $\it{45.74}$ &   $\it{27.43}$& $188.33$ \\[2pt]
		&\textsc{svb} &  $282.72$ &  $48.59$ &   $35.88$ & $\it{185.25}$ 
	\end{tabular}
\end{table}


\begin{thebibliography}{51}
	\expandafter\ifx\csname natexlab\endcsname\relax\def\natexlab#1{#1}\fi
	
	\bibitem[{Albert \& Chib(1993)}]{Albert_1993}
	\textsc{Albert, J.~H.} \& \textsc{Chib, S.} (1993).
	\newblock Bayesian analysis of binary and polychotomous response data.
	\newblock \textit{Journal of the American Statistical Association} \textbf{88},
	669--679.
	
	\bibitem[{Arellano-Valle \& Azzalini(2006)}]{arellano_2006}
	\textsc{Arellano-Valle, R.~B.} \& \textsc{Azzalini, A.} (2006).
	\newblock On the unification of families of skew-normal distributions.
	\newblock \textit{Scandinavian Journal of Statistics} \textbf{33}, 561--574.
	
	\bibitem[{Armagan \& Zaretzki(2011)}]{armagan_2011}
	\textsc{Armagan, A.} \& \textsc{Zaretzki, R.~L.} (2011).
	\newblock A note on mean-field variational approximations in {B}ayesian probit
	models.
	\newblock \textit{Computational Statistics \& Data Analysis} \textbf{55},
	641--643.
	
	\bibitem[{Azzalini \& Capitanio(2014)}]{azzalini_2013}
	\textsc{Azzalini, A.} \& \textsc{Capitanio, A.} (2014).
	\newblock \textit{The Skew-Normal and Related Families}.
	\newblock Cambridge: Cambridge University Press.
	
	\bibitem[{Blei et~al.(2017)Blei, Kucukelbir \& McAuliffe}]{blei_2017}
	\textsc{Blei, D.~M.}, \textsc{Kucukelbir, A.} \& \textsc{McAuliffe, J.~D.}
	(2017).
	\newblock Variational inference: A review for statisticians.
	\newblock \textit{Journal of the American Statistical Association}
	\textbf{112}, 859--877.
	
	\bibitem[{Botev(2017)}]{botev_2017}
	\textsc{Botev, Z.} (2017).
	\newblock The normal law under linear restrictions: simulation and estimation
	via minimax tilting.
	\newblock \textit{Journal of the Royal Statistical Society: Series B
		(Statistical Methodology)} \textbf{79}, 125--148.
	
	\bibitem[{Cao et~al.(2022)Cao, Durante \& Genton}]{cao2020}
	\textsc{Cao, J.}, \textsc{Durante, D.} \& \textsc{Genton, M.~G.} (2022).
	\newblock Scalable computation of predictive probabilities in probit models
	with {G}aussian process priors.
	\newblock \textit{Journal of Computational and Graphical Statistics} \textbf{In
		press}.
	
	\bibitem[{Cao et~al.(2019)Cao, Genton, Keyes \& Turkiyyah}]{cao2019}
	\textsc{Cao, J.}, \textsc{Genton, M.~G.}, \textsc{Keyes, D.~E.} \&
	\textsc{Turkiyyah, G.~M.} (2019).
	\newblock Hierarchical-block conditioning approximations for high-dimensional
	multivariate normal probabilities.
	\newblock \textit{Statistics and Computing} \textbf{29}, 585--598.
	
	\bibitem[{Chipman et~al.(2010)Chipman, George \& McCulloch}]{chipman_2010}
	\textsc{Chipman, H.~A.}, \textsc{George, E.~I.} \& \textsc{McCulloch, R.~E.}
	(2010).
	\newblock {BART}: {B}ayesian additive regression trees.
	\newblock \textit{The Annals of Applied Statistics} \textbf{4}, 266--298.
	
	\bibitem[{Chopin \& Ridgway(2017)}]{chopin_2017}
	\textsc{Chopin, N.} \& \textsc{Ridgway, J.} (2017).
	\newblock Leave {P}ima {I}ndians alone: {B}inary regression as a benchmark for
	{B}ayesian computation.
	\newblock \textit{Statistical Science} \textbf{32}, 64--87.
	
	\bibitem[{Consonni \& Marin(2007)}]{consonni_2007}
	\textsc{Consonni, G.} \& \textsc{Marin, J.-M.} (2007).
	\newblock Mean-field variational approximate {B}ayesian inference for latent
	variable models.
	\newblock \textit{Computational Statistics \& Data Analysis} \textbf{52},
	790--798.
	
	\bibitem[{Craig-Schapiro et~al.(2011)Craig-Schapiro, Kuhn, Xiong, Pickering,
		Liu, Misko, Perrin, Bales, Soares \& Fagan}]{craig_2011}
	\textsc{Craig-Schapiro, R.}, \textsc{Kuhn, M.}, \textsc{Xiong, C.},
	\textsc{Pickering, E.~H.}, \textsc{Liu, J.}, \textsc{Misko, T.~P.},
	\textsc{Perrin, R.~J.}, \textsc{Bales, K.~R.}, \textsc{Soares, H.} \&
	\textsc{Fagan, A.~M.} (2011).
	\newblock Multiplexed immunoassay panel identifies novel {CSF} biomarkers for
	{A}lzheimer's disease diagnosis and prognosis.
	\newblock \textit{PloS one} \textbf{6}, e18850.
	
	\bibitem[{Durante(2019)}]{durante_2019}
	\textsc{Durante, D.} (2019).
	\newblock Conjugate {B}ayes for probit regression via unified skew-normal
	distributions.
	\newblock \textit{Biometrika} \textbf{106}, 765--779.
	
	\bibitem[{Durante \& Rigon(2019)}]{durante_2019_var}
	\textsc{Durante, D.} \& \textsc{Rigon, T.} (2019).
	\newblock Conditionally conjugate mean-field variational {B}ayes for logistic
	models.
	\newblock \textit{Statistical Science} \textbf{34}, 472--485.
	
	\bibitem[{Fasano \& Durante(2022)}]{fasano2020}
	\textsc{Fasano, A.} \& \textsc{Durante, D.} (2022).
	\newblock A class of conjugate priors for multinomial probit models which
	includes the multivariate normal one.
	\newblock \textit{Journal of Machine Learning Research} \textbf{23}, 1--26.
	
	\bibitem[{Fr{\"u}hwirth-Schnatter \& Fr{\"u}hwirth(2007)}]{fru_2007}
	\textsc{Fr{\"u}hwirth-Schnatter, S.} \& \textsc{Fr{\"u}hwirth, R.} (2007).
	\newblock Auxiliary mixture sampling with applications to logistic models.
	\newblock \textit{Computational Statistics \& Data Analysis} \textbf{51},
	3509--3528.
	
	\bibitem[{Fuglstad et~al.(2020)Fuglstad, Hem, Knight, Rue \&
		Riebler}]{fuglstad2018intuitive}
	\textsc{Fuglstad, G.-A.}, \textsc{Hem, I.~G.}, \textsc{Knight, A.},
	\textsc{Rue, H.} \& \textsc{Riebler, A.} (2020).
	\newblock Intuitive joint priors for variance parameters.
	\newblock \textit{Bayesian Analysis} \textbf{15}, 1109--1137.
	
	\bibitem[{Gelman et~al.(2008)Gelman, Jakulin, Pittau \& Su}]{gelman_2008}
	\textsc{Gelman, A.}, \textsc{Jakulin, A.}, \textsc{Pittau, M.~G.} \&
	\textsc{Su, Y.-S.} (2008).
	\newblock A weakly informative default prior distribution for logistic and
	other regression models.
	\newblock \textit{The Annals of Applied Statistics} \textbf{2}, 1360--1383.
	
	\bibitem[{Girolami \& Rogers(2006)}]{girolami2006}
	\textsc{Girolami, M.} \& \textsc{Rogers, S.} (2006).
	\newblock Variational {B}ayesian multinomial probit regression with {G}aussian
	process priors.
	\newblock \textit{Neural Computation} \textbf{18}, 1790--1817.
	
	\bibitem[{Haario et~al.(2001)Haario, Saksman \& Tamminen}]{haario_2001}
	\textsc{Haario, H.}, \textsc{Saksman, E.} \& \textsc{Tamminen, J.} (2001).
	\newblock An adaptive {M}etropolis algorithm.
	\newblock \textit{Bernoulli} \textbf{7}, 223--242.
	
	\bibitem[{Haberman(1974)}]{haberman1974}
	\textsc{Haberman, S.~J.} (1974).
	\newblock \textit{{The Analysis of Frequency Data}}.
	\newblock University of Chicago Press, Chicago.
	
	\bibitem[{Hoffman \& Gelman(2014)}]{hoff_2014}
	\textsc{Hoffman, M.~D.} \& \textsc{Gelman, A.} (2014).
	\newblock The {N}o-{U}-turn sampler: {A}daptively setting path lengths in
	{H}amiltonian {M}onte {C}arlo.
	\newblock \textit{Journal of Machine Learning Research} \textbf{15},
	1593--1623.
	
	\bibitem[{Holmes \& Held(2006)}]{holmes_2006}
	\textsc{Holmes, C.~C.} \& \textsc{Held, L.} (2006).
	\newblock Bayesian auxiliary variable models for binary and multinomial
	regression.
	\newblock \textit{Bayesian Analysis} \textbf{1}, 145--168.
	
	\bibitem[{Horrace(2005)}]{horrace_2005}
	\textsc{Horrace, W.~C.} (2005).
	\newblock Some results on the multivariate truncated normal distribution.
	\newblock \textit{Journal of Multivariate Analysis} \textbf{94}, 209--221.
	
	\bibitem[{Johndrow et~al.(2019)Johndrow, Smith, Pillai \& Dunson}]{john_2017}
	\textsc{Johndrow, J.~E.}, \textsc{Smith, A.}, \textsc{Pillai, N.} \&
	\textsc{Dunson, D.~B.} (2019).
	\newblock {MCMC} for imbalanced categorical data.
	\newblock \textit{Journal of the American Statistical Association}
	\textbf{114}, 1394--1403.
	
	\bibitem[{Knowles \& Minka(2011)}]{knowles2011}
	\textsc{Knowles, D.~A.} \& \textsc{Minka, T.} (2011).
	\newblock Non-conjugate variational message passing for multinomial and binary
	regression.
	\newblock In \textit{Advances in Neural Information Processing Systems 24 (NeurIPS 2011)}, pp. 1701--1709.
	
	
	\bibitem[{Kullback \& Leibler(1951)}]{Kullback_1951}
	\textsc{Kullback, S.} \& \textsc{Leibler, R.~A.} (1951).
	\newblock On information and sufficiency.
	\newblock \textit{The Annals of Mathematical Statistics} \textbf{22}, 79--86.
	
	\bibitem[{Kuss \& Rasmussen(2005)}]{kuss_2005}
	\textsc{Kuss, M.} \& \textsc{Rasmussen, C.~E.} (2005).
	\newblock Assessing approximate inference for binary {G}aussian process
	classification.
	\newblock \textit{Journal of Machine Learning Research} \textbf{6}, 1679--1704.
	
	\bibitem[{Marlin et~al.(2011)Marlin, Khan \& Murphy}]{marlin2011}
	\textsc{Marlin, B.~M.}, \textsc{Khan, M.~E.} \& \textsc{Murphy, K.~P.} (2011).
	\newblock Piecewise bounds for estimating bernoulli-logistic latent {G}aussian
	models.
	\newblock In \textit{Proceedings of the 28th International Conference on
		Machine Learning}, pp.  633--640.
	
	\bibitem[{McLachlan \& Krishnan(2007)}]{mclachlan2007algorithm}
	\textsc{McLachlan, G.} \& \textsc{Krishnan, T.} (2007).
	\newblock \textit{{The EM Algorithm and Extensions}}, vol. 382.
	\newblock Wiley \& Sons.
	
	\bibitem[{Mesejo et~al.(2016)Mesejo, Pizarro, Abergel, Rouquette, Beorchia,
		Poincloux \& Bartoli}]{mesejo2016}
	\textsc{Mesejo, P.}, \textsc{Pizarro, D.}, \textsc{Abergel, A.},
	\textsc{Rouquette, O.}, \textsc{Beorchia, S.}, \textsc{Poincloux, L.} \&
	\textsc{Bartoli, A.} (2016).
	\newblock Computer-aided classification of gastrointestinal lesions in regular
	colonoscopy.
	\newblock \textit{IEEE Transactions on Medical Imaging} \textbf{35},
	2051--2063.
	
	\bibitem[{Minka(2001)}]{minka2001expectation}
	\textsc{Minka, T.~P.} (2001).
	\newblock Expectation propagation for approximate {B}ayesian inference.
	\newblock In \textit{Proceedings of the Seventeenth Conference on Uncertainty in Artificial Intelligence (UAI'01)}, pp. 362--369.
	
	\bibitem[{Polson et~al.(2013)Polson, Scott \& Windle}]{Polson_2013}
	\textsc{Polson, N.~G.}, \textsc{Scott, J.~G.} \& \textsc{Windle, J.} (2013).
	\newblock Bayesian inference for logistic models using {P}{\'o}lya--{G}amma
	latent variables.
	\newblock \textit{Journal of the American Statistical Association}
	\textbf{108}, 1339--1349.
	
	\bibitem[{Qin \& Hobert(2019)}]{qin_2019}
	\textsc{Qin, Q.} \& \textsc{Hobert, J.~P.} (2019).
	\newblock Convergence complexity analysis of {A}lbert and {C}hib's algorithm
	for {B}ayesian probit regression.
	\newblock \textit{The Annals of Statistics} \textbf{47}, 2320--2347.
	
	\bibitem[{Ray \& Szab{\'o}(2020)}]{ray2020}
	\textsc{Ray, K.} \& \textsc{Szab{\'o}, B.} (2020).
	\newblock Variational {B}ayes for high-dimensional linear regression with
	sparse priors.
	\newblock \textit{Journal of the American Statistical Association}  \textbf{In
		press}.
	
	\bibitem[{Ray et~al.(2020)Ray, Szab{\'o} \& Clara}]{ray2020spike}
	\textsc{Ray, K.}, \textsc{Szab{\'o}, B.} \& \textsc{Clara, G.} (2020).
	\newblock Spike and slab variational {B}ayes for high dimensional logistic
	regression.
	\newblock In \textit{Advances in Neural Information Processing Systems 33 (NeurIPS 2020)}, pp. 14423--14434.
	
	\bibitem[{Rei{\ss}(2008)}]{reiss_2008}
	\textsc{Rei${\ss}$, M.} (2008).
	\newblock Asymptotic equivalence for nonparametric regression with multivariate
	and random design.
	\newblock \textit{The Annals of Statistics} \textbf{36}, 1957--1982.
	
	\bibitem[{Rodriguez \& Dunson(2011)}]{rodrig_2011}
	\textsc{Rodriguez, A.} \& \textsc{Dunson, D.~B.} (2011).
	\newblock Nonparametric {B}ayesian models through probit stick-breaking
	processes.
	\newblock \textit{Bayesian Analysis} \textbf{6}, 145--178.
	
	\bibitem[{Sakar et~al.(2019)Sakar, Serbes, Gunduz, Tunc, Nizam, Sakar, Tutuncu,
		Aydin, Isenkul \& Apaydin}]{sakar2019}
	\textsc{Sakar, C.~O.}, \textsc{Serbes, G.}, \textsc{Gunduz, A.}, \textsc{Tunc,
		H.~C.}, \textsc{Nizam, H.}, \textsc{Sakar, B.~E.}, \textsc{Tutuncu, M.},
	\textsc{Aydin, T.}, \textsc{Isenkul, M.~E.} \& \textsc{Apaydin, H.} (2019).
	\newblock A comparative analysis of speech signal processing algorithms for
	{P}arkinson's disease classification and the use of the tunable {Q}-factor
	wavelet transform.
	\newblock \textit{Applied Soft Computing} \textbf{74}, 255--263.
	
	\bibitem[{Simpson et~al.(2017)Simpson, Rue, Riebler, Martins \&
		S{\o}rbye}]{simpson_2017}
	\textsc{Simpson, D.}, \textsc{Rue, H.}, \textsc{Riebler, A.}, \textsc{Martins,
		T.~G.} \& \textsc{S{\o}rbye, S.~H.} (2017).
	\newblock Penalising model component complexity: A principled, practical
	approach to constructing priors.
	\newblock \textit{Statistical Science} \textbf{32}, 1--28.
	
	\bibitem[{Tsanas et~al.(2013)Tsanas, Little, Fox \& Ramig}]{tsanas2013}
	\textsc{Tsanas, A.}, \textsc{Little, M.~A.}, \textsc{Fox, C.} \& \textsc{Ramig,
		L.~O.} (2013).
	\newblock Objective automatic assessment of rehabilitative speech treatment in
	{P}arkinson's disease.
	\newblock \textit{IEEE Transactions on Neural Systems and Rehabilitation
		Engineering} \textbf{22}, 181--190.
	
	\bibitem[{Wang \& Blei(2019)}]{wang2019}
	\textsc{Wang, Y.} \& \textsc{Blei, D.~M.} (2019).
	\newblock Frequentist consistency of variational {B}ayes.
	\newblock \textit{Journal of the American Statistical Association}
	\textbf{114}, 1147--1161.
	
	\bibitem[{Yang et~al.(2020)Yang, Pati, Bhattacharya et~al.}]{yang2020}
	\textsc{Yang, Y.}, \textsc{Pati, D.}, \textsc{Bhattacharya, A.} et~al. (2020).
	\newblock $\alpha$-variational inference with statistical guarantees.
	\newblock \textit{Annals of Statistics} \textbf{48}, 886--905.
	
\end{thebibliography}

\end{document}